\newcommand{\lm}{\lambda}
\newcommand{\lmd}{\lambda}
\newcommand{\clm}{\Lambda}
\newcommand{\Lmd}{\Lambda}
\newcommand{\Omg}{\Omega}
\newcommand{\omg}{\omega}
\newcommand{\p}{\partial}
\newcommand{\al}{\alpha}
\newcommand{\afa}{\alpha}
\newcommand{\ve}{\varepsilon}
\newcommand{\veps}{\varepsilon}
\newcommand{\fai}{\varphi}
\newcommand{\rmi}{{\mathrm i}}
\newcommand{\rme}{{\mathrm e}}
\newcommand{\rmT}{{\mathrm T}}
\newcommand{\mcalA}{\mathcal{A}}
\newcommand{\mcalC}{\mathcal{C}}
\newcommand{\mcalD}{\mathcal{D}}
\newcommand{\mcalF}{\mathcal{F}}
\newcommand{\mcalH}{\mathcal{H}}
\newcommand{\mcalI}{\mathcal{I}}
\newcommand{\mcalK}{\mathcal{K}}
\newcommand{\mcalL}{\mathcal{L}}
\newcommand{\mcalP}{\mathcal{P}}
\newcommand{\mcalQ}{\mathcal{Q}}
\newcommand{\mcalR}{\mathcal{R}}
\newcommand{\mcalU}{\mathcal{U}}
\newcommand{\bbC}{\mathbb{C}}
\newcommand{\bbZ}{\mathbb{Z}}
\newcommand{\bft}{\mathbf{t}}
\newcommand{\bsH}{\boldsymbol{H}}
\newcommand{\bsP}{\boldsymbol{P}}
\newcommand{\bsQ}{\boldsymbol{Q}}
\newcommand{\sfp}{\mathsf{p}}
\newcommand{\sfq}{\mathsf{q}}
\newcommand{\sfr}{\mathsf{r}}
\newcommand*{\pp}[1]
  {\frac{\partial   }
        {\partial #1}
  }
\newcommand*{\pfrac}[2]
  {\frac{\partial #1}
        {\partial #2}
  }
\newcommand*{\Bigset}[2]
  {
   \left\{ #1 \,\middle|\, #2 \right\}
  }
\newcommand{\beq}{\begin{equation}}
\newcommand{\eeq}{\end{equation}}
\DeclareMathOperator{\ad}{ad}
\DeclareMathOperator{\res}{Res}
\DeclareMathOperator{\diag}{diag}
\DeclareMathOperator{\Der}{Der}
\DeclareMathOperator{\nd}{d\!}
\DeclareMathOperator{\td}{d\!}
\newtheorem{thm}{Theorem}[section]
\newtheorem{rmk}[thm]{Remark}
\newtheorem{cor}[thm]{Corollary}
\newtheorem{lem}[thm]{Lemma}
\newtheorem{prop}[thm]{Proposition}
\newtheorem{ex}{Example}[section]
\numberwithin{equation}{section}
\begin{document}
\title[Solutions of the Loop Equations of Generalized Frobenius Manifolds]{Solutions of the Loop Equations of a Class of Generalized Frobenius Manifolds}
\author{Si-Qi Liu}
\address{Department of Mathematical Sciences,
Tsinghua University, Beijing 100084, P.R. China}
\email{liusq@tsinghua.edu.cn}
\author{Haonan Qu}
\address{School of Mathematical Sciences,
Peking University, Beijing 100871, P.R. China}
\email{qhn1121@pku.edu.cn}
\author{Yuewei Wang}
\address{Department of Mathematical Sciences,
Tsinghua University, Beijing 100084, P.R. China}
\email{wlw18@mails.tsinghua.edu.cn}
\author{Youjin Zhang}
\address{Department of Mathematical Sciences,
Tsinghua University, Beijing 100084, P.R. China}
\email{youjin@tsinghua.edu.cn}
\keywords{Generalized Frobenius manifold, Loop equation, the Volterra hierarchy, the $q$-deformed KdV hierarchy, Principal Hierarchy}
\maketitle

\begin{abstract}
We prove the existence and uniqueness of solution of the loop equation associated with a semisimple generalized Frobenius manifold with non-flat unity, and show, for a particular example of one dimensional generalized Frobenius manifold, that the deformation of the Principal Hierarchy induced by the solution of the loop equation is the extended $q$-deformed KdV hierarchy.
\end{abstract}

\tableofcontents

\section{Introduction}
Since Dubrovin introduced the notion of Frobenius manifold in the early 1990s \cite{Du1, Du2, Dubrovin2DTFT}, the relationship between Frobenius manifolds and hierarchies of integrable evolutionary PDEs has been an important subject in the research field of mathematical physics, see for example \cite{Lorenzoni3, Lorenzoni2, Bakalov, Brini1, Brini2, Buryak2012, Buryak2015, Buryak2019, Buryak2021, carlet2018deformations, DZ1999, normalform, DZ2004, DLZ06, DLZ08, DLZ2018, FJRW, Getzler, Getzler04, GiventalMilanov, liu2013bihamiltonian, GFM} and references therein. As it was shown by Dubrovin \cite{Du1, Du2, Dubrovin2DTFT}, one can associate with any given Frobenius manifold a bihamiltonian integrable hierarchy of hydrodynamic type, which is called the Principal Hierarchy of the Frobenius manifold. Under the assumption of semisimplicity of the Frobenius manifold, it is shown in \cite{normalform} that the Principal Hierarchy has a so-called topological deformation which is a bihamiltonian integrable hierarchy of KdV type \cite{LWZ1}. If the semisimple Frobenius manifold arises from the quantum cohomology of a smooth projective variety $X$, then we know from the construction of the topological deformation of the Principal Hierarchy that the logarithm of a particular tau function of the integrable Hierarchy yields the generating function of the Gromov-Witten invariants of $X$. In particular, when $X$ is a point we have the one-dimensional Frobenius manifold, and the bihamiltonian integrable hierarchy is the KdV hierarchy \cite{Kon, Witten}; when $X$ is the complex projective line, the Frobenius manifold is two-dimensional, and the associated
bihamiltonian integrable hierarchy is the extended Toda hierarchy \cite{exttoda, DZ2004, Getzler, OP, Z2002}.
For a semisimple Frobenius manifold, the Principal Hierarchy is related with its topological deformation via a quasi-Miura transformation
which is given by the unique solution of its loop equation. As it is introduced in \cite{normalform}, the loop equation of a Frobenius manifold encodes the condition that the actions of the Virasoro symmetries of the Principal Hierarchy on the tau function of its topological deformation are linear, i.e., they are given by the actions of certain second order linear operators, which are called the Virasoro operators, on the tau function.

An analogue of the relationship between semisimple Frobenius manifolds and bihamiltonian integrable hierarchies is proposed in \cite{GFM} for a class of generalized Frobenius manifolds, which satisfy all the axioms of Dubrovin's definition of Frobenius manifolds but the flatness of the unit vector fields \cite{Dubrovin2DTFT}, i.e., the unit vector fields are non-flat with respect to the flat metrics of the manifolds. For such a generalized Frobenius manifold $M$, one can
construct a bihamiltonian integrable hierarchy of hydrodynamic type via the deformed flat coordinates of $M$, just as one does for the construction of the Principal Hierarchy of a usual Frobenius manifold. However, due to the non-flatness of the unit vector field, this bihamiltonian integrable hierarchy does not contain the flow that is given by the translation along the spatial variable. By applying the bihamiltonian recursion relation to this special flow one obtains an infinite number of flows which are symmetries of the integrable hierarchy. These flows play an important rule in our understanding of properties of the Virasoro symmetries of the bihamiltonian integrable hierarchy of hydrodynamic type, and they are added to this integrable hierarchy to form the Principal Hierarchy of the generalized Frobenius manifold.

The Principal Hierarchy of a generalized Frobenius manifold also possesses a tau structure, and its Virasoro symmetries acts nonlinearly on the tau function. As for a usual Frobenius manifold, we construct the topological deformation of the Principal Hierarchy by requiring that the Virasoro symmetries acts linearly on the tau function of the deformed integrable hierarchy. Namely, if we denote by $\tau^{[0]}$ and $\tau$ the tau functions of the Principal Hierarchy and its topological deformation, then we require that these tau functions are related by the following genus expansion formula:
\[\log\tau=\ve^{-2}\log\tau^{[0]}+\sum_{k\ge 1}\ve^{k-2}\mcalF^{[k]}(v;v_x,\dots,v^{(N_k)}),\]
and the infinitesimal actions of the Virasoro symmetries of the Principal Hierarchy on the tau function of the deformed integrable hierarchy are given by
\[\tau\to\tau+\delta L_m\tau,\quad m=-1, 0, 1,\dots.\]
Here $v=(v^1,\dots, v^n)$ is a solution of the Principal Hierarchy that depends on the spatial variable $x$ and time variables $t^{\al,p}$ and $L_m$, called the Virasoro operators, are second order linear differential operators of the time variables. This condition of linearization of the Virasoro symmetries leads to a system of linear equations for the gradients of the functions $\mcalF^{[k]}$ for $k\ge 1$, and we call
it the loop equation of the generalized Frobenius manifold. See Theorem 10.5 of \cite{GFM} for the explicit form of the loop equation.

In the present paper, we prove the existence and uniqueness of solution of the loop equation of a semisimple generalized Frobenius manifold, and prove the Conjecture 11.1 of \cite{GFM}. This conjecture asserts that the topological deformation of the Principal Hierarchy of a certain one-dimensional generalized Frobenius manifold yields the Volterra hierarchy and the $q$-deformed KdV hierarchy.

We organize the paper as follows: In Sect.\,2, we present some important properties of a semisimple generalized Frobenius manifold. In Sect.\,3, we prove the existence and uniqueness of solution of the loop equation, the main results are given in Theorem \ref{thm:the uniqueness} and Theorem \ref{thm-existence}. In Sect.\,4, we prove the above-mentioned conjecture on the topological deformation of the Principal Hierarchy of a one-dimensional generalized Frobenius manifold. In Sect.\,5, we give some concluding remarks.

\section{Semisimple generalized Frobenius manifolds}
In this section, we present some basic properties of a semisimple generalized Frobenius manifold with non-flat unity, which will be used in Sect.\,3 to prove the uniqueness and existence of solution of the associated loop equation. Most of these properties are shared by a usual semisimple Frobenius manifold, we refer readers to \cite{Dubrovin2DTFT, Dubrovinpainleve} for details of their derivations.

Let $M$ be an $n$-dimensional generalized Frobenius manifold with non-flat unity \cite{Dubrovin2DTFT, GFM}. We will simply call it a generalized Frobenius manifold or GFM for short in what follows. Then we have an operation of multiplication $``\cdot"$ on the tangent spaces of $M$ which yields, together the unit vector field $e$ and the flat metric $\eta$ of $M$, a Frobenius algebra structure on each tangent space of $M$. Denote by $\nabla$ the Levi-Civita connection of $\eta$, then the non-flatness of the unit vector field $e$ means that $\nabla e\ne 0$. We also have the Euler vector field $E$ which is associated with the quasi-homogeneity properties of the flat metric $\eta$ and the operation of multiplication on the tangent spaces of $M$, and satisfies the condition $\nabla^2 E=0$. As for a usual Frobenius manifold, we have the deformed flat connection $\tilde\nabla$ on $M$ which is defined by
\[\tilde\nabla_a b=\nabla_a b+z a\cdot b,\quad a, b\in\textrm{Vect}(M).\]
It can be extended to a flat connection of $M\times \mathbb{C}^*$ by defining
\begin{align*}
\tilde{\nabla}_a\frac{\nd}{\nd z}=0,\quad \tilde{\nabla}_{\frac{\nd}{\nd z}}\frac{\nd}{\nd z}=0,\quad \tilde{\nabla}_{\frac{\nd}{\nd z}}b=\partial_z b+E\cdot b-\frac{1}{z}\mu b,
\end{align*}
here $a, b$ are vector fields on $M\times \mathbb{C}^*$ with zero  components along $\mathbb{C}^*$, $z$ is the coordinate of $\mathbb{C}^*$, and $\mu$ is the grading operator
\begin{equation}\label{mu}
    \mu=\frac{2-d}{2}-\nabla E
\end{equation}
 with $d$ being the charge of $M$.

\subsection{The canonical coordinates of a semisimple GFM}
Following the definition of the semisimplicity of a usual Frobenius manifold \cite{Dubrovin2DTFT, Dubrovinpainleve}, we call the generalized Frobenius manifold $M$ semisimple if the Frobenius algebras defined on $T_p M$ are semisimple for generic $p\in M$. As it is shown in \cite{Dubrovin2DTFT, Dubrovinpainleve}, one can find a system of local coordinates $u_1,\dots, u_n$ in a certain neighborhood of a generic point of $M$ such that
\begin{equation}
  \pp{u_i}\cdot\pp{u_j}=\delta_{ij}\pp{u_i},\quad i, j=1,\dots,n.
\end{equation}
These coordinates are called canonical coordinates of $M$, in which the unit vector field has the form
\[e=\sum_{i=1}^n \frac{\p}{\p u_i}\]
and the flat metric $\eta$ is diagonal, i.e.,
\begin{equation}\label{def of h_i}
\left\langle\frac{\p}{\p u_i},\frac{\p}{\p u_j}\right\rangle=\eta\left(\frac{\p}{\p u_i},\frac{\p}{\p u_j}\right)=h_i\delta_{ij},\quad i, j=1,\dots,n.
\end{equation}
It is shown in \cite{GFM} that the unit vector field $e$ is the gradient of a function $\theta_{0,0}(u)$ with respect to the metric $\eta$, so the functions $h_i$ can be represented in the form
\begin{equation}\label{zh-1-1}
  h_i=\pfrac{\theta_{0,0}}{u_i},\quad i=1,2,...,n.
\end{equation}
From \cite{Dubrovin2DTFT, Dubrovinpainleve} we know that one can choose the canonical coordinates of $M$ in such a way that in these coordinates the Euler vector field has the form
\[E=\sum_{i=1}^{n}u_i\pp{u_i}.\]
In another word, we can take the eigenvalues of the
 operator
of multiplication by the Euler vector field $E$ as canonical coordinates of $M$.
We will fix such a choice of canonical coordinates of $M$ which is assumed to be semisimple  in what follows.

Let us also fix a system of local flat coordinates $v^1,\dots, v^n$ of $M$ with respect to its flat metric $\eta$, and denote
by $f_1,...,f_n$ the orthonormal frame
\begin{equation}\label{230820-2159-1}
  f_i=\frac{1}{\sqrt{h_i}}\pp{u_i},\quad i=1,\dots,n.
\end{equation}
Then we have
\begin{equation}\label{def of psi i afa}
  \left(\pp{v^1},\dots,\pp{v^n}\right)=(f_1,\dots,f_n)\Psi,
\end{equation}
where $\Psi=(\psi_{i\afa})$ is an $n\times n$ matrix valued function
which satisfies the relation
\begin{equation}\label{ortho of Psi}
  \Psi^{\rmT}\Psi=\eta.
\end{equation}
Here and in what follows we also use $\eta$ to denote the constant matrix $(\eta_{\al\beta})$ with
\[\eta_{\al\beta}=\eta\left(\frac{\p}{\p v^\al},\frac{\p}{\p v^\beta}\right).\]
In terms of the matrix $\Psi$, the elements of the Jacobi matrix of the change of coordinates between the canonical and the flat ones are given by
\begin{equation}\label{zh-1-4}
\frac{\p u_i}{\p v^\al}=\frac{\psi_{i\al}}{\sqrt{h_i}},\quad \frac{\p v^\al}{\p u_i}=\sqrt{h_i}\psi_i^\al,\quad i,\al=1,\dots, n,
\end{equation}
where $\psi^\al_i=\eta^{\al\gamma}\psi_{i\gamma}$, $(\eta^{\al\beta})=(\eta_{\al\beta})^{-1}$, and we assume summation with respect to repeated upper and lower Greek indices henceforth. The structure constants of the multiplication
\[\frac{\p}{\p v^\al}\cdot\frac{\p}{\p v^\beta}=c^\gamma_{\al\beta}\frac{\p}{\p v^\gamma},\quad \al, \beta=1,\dots,n\]
can be represented in the form
\begin{equation}
  c_{\afa\beta}^\gamma
=
  \sum_{i=1}^{n}
    \frac{\psi_{i\afa}\psi_{i\beta}\psi_i^\gamma}{\sqrt{h_i}}.
\end{equation}

From the relation \eqref{def of psi i afa} we know that the functions $\psi_{i\al}$ satisfy the equations
\begin{equation}\label{p_i psi}
  \p_i\psi_{j\afa}=\gamma_{ij}\psi_{i\al},\quad
  \p_i\psi_{i\al}= -\sum_{k\neq i}\gamma_{ki}\psi_{k\afa},\quad  i\ne j,
\end{equation}
here
\begin{equation}\label{def of rotation coef}
   \gamma_{ij}=\frac{1}{\sqrt{h_j}}\pfrac{\sqrt{h_i}}{u_j}, \quad i\neq j
\end{equation}
are the rotation coefficients of the diagonal metric $\eta=\sum_{i=1}^{n}h_i\td u_i^2$, which has the symmetry property
\begin{equation}\label{zh-1-5}
\gamma_{ij}=\gamma_{ji},\quad i\ne j
\end{equation}
due to \eqref{zh-1-1}, and satisfy the Darboux-Egoroff system
\begin{align}
&\frac{\p\gamma_{ij}}{\p u_k}=\gamma_{ik}\gamma_{kj},\quad \textrm{for distinct}\ i, j, k=1,\dots,n,\\
&\frac{\p\gamma_{ij}}{\p u_i}+\frac{\p\gamma_{ij}}{\p u_j}=-\sum_{l\ne i, j} \gamma_{il}\gamma_{lj},\quad \textrm{for distinct}\ i, j=1,\dots,n.
\end{align}
From \eqref{p_i psi} it also follows that
\[\p_e\Psi=0,\]
here $e$ is the unit vector field. If we represent $e$ in terms of the flat coordinates as follows:
\[e=e^\afa\pp{v^{\afa}},\]
then from the relations
\begin{align*}
  f_i&=\,e\cdot f_i=e^\afa\pp{v^{\afa}}\cdot f_i
  =
    e^\afa\sum_{j=1}^{n}\psi_{j\afa}f_j\cdot f_i
  =
    \frac{1}{\sqrt{h_i}}e^\afa\psi_{i\afa}f_i
\end{align*}
we arrive at
\begin{equation}
  \sqrt{h_i}=e^\afa\psi_{i\afa},\quad i=1,\dots,n.
\end{equation}

\subsection{The isomonodromic tau-function of a GFM}
Recall that the horizontal section $\xi=(\xi^1,...,\xi^n)^{\rmT}$
of the deformed flat connection $\tilde\nabla$ of the generalized Frobenius manifold $M$ satisfies the linear system
 \cite{Dubrovin2DTFT, Dubrovinpainleve,normalform,GFM}
\begin{align}\label{horizontal section equation 1}
     \p_{\afa}\xi&=z\mcalC_\afa\xi,  \\
    \p_z\xi&=\left(\mcalU+\frac{\mu}{z}\right)\xi,\label{horizontal section equation 2}
\end{align}
here the matrices $\mcalC_\afa$ and $\mcalU$ are representations, in the frame given by the flat coordinates, of the operators of mutiplication by the vector field $\frac{\p}{\p v^\al}$
and the Euler vector field $E$ respectively.
After the gauge transformation $y=\Psi\xi$, the system \eqref{horizontal section equation 1}, \eqref{horizontal section equation 2} reads
\begin{align}
     \pp{u_i}y&=\, (zE_i+V_i)y,\\
    \pp{z}y&=\, \left(U+\frac{V}{z}\right)y,\label{zh-1-2}
\end{align}
where the matrices $E_i$ are given by $(E_i)_{kl}=\delta_{ik}\delta_{il}$, and
\[ U=\,\Psi\mcalU\Psi^{-1}=\diag(u_1,\dots,u_n),\quad V=\Psi\mu\Psi^{-1},\quad V_i=\ad_U^{-1}[E_i,V].\]
The matrix $V$ is anti-symmetric and it is related with the rotation coefficients by
\begin{equation}\label{zh-1-6}
V=[\Gamma, U],
\end{equation}
here $\Gamma=(\gamma_{ij}) $. It satisfies the isomonodromy deformation equations
\begin{equation}\label{zh-1-3}
\frac{\p V}{\p u_i}=[V_i, V],\quad i=1,\dots,n
\end{equation}
of the linear system \eqref{zh-1-2}. These equations can be represented as the following Hamiltonian systems:
\[\frac{\p V}{\p u_i}=\{V, H_i\},\quad i=1,\dots,n\]
with the linear Poisson bracket on $so(n)$ defined by
\[\{V_{ij},V_{kl}\}=V_{il}\delta_{jk}-V_{jl}\delta_{ik}+V_{jk}\delta_{il}-V_{ik}\delta_{jl},\]
and the Hamiltonians given by
\begin{equation}
  H_i=\frac12\sum_{j\neq i}\frac{V_{ij}^2}{u_i-u_j},\quad i=1,\dots,n.
\end{equation}
The tau function $\tau_I=\tau_I(u)$ of these Hamiltonian systems defined by
\begin{equation} \label{isomonodromic tau function}
  \td\log\tau_I=\sum_{i=1}^{n}H_i\td u_i
\end{equation}
is called the isomonodromic tau function of the semisimple generalized Frobenius manifold $M$.

\subsection{Periods of a semisimple GFM}
Let $g$ be the intersection form of the generalized Frobenius manifold $M$. In the flat coordinates it is represented by the matrix
$g=(g^{\al\beta})$, where
\[g^{\al\beta}=\imath_E\left(\nd v^\al\cdot\nd v^\beta\right)=\eta^{\al\gamma}\mcalU^\beta_\gamma.\]
We denote
\begin{equation}
  \Sigma_\lmd=
  \Bigset{v\in M}{\det (g^{\afa\beta}-\lmd\eta^{\afa\beta})|_v=0},\quad \lmd\in\bbC,
\end{equation}
then the inverse matrix
\[
  (g_{\afa\beta}(\lmd))=(g^{\afa\beta}-\lmd\eta^{\afa\beta})^{-1}
\]
defines a flat metric $(\ ,\, )_\lm$ on $M\setminus\Sigma_\lmd$. In the flat coordinates $v^1,\dots, v^n$ of $M$, the contravariant components of the Levi-Civita connection $\nabla(\lm)$ of this metric are given by
\[\nabla(\lm)^\al \nd v^\beta=(g^{\al\gamma}-\lm \eta^{\al\gamma})\nabla(\lm)_\gamma\nd v^\beta=\Gamma^{\al\beta}_\gamma \nd v^\gamma,\]
where $\Gamma^{\al\beta}_\gamma$ are the contravariant coefficients of the Levi-Civita connection of $g$.
There exists a system of flat local coordinates
$p_1(v;\lm),\dots, p_n(v;\lm)$ of $(\ ,\,)_\lm$ on an open subset of $M\setminus \Sigma_\lm$,
called the \textit{periods} of the generalized Frobenius manifold $M$,  which satisfy the equations
\begin{equation}\label{flat 1-forms}
\nabla(\lm)\nd p_\al=0,\quad \al=1,\dots,n.
\end{equation}

Let $p$ be a period of the semisimple generalized Frobenius manifold $M$, then its gradient with respect to the flat metric $\eta$ can be represented in the orthonormal frame as follows:
\begin{equation}
  \nabla p = \sum_{i=1}^{n}\phi_i f_i,
\end{equation}
then we have
\begin{equation}\label{230821-1052}
  \p_\afa p=\sum_{i=1}^{n}\psi_{i\afa}\phi_i,\qquad
  \p_i p=\sqrt{h_i}\phi_i,
\end{equation}
where we use the short notation $\p_\afa=\pp{v^\afa}$ and $\p_i=\pp{u_i}$.

\begin{prop}
The period $p$ of $M$ satisfies the following system of Gauss-Manin equations:
\begin{align}
  \pfrac{\phi_j}{u_i}&=\gamma_{ij}\phi_i,\quad i\neq j, \label{ss. GM-1}\\
  \pfrac{\phi_i}{u_i}&=\,
    \frac{1}{\lmd-u_i}
      \left(
        \frac 12\phi_i+\sum_{j=1}^{n}V_{ij}\phi_j
      \right)
     -\sum_{j\neq i}\gamma_{ij}\phi_j,  \label{ss. GM-2}\\
  \pfrac{\phi_i}{\lmd}&=\,
    -\frac{1}{\lmd-u_i}
     \left(
       \frac12\phi_i+\sum_{j=1}^{n}V_{ij}\phi_j
     \right), \label{ss. GM-3}
\end{align}
where $V_{ij}$ is the $(i,j)$-th element of the matrix $V$.
\end{prop}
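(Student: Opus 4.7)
The plan is to translate the defining flatness condition $\nabla(\lmd)\,\nd p=0$ into the canonical coordinate frame and read off the three equations by direct calculation. Since $\eta_{ij}=h_i\delta_{ij}$ in canonical coordinates and the operator of multiplication by $E$ acts diagonally with eigenvalues $u_i$, the intersection form satisfies $g^{ii}=u_i/h_i$ in canonical coordinates, so the deformed metric is diagonal with $g_{ii}(\lmd)=h_i/(u_i-\lmd)$ and every Christoffel symbol $\Gamma^k_{ij}(\lmd)$ on three distinct indices vanishes. The flatness equation $\p_i\p_j p=\Gamma^k_{ij}(\lmd)\p_k p$ therefore splits cleanly into an $i\ne j$ part and an $i=j$ part, each containing only a few terms.

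For \eqref{ss. GM-1}, I would compute the mixed Christoffel symbol $\Gamma^i_{ij}(\lmd)=\tfrac12\p_j\log g_{ii}(\lmd)$ and observe that, since $u_i-\lmd$ does not depend on $u_j$ when $i\ne j$, the $\lmd$-dependence drops out and $\Gamma^i_{ij}(\lmd)=\p_j\log\sqrt{h_i}$. Substituting $\p_i p=\sqrt{h_i}\,\phi_i$ and using $\p_j\sqrt{h_i}=\gamma_{ij}\sqrt{h_j}$, which follows from \eqref{def of rotation coef} together with the symmetry \eqref{zh-1-5}, the common prefactor cancels and \eqref{ss. GM-1} emerges after relabelling.

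For \eqref{ss. GM-2}, I would apply the same procedure to the diagonal equation $\p_i^2 p=\Gamma^i_{ii}(\lmd)\p_i p+\sum_{j\ne i}\Gamma^j_{ii}(\lmd)\p_j p$. Now $\Gamma^i_{ii}(\lmd)$ picks up an extra $\tfrac{1}{2(u_i-\lmd)}$ absent in the $i\ne j$ case, while the off-diagonal term $\Gamma^j_{ii}(\lmd)=-\tfrac{u_j-\lmd}{2h_j(u_i-\lmd)}\p_j h_i$ carries a factor $(u_j-\lmd)/(u_i-\lmd)$. Writing $u_j-\lmd=(u_j-u_i)+(u_i-\lmd)$ separates the contribution into a polar part and a regular part in $(u_i-\lmd)^{-1}$; the identity $V_{ij}=(u_j-u_i)\gamma_{ij}$ for $i\ne j$ together with $V_{ii}=0$, read off from $V=[\Gamma,U]$ in \eqref{zh-1-6}, then repackages the polar contribution into $\tfrac{1}{\lmd-u_i}\sum_j V_{ij}\phi_j$, while the regular contribution becomes $-\sum_{j\ne i}\gamma_{ij}\phi_j$, yielding \eqref{ss. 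GM-2}.

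Finally, for \eqref{ss. GM-3}, I would exploit the isomonodromic structure in $\lmd$. Since $h_i$, $\gamma_{ij}$ and $V_{ij}$ are independent of $\lmd$, differentiating \eqref{ss. GM-1}--\eqref{ss. GM-2} with respect to $\lmd$ shows that $\p_\lmd\phi_i$ obeys the same linear system in the $u_j$'s modulo an inhomogeneity pinned down by the pole structure of $g(\lmd)$ near $\lmd=u_i$; checking that the ansatz $\p_\lmd\phi_i=-\tfrac{1}{\lmd-u_i}\bigl(\tfrac12\phi_i+\sum_k V_{ik}\phi_k\bigr)$ is consistent with $[\p_i,\p_\lmd]\phi_j=0$ then gives \eqref{ss. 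GM-3}, with the verification using the Darboux--Egoroff system and the isomonodromy identity \eqref{zh-1-3}. This last compatibility check is the main technical point; the derivations of \eqref{ss. GM-1}--\eqref{ss. GM-2} mirror the classical Frobenius manifold computation and transfer to a GFM verbatim, since no step invokes flatness of the unit vector field.
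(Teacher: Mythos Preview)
Your derivation of \eqref{ss. GM-1} and \eqref{ss. GM-2} is correct and takes a different route from the paper. The paper starts from the operator-form Gauss--Manin equations for the gradient $\nabla p$ in the orthonormal frame,
\[
(\mcalU-\lmd)\,\nabla_i(\nabla p)+\mcalC_i\bigl(\mu+\tfrac12\bigr)(\nabla p)=0,\qquad
(\mcalU-\lmd)\,\p_\lmd(\nabla p)=\bigl(\mu+\tfrac12\bigr)(\nabla p),
\]
and reads off all three equations at once after computing $\nabla_i f_j$ via the rotation coefficients \eqref{p_i psi}. Your approach instead computes the Christoffel symbols of the diagonal metric $g_{ii}(\lmd)=h_i/(u_i-\lmd)$ directly; this is more hands-on and self-contained for the spatial equations, while the paper's formulation has the advantage of packaging the $\lmd$-equation on equal footing with the $u_i$-equations from the outset.

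There is, however, a genuine gap in your argument for \eqref{ss. GM-3}. The flatness condition $\nabla(\lmd)\,\nd p=0$ only constrains the $u_i$-derivatives of $p$ at each fixed $\lmd$; it says nothing about how the period varies with $\lmd$. A flat coordinate of $(\ ,\ )_\lmd$ is determined only up to a $\lmd$-dependent affine transformation, so there are many choices of $p(v;\lmd)$ satisfying \eqref{ss. GM-1}--\eqref{ss. GM-2} whose $\lmd$-derivative does not obey \eqref{ss. GM-3}. Checking that your ansatz is compatible with $[\p_i,\p_\lmd]\phi_j=0$ shows only that \eqref{ss. GM-3} \emph{can} be imposed consistently, not that the period \emph{does} satisfy it. The $\lmd$-equation is independent input: it is the second displayed equation above, part of the full Gauss--Manin system, and it is precisely what singles out the periods among all flat coordinates of the pencil. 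To close the argument you must invoke that equation, as the paper does; compatibility alone cannot supply it.
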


\begin{proof}
  Recall that the gradient $\nabla p=\sum_{i=1}^{n}\phi_if_i$ of the period $p$
satisfies the following Gauss-Manin equations \cite{Dubrovinpainleve, GFM}
  \begin{align}
    &(\mcalU-\lmd)\nabla_i
      \left(
        \sum_{j=1}^{n}\phi_jf_j
      \right)
    +\mcalC_i\left(\mu+\frac12\right)
    \left(
      \sum_{j=1}^{n}\phi_jf_j
    \right)=0, \\
  &(\mcalU-\lmd)\pp\lmd\left(
    \sum_{j=1}^{n}\phi_jf_j\right)
  =
    \left(\mu+\frac12\right)
    \left(
      \sum_{j=1}^{n}\phi_jf_j
    \right),
  \end{align}
where the operators $\mcalU$, $\mcalC_i$ and $\mu$ are defined by
\[\mcalU f_j= E\cdot f_j,\quad  \mcalC_if_j=\pp{u_i}\cdot f_j=\delta_{ij}f_j,\quad \mu f_i=\sum_{i=1}^{n}V_{ij}f_i.
\]
Then this proposition is obtained from \eqref{p_i psi}
by a straightforward calculation.
\end{proof}

Now let us fix a basis of periods $p_1,\dots, p_n$, and denote by $G=(G^{\afa\beta})$ the corresponding Gram matrix
\cite{normalform, GFM}.
We represent the gradients of these periods in the orthonomal frame as follows:
\begin{equation}
  \nabla p_\afa = \sum_{i=1}^{n}\phi_{i\afa}f_i,
\end{equation}
where $\phi_{i\afa}=\phi_{i\afa}(u;\lmd)$,
then we have
\begin{equation}\label{ss. gram ortho}
  \phi_{i\afa}G^{\afa\beta}\phi_{j\beta}=\frac{\delta_{ij}}{u_i-\lmd},\quad i, j=1,\dots,n.
\end{equation}

\section{The solution of the loop equation}
In this section, we first recall briefly the derivation of the loop equation of a generalized Frobenius manifold $M$, and then give the proof of uniqueness and existence of solution of this equation by following the approach of \cite{normalform} for a usual semisimple Frobenius manifold.

\subsection{Derivation of the loop equation}
Recall that the tau cover of the Principal Hierarchy of the generalized Frobenius manifold $M$ has the form \cite{GFM}
\begin{equation}\label{tau-cover}
        \pfrac{f}{t^{j,q}}=f_{j,q},\quad
    \pfrac{f_{i,p}}{t^{j,q}}=\Omg_{i,p;j,q},\quad
    \pfrac{v^\afa}{t^{j,q}}=\eta^{\al\gamma}\p_x\Omg_{\gamma,0;j,q}.
  \end{equation}
It has unknown functions $f, f_{i,p}, v^\afa$, where
\begin{equation}\label{zh-23}
\alpha\in\{1,2\dots,n\},\quad (i, p)\in   \mcalI= \Big(\{1,\dots,n\}\times\bbZ_{\geq 0}\Big)\cup
  \Big(\{0\}\times\bbZ\Big),
\end{equation}
and $\Bigset{\Omg_{I,J}}{I,J\in\mcalI}$ are the 2-point functions on $M$.
Since the flow $\frac{\p}{\p t^{0,0}}$ is given by the translation along the spatial variable $x$, we will identify the time variable $t^{0,0}$ with $x$.

The tau cover \eqref{tau-cover} of the Principal Hierarchy of $M$ possesses a family of Virasoro symmetries $\left\{\frac{\p}{\p s_m}\mid m\ge -1\right\}$ of the following form \cite{GFM}:
\begin{align}
  \pfrac{f}{s_m}&=
    a_m^{I,J}f_If_J+b_{m;J}^It^Jf_I+\hat c_{m;I,J}t^It^J, \label{vir sym 1}\\
  \pfrac{f_I}{s_m}&=
    \pp{t^I}\pfrac{f}{s_m}=(2a_m^{K,L}\Omg_{I,K}+b^L_{m;I})f_L
    +(b^K_{m;L}\Omg_{I,K}+2\hat c_{m;I,L})t^L, \label{vir sym 2}\\
  \pfrac{v^\afa}{s_m}
&=
  \eta^{\afa\beta}\p_x\pp{t^{\beta,0}}\pfrac{f}{s_m}, \label{vir sym 3}
\end{align}
where $a_m^{I,J}$, $b^I_{m;J}$ and $\hat c_{m;I,J}$ are certain constants which only depend on the monodromy data $\eta,\mu,R$ of $M$ at $z=0$,
and they are called the \textit{extended Virasoro coefficients} of $M$.
Here and in what follows summation over repeated upper and lower capital Latin indices with range $\mcalI$ is assumed.
Note that the coefficients $a_m^{I,J}$, $\hat c_{m; I,J}$ and the 2-point functions $\Omg_{I,J}$
satisfy the following symmetry properties
\begin{equation}\label{sym of a,c,Omg}
a_m^{I,J}=a_m^{J,I},\quad
\hat c_{m;I,J} = \hat c_{m;J,I},\quad
\Omg_{I,J}=\Omg_{J,I}
\end{equation}
for all $m\geq -1$ and $I,J\in\mcalI$.
Each flow $\pp{s_m}\,(m\geq -1)$ commutes with all the flows $\pp{t^I}$, i.e.,
\begin{equation} \label{[sm pp tI]=0}
\left[
  \pp{s_m},\pp{t^I}
\right]=0,\quad m\geq -1,\, I\in\mcalI.
\end{equation}

Introduce the following $\bbC$-algebra
\begin{equation}
  \mathscr A
=C^\infty(J^\infty(M))
 [
    f,f_I
 ][\![t^I]\!],
\end{equation}
here $f$ and $f_I, t^I
\,(I\in\mcalI)$ are independent formal variables.
Then $\pp{t^I}\,(I\in\mcalI)$ can be regraded as a derivation on $\mathscr A$ in a natural way,
i.e., $\pp{t^I}\in\Der(\mathscr A)$ acts on $f$, $f_J$ and $v^\afa$ as in \eqref{tau-cover}, and
\[
  \pfrac{\mcalF}{t^I}=\sum_{s\geq 0}\pfrac{\mcalF}{v^{\afa,s}}\p_x^s\pfrac{v^\afa}{t^I}
\]
for $\mcalF\in C^\infty(J^\infty(M))$.
Similarly, $\pp{s_m}\,(m\geq -1)$ can also be regarded as elements in
$\Der(\mathscr A)$ by assigning $\pfrac{f}{s_m}$,
$\pfrac{f_I}{s_m}$, $\pfrac{v^\afa}{s_m}$ as in \eqref{vir sym 1}--\eqref{vir sym 3}, and
\begin{equation} \label{p tI p sm}
  \pfrac{\mcalF}{s_m}=\sum_{s\geq 0}\pfrac{\mcalF}{v^{\afa,s}}\p_x^s\pfrac{v^\afa}{s_m},\quad
  \pfrac{t^I}{s_m}=0
\end{equation}
for $\mcalF\in C^\infty(J^\infty(M))$ and $I\in\mcalI$. In this sense,
the relations $\left[\pp{t^I},\pp{t^J}\right]=\left[\pp{t^I},\pp{s_m}\right]=0$ still hold true.

The Virasoro symmetries can be represented in terms of
a family of Virasoro operators $L_m\,(m\geq -1)$
which are constructed for an arbitrary generalized Frobenius manifold in \cite{GFM}, and have the following form:
\begin{align}
  L_m(\bft,\pp{\bft}) &=\, a_m^{I,J}\frac{\p^2}{\p t^I\p t^J}
  +b_{m;I}^Jt^I\pp{t^J}
  +\hat c_{m;I,J}t^It^J
  +\frac{\delta_{m,0}}{4}
     \mathrm{tr}\left(\frac14-\mu^2\right).  \label{Virasoro operator 202308}
\end{align}
The coefficients $a_m^{I,J}$, $b^J_{m;I}$, $\hat c_{m;I,J}$
in the above formulae coincide with that of $\pp{s_m}$ given in \eqref{vir sym 1}.
These operators satisfy the Virasoro commutation relations
\begin{equation} \label{Vir commu for Lm}
  \left[L_m, L_{k}\right]=(m-k)L_{m+k},\quad \forall m, k\geq -1.
\end{equation}

We also note that $L_m\,(m\geq -1)$ are operators acting on the algebra $\mathscr A$,
and in this sense
\begin{equation}\label{[Lm, sm]=0}
\left[
  L_m , \pp{s_{k}}
\right]=0,\quad \forall\, m,k\geq -1
\end{equation}
because of \eqref{[sm pp tI]=0} and \eqref{p tI p sm}.

Consider a quasi-Miura transformation for the Principal Hierarchy of the form
\begin{equation}\label{quasi-Miura transf}
  v^\afa\mapsto w^\afa=v^\afa+\veps^2\eta^{\afa\gamma}\p_x\p_{t^{\gamma,0}}\Delta\mcalF,
\end{equation}
where
\begin{equation}\label{Delta mcalF 2308}
  \Delta\mcalF=\sum_{k\geq 1}\veps^{k-2}\mcalF^{[k]},
\end{equation}
and $\mcalF^{[k]}=\mcalF^{[k]}(v,v_x,v_{xx},...,v^{(N_k)})$ are functions on the jet space $J^\infty(M)$. Denote
\begin{equation}
  \mcalF= \veps^{-2}f+\Delta\mcalF \in\mathscr A[\veps^{-1}][\![\veps]\!],
\end{equation}
we say that the quasi-Miura transformation \eqref{quasi-Miura transf}
linearizes the Virasoro symmetries of the Principal Hierarchy if the actions of $\pp{s_m}$ on the tau function
\begin{equation}
  \tau= \exp(\mcalF)
\end{equation}
of the deformed hierarchy are given by
\begin{equation}\label{linearization condition}
  \frac{\p\tau}{\p s_m}=L_m\Bigl(\veps^{-1}\bft,\veps\pp{\bft}\Bigr)\tau,\quad m\geq -1,
\end{equation}
where the Virasoro operators $L_m$ are defined as in \eqref{Virasoro operator 202308}.
Introduce the generating series
\[
  \pp{s}=\sum_{m\geq -1}\frac{1}{\lmd^{m+2}}\pp{s_m},\quad
  L\Bigl(\veps^{-1}\bft,\veps\pp\bft;\lmd\Bigr)=
  \sum_{m\geq -1}
    \frac{1}{\lmd^{m+2}}L_m\Bigl(\veps^{-1}\bft,\veps\pp\bft\Bigr),
\]
then the linearization condition \eqref{linearization condition} can be rewritten as
\begin{equation}\label{linearization condition II}
  \frac{\p\tau}{\p s}=L\Bigl(\veps^{-1}\bft,\veps\pp{\bft};\lmd\Bigr)\tau,
\end{equation}
which is required to hold true identically with respect to $\lmd$.
By a straightforward calculation, we obtain
\begin{align}
L_m\Bigl(\veps^{-1}\bft,\pp{\bft}\Bigr)\tau
=&
  \left[
    \mcalD_m\Delta\mcalF+\veps^{-2}\pfrac{f}{s_m} +a_m^{I,J}\Omg_{I,J}
    +\veps^2 a_m^{I,J}\right. \notag \\
    &\quad
  \left.
      \left(
        \frac{\p^2\Delta\mcalF}{\p t^I\p t^J}
       +\pfrac{\Delta\mcalF}{t^I}\pfrac{\Delta\mcalF}{t^J}
      \right)
      +\frac{\delta_{m,0}}{4}\mathrm{tr}
          \left(\frac 14-\mu^2\right)
  \right]\tau,  \label{230831-1644}
\end{align}
where the operators $\mcalD_m$ are defined by
\begin{equation}\label{def of mcalD m}
  \mcalD_m= 2a_m^{I,J}f_I\pp{t^J}+b^J_{m;I}t^I\pp{t^J} \in\Der(\mathscr A),\quad m\ge -1.
\end{equation}

We have the following proposition on the operators $\mcalD_m$.
\begin{prop}[\!\!\cite{GFM}]
\label{prop: vector field Km}
For a function $\Delta\mcalF$ of the form \eqref{Delta mcalF 2308}, the following identities hold true:
\begin{equation}
  \mcalD_m\Delta\mcalF=K_m \Delta\mcalF+\pfrac{\Delta\mcalF}{s_m},\quad m\geq -1,
\end{equation}
where the vector fields
\begin{equation}\label{vector field K_m}
K_m=\sum_{s\geq 0}K_m^{\gamma,s}\pp{v^{\gamma,s}}
\end{equation}
are defined by the following
generating series:
\begin{align}
  &K^{\gamma,s}=\sum_{m\geq -1}\frac{K_m^{\gamma,s}}{\lmd^{m+2}} \notag\\
 =&
        (s+1)\p_x^s\left(\frac{1}{E-\lmd e}\right)^\gamma
      +
    \sum_{k=1}^{s}
      k{s+1\choose k+1}
        \left(
          \p_x^{s-k}\p^\gamma p_\afa
        \right)
        G^{\afa\beta}
        \left(
          \pp{\lmd}
          \p_x^k p_\beta
        \right),  \label{short expression of K^gamma_s}
\end{align}
with $p_1,\dots,p_n$ being any given basis of periods of $M$
and $(G^{\afa\beta})$ being the associated Gram matrix.
\end{prop}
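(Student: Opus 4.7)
The plan is to reduce the proposition to a jet-space identity and then verify it by a direct computation that exploits the Gauss--Manin system. Since $\Delta\mcalF\in C^\infty(J^\infty(M))[\veps^{-1}][\![\veps]\!]$ is independent of the formal variables $t^I, f, f_I$, the chain rule gives
\[(\mcalD_m - \pp{s_m})\Delta\mcalF = \sum_{s\ge 0}\Bigl(\mcalD_m v^{\gamma,s} - \pfrac{v^{\gamma,s}}{s_m}\Bigr)\pfrac{\Delta\mcalF}{v^{\gamma,s}},\]
so it suffices to show
\[\mcalD_m v^{\gamma,s} - \pfrac{v^{\gamma,s}}{s_m} = K_m^{\gamma,s}\]
for each $\gamma$ and every $s\ge 0$.

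For the base case $s=0$, I would expand $\mcalD_m v^\al$ via $\pfrac{v^\al}{t^J}=\eta^{\al\gamma}\p_x\Omg_{\gamma,0;J}$ and $\pfrac{v^\al}{s_m}$ via \eqref{vir sym 2}--\eqref{vir sym 3}, taking care that inside $\p_x\pp{t^{\beta,0}}\pfrac{f}{s_m}$ the derivative $\p_x$ also acts on the formal variables $f_I$ and $t^I$. Using the symmetries \eqref{sym of a,c,Omg} of $a_m^{I,J}$, $\hat c_{m;I,J}$, $\Omg_{I,J}$, a careful relabeling of dummy indices shows that all terms linear or bilinear in $f_I$ or $t^I$ cancel in the difference, leaving a pure jet-space residue built from the two-point functions $\Omg_{I,J}$. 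To identify this residue with $K_m^{\al,0}$, I would pass to the generating series $\sum_{m\ge -1}\lm^{-m-2}(\cdot)$ and invoke the explicit form of the Virasoro coefficients in terms of the monodromy data $(\mu,R)$ of $M$ at $z=0$, together with the Gauss--Manin system \eqref{ss. GM-1}--\eqref{ss. GM-3} and the orthogonality \eqref{ss. gram ortho}. The resulting sum should collapse to the Frobenius-algebra resolvent $(1/(E-\lm e))^\gamma$, which is exactly $K^{\gamma,0}$.

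For $s\ge 1$, I would exploit the fact that $\pp{s_m}$ commutes with $\p_x$ by \eqref{[sm pp tI]=0}, while $[\mcalD_m,\p_x]$ is a nonzero constant-coefficient derivation of the form $c_m^J\pp{t^J}$ that can be read off from the definition \eqref{def of mcalD m}. Writing $v^{\gamma,s}=\p_x^s v^\gamma$ and applying the commutator formula, one obtains
\[\mcalD_m v^{\gamma,s} - \pfrac{v^{\gamma,s}}{s_m} = \p_x^s K_m^{\gamma,0} + [\mcalD_m, \p_x^s]v^\gamma,\]
so the task becomes to evaluate $[\mcalD_m, \p_x^s]v^\gamma$ and to match its generating series in $\lm$ against $K^{\gamma,s}-\p_x^s K^{\gamma,0}$, whose leading piece is $s\p_x^s(1/(E-\lm e))^\gamma$. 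Expanding the iterated commutator as $\sum_{j}\p_x^{s-1-j}[\p_x,\mcalD_m]\p_x^j$ and again using the Gauss--Manin equations to express the Virasoro-coefficient generating series in terms of $\pp{\lm}p_\beta$ should produce the bilinear period expression $(\p_x^{s-k}\p^\gamma p_\afa)G^{\afa\beta}(\pp{\lm}\p_x^k p_\beta)$, with the combinatorial factor $k\binom{s+1}{k+1}$ emerging from Leibniz bookkeeping of distributing $s$ derivatives among the factors.

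The main obstacle is the explicit dictionary between the Virasoro coefficients $a_m^{I,J},b_{m;I}^J,\hat c_{m;I,J}$ and the periods $p_\afa$ with their $\lm$-derivatives. Once this dictionary is established, drawing on the construction of the Virasoro operators in \cite{GFM} together with the Gauss--Manin system \eqref{ss. GM-1}--\eqref{ss. GM-3} and the orthogonality \eqref{ss. gram ortho}, both the identification of the base-case residue with the resolvent $(1/(E-\lm e))^\gamma$ and the extraction of the $\binom{s+1}{k+1}$ correction terms reduce to routine combinatorial and symmetrization bookkeeping.
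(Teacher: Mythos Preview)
The paper does not prove this proposition at all; it is quoted from \cite{GFM} and carries that citation in its header, so there is no proof here to compare against. Your reduction is the right starting point: since $\Delta\mcalF$ depends only on the jet variables, $(\mcalD_m-\pp{s_m})\Delta\mcalF$ is determined by the coefficients $(\mcalD_m-\pp{s_m})v^{\gamma,s}$, and the commutator scheme $[\mcalD_m,\p_x^s]$ you propose is the natural way to pass from $s=0$ to general $s$.

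Two points, however. First, the proposition is stated for an \emph{arbitrary} generalized Frobenius manifold (and is used in that generality in Corollary~3.6 and in the proof of Lemma~3.5), yet you invoke the canonical-coordinate Gauss--Manin equations \eqref{ss. GM-1}--\eqref{ss. GM-3} and the orthogonality \eqref{ss. gram ortho}, which presuppose semisimplicity. You would need the coordinate-free Gauss--Manin system \eqref{flat 1-forms} and the general period formalism from \cite{GFM} instead. Second, and more to the point, you correctly isolate the ``main obstacle'' as the dictionary relating the generating series of the Virasoro coefficients $a_m^{I,J},b_{m;I}^J$ to the periods $p_\alpha$ and their $\lambda$-derivatives; but this dictionary \emph{is} the content of the proposition, and your proposal does not derive it --- it defers to \cite{GFM}. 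That is acceptable as an outline, since the paper itself does exactly the same, but you should be clear that the identification of the resolvent $(E-\lambda e)^{-1}$ and of the bilinear period term with the combinatorial factor $k\binom{s+1}{k+1}$ is imported wholesale from \cite{GFM} rather than reproved.
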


Due to the above proposition, \eqref{230831-1644} can be rewritten as
\begin{align}
  L_m\Bigl(\veps^{-1}\bft,\veps\pp{\bft}\Bigr)\tau=&
  \left[
  K_m\Delta\mcalF +a_m^{I,J}\Omg_{I,J}
    +\veps^2 a_m^{I,J}
      \left(
        \frac{\p^2\Delta\mcalF}{\p t^I\p t^J}
       +\pfrac{\Delta\mcalF}{t^I}\pfrac{\Delta\mcalF}{t^J}
      \right)
  \right. \notag \\
&\quad
\left.
    +\frac{\delta_{m,0}}{4}
      \mathrm{tr}\left(\frac14-\mu^2\right)
  \right]\tau +\frac{\p\tau}{\p s_m}, \label{Lm tau 2309}
\end{align}
so the linearization condition \eqref{linearization condition} is equivalent to the equations
\[
  K_m\Delta\mcalF +a_m^{I,J}\Omg_{I,J}
    +\veps^2 a_m^{I,J}
      \left(
        \frac{\p^2\Delta\mcalF}{\p t^I\p t^J}
       +\pfrac{\Delta\mcalF}{t^I}\pfrac{\Delta\mcalF}{t^J}
      \right)
+\frac{\delta_{m,0}}{4}
      \mathrm{tr}\left(\frac14-\mu^2\right) =0
      \]
 for $\Delta\mcalF$, here $m\ge -1$. Thus we arrive at the following theorem.
\begin{thm}[\!\!\cite{GFM}]
The linearization condition \eqref{linearization condition II}
for a generalized Frobenius manifold (not necessarily semisimple) is equivalent to the following loop equation:
\begin{align}
&
  \sum_{s\geq 0}
    \pfrac{\Delta\mcalF}{v^{\gamma,s}}
    (s+1)\p_x^s\left(\frac{1}{E-\lmd e}\right)^\gamma \notag \\
&+
      \sum_{s\geq 0}
    \pfrac{\Delta\mcalF}{v^{\gamma,s}}
    \sum_{k=1}^{s}
      k{s+1\choose k+1}
        \left(
          \p_x^{s-k}\p^\gamma p_\afa
        \right)
        G^{\afa\beta}
        \left(
          \pp{\lmd}
          \p_x^k p_\beta
        \right)
\notag\\
=&\,
 \frac{\veps^2}{2}
 \sum_{k,\ell\geq 0}
 \left(
   \pfrac{\Delta\mcalF}{v^{\afa,k}}
   \pfrac{\Delta\mcalF}{v^{\beta,\ell}}
  +\frac{\p^2\Delta\mcalF}{\p v^{\afa,k}\p v^{\beta,\ell}}
 \right)
 (\p_x^{k+1}\p^\afa p_\sigma)
 G^{\sigma\rho}
 (\p_x^{\ell+1}\p^\beta p_\rho)
\notag\\
&
 +\frac{\veps^2}{2}
  \sum_{k\geq 0}
  \pfrac{\Delta\mcalF}{v^{\afa,k}}
  \p_x^{k+1}
  \left[
    \nabla\pfrac{p_\sigma}{\lmd}
    \cdot
    \nabla\pfrac{p_\rho}{\lmd}
    \cdot v_x
  \right]^\afa
  G^{\sigma\rho}
\notag
\\&
+
 \frac12 G^{\afa\beta}
 \pfrac{p_\afa}{\lmd}*\pfrac{p_\beta}{\lmd}-
 \frac{1}{4\lmd^2}
 \mathrm{tr}
 \left(
   \frac14-\mu^2
 \right),\label{loop equation-2308}
\end{align}
where $p_1,\dots,p_n$ is any given basis of periods of $M$,
$(G^{\afa\beta})$ is the associated Gram matrix,
and the star product $*$ is defined as in \cite{normalform, GFM}.
\end{thm}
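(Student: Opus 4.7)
The plan is to derive the loop equation as a direct consequence of the linearization condition \eqref{linearization condition II} by working with generating series in $\lambda$. The starting point is equation \eqref{Lm tau 2309}, which has already been essentially derived in the exposition just above the theorem: combining the straightforward computation of $L_m(\varepsilon^{-1}\bft,\varepsilon\p_\bft)\tau$ from \eqref{Virasoro operator 202308} (yielding \eqref{230831-1644}) with Proposition \ref{prop: vector field Km}, which trades the formal operator $\mcalD_m$ acting on $\Delta\mcalF$ for the jet-space vector field $K_m$ plus a correction $\p\Delta\mcalF/\p s_m$. Since $\p\tau/\p s_m=(\varepsilon^{-2}\p f/\p s_m+\p\Delta\mcalF/\p s_m)\tau$, the linearization condition \eqref{linearization condition} becomes, after cancelling the $\p\tau/\p s_m$ terms on both sides, equivalent to the vanishing of the bracketed expression in \eqref{Lm tau 2309} for every $m\ge -1$.

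The next step is to assemble these $m$-indexed identities into a single identity with generating parameter $\lambda$ by multiplying by $\lambda^{-m-2}$ and summing over $m\ge -1$. For the term $K_m\Delta\mcalF$ this is immediate: using \eqref{vector field K_m} and the explicit generating series \eqref{short expression of K^gamma_s} of $K^{\gamma,s}$, one obtains exactly the first two lines of \eqref{loop equation-2308}. The last term of \eqref{Lm tau 2309}, $\frac{\delta_{m,0}}{4}\mathrm{tr}(\frac14-\mu^2)$, contributes only at $m=0$ and produces the trace term $-\frac{1}{4\lambda^2}\mathrm{tr}(\frac14-\mu^2)$ in the loop equation (with sign flipped when moved across the equality).

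The remaining task is to identify the two constant-coefficient generating series
\[
\sum_{m\ge -1}\lambda^{-m-2}a_m^{I,J}\Omega_{I,J},\qquad
\sum_{m\ge -1}\lambda^{-m-2}a_m^{I,J}
\left(\frac{\p^2\Delta\mcalF}{\p t^I\p t^J}+\pfrac{\Delta\mcalF}{t^I}\pfrac{\Delta\mcalF}{t^J}\right)
\]
with the corresponding expressions on the right-hand side of \eqref{loop equation-2308}. Here one invokes the definition of the extended Virasoro coefficients $a_m^{I,J}$ from the construction in \cite{GFM}: their generating series is built precisely out of the coefficients of the periods $p_\afa(v;\lambda)$ and the Gram matrix $G^{\afa\beta}$, so that $\sum_m \lambda^{-m-2}a_m^{I,J}\Omega_{I,J}$ becomes $\frac12 G^{\afa\beta}\,\p_\lambda p_\afa*\p_\lambda p_\beta$ after applying the tau-cover relations \eqref{tau-cover} to rewrite time derivatives of $\Omega_{I,J}$ in terms of the spatial jets of $v^\afa$ through the star product of \cite{normalform, GFM}. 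For the $\varepsilon^2$ block, the same identification plus the chain rule $\p_{t^I}\Delta\mcalF=\sum_s(\p\Delta\mcalF/\p v^{\afa,s})\p_x^s(\p v^\afa/\p t^I)$ converts the second-derivative-and-product piece into the jet expressions involving $\p_x^{k+1}\p^\afa p_\sigma$ on the right-hand side of \eqref{loop equation-2308}; the single extra term $\frac{\varepsilon^2}{2}\sum_k(\p\Delta\mcalF/\p v^{\afa,k})\p_x^{k+1}[\nabla\p_\lambda p_\sigma\cdot\nabla\p_\lambda p_\rho\cdot v_x]^\afa G^{\sigma\rho}$ arises from the commutator between $\p_x^{k+1}$ and the first factor $\p^\afa p_\sigma$, i.e.\ from the fact that the periods depend on $v$ as well as on $\lambda$.

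The main obstacle is the bookkeeping in the last step: matching the double sum in $a_m^{I,J}$ with the star-product and jet expressions in \eqref{loop equation-2308}, and cleanly separating the purely second-order piece from the extra first-order correction produced by differentiating the $v$-dependent factor $\p^\afa p_\sigma$. This is a careful but mechanical generating-series computation once the defining formulas for $a_m^{I,J}$, $b_{m;I}^J$, $\hat c_{m;I,J}$ in terms of the monodromy data and periods, as recorded in \cite{GFM}, are inserted; no new analytic input beyond Proposition \ref{prop: vector field Km} is required.
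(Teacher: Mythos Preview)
Your proposal is correct and follows exactly the route the paper takes: the paper derives \eqref{230831-1644}, applies Proposition \ref{prop: vector field Km} to obtain \eqref{Lm tau 2309}, reduces the linearization condition to the $m$-indexed equations $K_m\Delta\mcalF +a_m^{I,J}\Omega_{I,J}+\veps^2 a_m^{I,J}(\cdots)+\frac{\delta_{m,0}}{4}\mathrm{tr}(\frac14-\mu^2)=0$, and then passes to generating series in $\lambda$ using the period identities from \cite{GFM} to reach \eqref{loop equation-2308}. The theorem is cited from \cite{GFM}, and the paper does not spell out the final generating-series identification any more than you do.
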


Let us proceed to prove the existence and uniqueness (up to the addition of constants) of solution of the above loop equation for a semisimple generalized Frobenius manifold $M$ of dimension $n$. We first give two useful lemmas.
\begin{lem}
The operators $\pp{s_m}\in\Der(\mathscr A)$ defined in \eqref{vir sym 1}--\eqref{vir sym 3}, \eqref{p tI p sm}
satisfy the following Virasoro commutation relations:
\begin{equation}\label{Vir commu for pp sm}
\left[\pp{s_m},\pp{s_{k}}\right]=(k-m)\pp{s_{m+k}},\quad m,k\geq -1.
\end{equation}
\end{lem}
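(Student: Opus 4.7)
The strategy is to reduce the Virasoro commutation relation for the derivations $\pp{s_m}$ to the known relation \eqref{Vir commu for Lm} for the operators $L_m$, via the formal tau function of the Principal Hierarchy. Since both $[\pp{s_m},\pp{s_k}]$ and $(k-m)\pp{s_{m+k}}$ are derivations on $\mathscr{A}$, it suffices to check the identity on a set of generators. On the generators $t^I$ both sides vanish by \eqref{p tI p sm}. The Jacobi identity, combined with $[\pp{s_m},\pp{t^I}]=0$ from \eqref{[sm pp tI]=0}, shows that $[\pp{s_m},\pp{s_k}]$ commutes with every $\pp{t^I}$ and with $\p_x$. Since $f_I=\pp{t^I}f$, and since $\pfrac{v^\afa}{s_m}$ is obtained from $\pfrac{f}{s_m}$ via \eqref{vir sym 3} (and $v^{\afa,s}=\p_x^s v^\afa$), the identities on $f_I$, $v^\afa$, $v^{\afa,s}$ all follow once we have established the single identity
\[
 [\pp{s_m},\pp{s_k}]\,f=(k-m)\pfrac{f}{s_{m+k}}.
\]

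To prove this, consider the formal tau function $\tau=\exp(\veps^{-2}f)$ corresponding to $\Delta\mcalF=0$ in \eqref{quasi-Miura transf}. Specializing \eqref{Lm tau 2309} to $\Delta\mcalF=0$ gives
\[
 L_m\bigl(\veps^{-1}\bft,\veps\tfrac{\p}{\p\bft}\bigr)\tau=\pp{s_m}\tau+c_m\tau,\qquad c_m:=a_m^{I,J}\Omg_{I,J}+\tfrac{\delta_{m,0}}{4}\mathrm{tr}\bigl(\tfrac14-\mu^2\bigr),
\]
while $\pp{s_m}\tau=\veps^{-2}\pfrac{f}{s_m}\,\tau$ and, by the derivation property, $[\pp{s_m},\pp{s_k}]\tau=\veps^{-2}\bigl([\pp{s_m},\pp{s_k}]f\bigr)\tau$. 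Hence the target identity on $f$ is equivalent to $[\pp{s_m},\pp{s_k}]\tau=(k-m)\pp{s_{m+k}}\tau$. Now apply $[L_m,L_k]=(m-k)L_{m+k}$ to $\tau$: the right-hand side equals $(m-k)(\pp{s_{m+k}}\tau+c_{m+k}\tau)$, and the left-hand side is computed by expanding $L_mL_k\tau$ using $L_m\pp{s_k}\tau=\pp{s_k}L_m\tau$ (from \eqref{[Lm, sm]=0}) and the product rule
\[
 L_m(c_k\tau)=(L_m c_k)\,\tau+c_k\,L_m\tau+2\veps^2 a_m^{I,J}(\p_{t^I}c_k)(\p_{t^J}\tau).
\]
After cancellations, one obtains $[L_m,L_k]\tau=-[\pp{s_m},\pp{s_k}]\tau+\Delta_{m,k}\tau$, where
\[
 \Delta_{m,k}=\bigl(\pp{s_k}c_m-\pp{s_m}c_k\bigr)+\bigl(\tilde L_m c_k-\tilde L_k c_m\bigr)+2\bigl(a_m^{I,J}\p_{t^I}c_k-a_k^{I,J}\p_{t^I}c_m\bigr)f_J,
\]
with $\tilde L_j:=\veps^2 a_j^{I,J}\p_{t^I}\p_{t^J}+b^J_{j;I}t^I\p_{t^J}$ the differential part of $L_j$. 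Comparing the two expressions reduces the desired Virasoro relation for $\pp{s_m}$ to the single algebraic identity
\[
 \Delta_{m,k}=(m-k)c_{m+k}.
\]

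\textbf{Main obstacle.} The principal task is to establish the cancellation $\Delta_{m,k}=(m-k)c_{m+k}$. This is an identity among the 2-point functions $\Omg_{I,J}$ and the extended Virasoro coefficients $a_m^{I,J}$, $b^J_{m;I}$, $\hat c_{m;I,J}$; it is forced by the quadratic constraints on these coefficients that encode \eqref{Vir commu for Lm}. Verifying it uses the full symmetry of the higher-point functions $\pp{t^{I_1}}\!\cdots\pp{t^{I_s}}f$ and the tau-cover relations \eqref{tau-cover} giving $\pp{t^J}\Omg_{I,K}$ and $\pp{t^J}v^\afa$, and amounts to careful but essentially routine bookkeeping; the sign flip from $(m-k)$ to $(k-m)$ between \eqref{Vir commu for Lm} and the statement reflects the passage from second-order operators acting on $\tau$ to derivations acting on $f=\veps^2\log\tau$.
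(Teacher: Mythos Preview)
Your reduction to the single generator $f$ is correct and coincides with the paper's first step. From that point on, however, the paper and your proposal diverge.

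The paper does not pass through $\tau$ or the operator identity \eqref{Vir commu for Lm} at all. It simply expands $\bigl[\pp{s_m},\pp{s_k}\bigr]f$ directly from \eqref{vir sym 1}--\eqref{vir sym 2}, using $\pp{s_m}t^J=0$ and the explicit formula for $\pp{s_m}f_I$. The result is a polynomial in $f_I$, $t^J$ and $\Omega_{I,J}$ with coefficients built from $a_m^{I,J},b^J_{m;I},\hat c_{m;I,J}$; these coefficients are then matched to $(k-m)\pfrac{f}{s_{m+k}}$ using the purely algebraic relations among the Virasoro coefficients (labelled \eqref{Vir coef eqn 1}--\eqref{Vir coef eqn 3} in the paper) that encode \eqref{Vir commu for Lm}. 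No derivatives of the two-point functions $\Omega_{I,J}$, and no action of $\pp{s_k}$ on $C^\infty(M)$, ever enter.

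Your route via $\tau=\exp(\veps^{-2}f)$ and \eqref{Lm tau 2309} is formally valid, but the ``main obstacle'' $\Delta_{m,k}=(m-k)c_{m+k}$ is not a cosmetic residue: the quantities $\tilde L_m c_k$ and $\pp{s_k}c_m$ involve $\p_{t^I}\Omega_{K,L}$ and $\pp{s_k}\Omega_{I,J}$, which force you back through the tau-cover \eqref{tau-cover} and the definition \eqref{vir sym 3} of $\pfrac{v^\afa}{s_k}$. Unwinding these is at least as much work as the paper's direct computation, and strictly more bookkeeping, so deferring it as ``routine'' leaves the proof incomplete. A small correction as well: your product rule should read $L_m(c_k\tau)=(\tilde L_m c_k)\tau+c_k\,L_m\tau+2\veps^2 a_m^{I,J}(\p_{t^I}c_k)(\p_{t^J}\tau)$, with only the differential part $\tilde L_m$ acting on $c_k$; the zeroth-order part of $L_m$ must not be counted twice. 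This error is symmetric in $(m,k)$ and cancels in the commutator, so it does not invalidate your scheme, but it is worth noting.
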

\begin{proof}
Due to \eqref{tau-cover},\eqref{[sm pp tI]=0} and \eqref{p tI p sm},
it suffices to verify the validity of the relations
\[\left[\pp{s_m},\pp{s_{k}}\right]f=(k-m)\pp{s_{m+k}}f,\quad m, k\ge -1.\]
Note that the operators $L_m$
satisfy the Virasoro commutation relations \eqref{Vir commu for Lm},
which are equivalent to the following identities:
\begin{align}
&(m-k)a_{m+k}^{I,J}=
  2\left(a_m^{I,K}b_{k;K}^J-a_{k}^{I,K}b_{m;K}^J\right), \label{Vir coef eqn 1}\\
&(m-k)b_{m+k;J}^I =
  \left(
    b_{m;J}^Kb_{k;K}^I-b_{k;J}^Kb_{m;K}^I
  \right)
 +4\left(a_m^{I,K}\hat c_{k;K,J}-a_{k}^{I,K}\hat c_{m;K,J}\right), \label{Vir coef eqn 2}\\
&(m-k)\hat c_{m+k;I,J} =
  2\left(
    b_{m;I}^K\hat c_{k;K,J} - b_{k;I}^K\hat c_{m;K,J}
  \right),\label{Vir coef eqn 3}\\
&(m-k)\frac{\delta_{m+k,0}}{4}
  \mathrm{tr}
  \left(\frac14-\mu^2\right)
=
  2\left(
    a_m^{K,L}\hat c_{k;K,L} - a_{k}^{K,L}\hat c_{m;K,L}
  \right),
\end{align}
where $m,k\geq -1$ and $I,J\in\mcalI$.
Therefore, a straightforward calculation yields
\begin{align*}
\left[
  \pp{s_m},\pp{s_{k}}\right]f
&=
  \pp{s_m}\left(
    a_{k}^{I,J}f_If_J
   +b_{k;J}^I t^J f_I
  \right) - (m\leftrightarrow k)
\\
&=
  2a_{k}^{I,K}b_{m;K}^Jf_If_J
 +\left(
   b_{k;J}^K b_{m;K}^I
   +4a_{k}^{I,K}\hat c_{m;K,J}
  \right) t^J f_I
 +2 b_{k;J}^K\hat c_{m;I,K} t^I t^J \\
&\quad
  -(m\leftrightarrow k) \\
&=
  (k-m)
  \left(
    a_{m+k}^{I,J}f_If_J
   +b_{m+k;J}^I t^J f_I
   +\hat c_{m+k;I,J}t^I t^J
  \right) \\
&=
  (k-m)\pp{s_{m+k}}f.
\end{align*}
Hence the lemma is proved.
\end{proof}
Due to \eqref{Vir commu for pp sm}, the symmetries $\pp{s_m},\, m\geq -1$
are called the Virasoro symmetries of the tau cover \eqref{tau-cover} of the Principal Hierarchy of $M$.

\begin{lem}For an arbitrary generalized Frobenius manifold, the operators
\begin{equation}
  \mcalK_m = \mcalD_m -\pp{s_m},\quad m\geq -1
\end{equation}
of $\Der(\mathscr A)$ satisfy the following Virasoro commutation relations:
\begin{equation}
  \left[
    \mcalK_m, \mcalK_{\ell}
  \right] = (m-\ell)\mcalK_{m+\ell},\quad m,\ell\geq -1.
\end{equation}
\end{lem}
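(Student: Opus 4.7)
The strategy is to expand
\[
  [\mcalK_m, \mcalK_\ell] = [\mcalD_m, \mcalD_\ell] - [\mcalD_m, \pp{s_\ell}] - [\pp{s_m}, \mcalD_\ell] + [\pp{s_m}, \pp{s_\ell}]
\]
and exploit the very special form $\mcalD_m = c_m^K \pp{t^K}$ with $c_m^K = 2a_m^{I,K} f_I + b_{m;I}^K t^I$ (summation over $K \in \mcalI$). The last bracket is given by the preceding lemma as $(\ell-m)\pp{s_{m+\ell}}$. Because $[\pp{t^I}, \pp{t^J}] = 0$, the commutator $[\mcalD_m, \mcalD_\ell]$ is again of the form $\tilde c^K \pp{t^K}$ with $\tilde c^K = c_m^{K'}\pp{t^{K'}} c_\ell^K - c_\ell^{K'}\pp{t^{K'}} c_m^K$, and because $[\pp{s_m}, \pp{t^I}] = 0$ by \eqref{[sm pp tI]=0}, the cross brackets reduce to $[\pp{s_m}, \mcalD_\ell] = (\pp{s_m} c_\ell^K)\pp{t^K}$ and $[\mcalD_m, \pp{s_\ell}] = -(\pp{s_\ell} c_m^K)\pp{t^K}$. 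The two $\pp{s_{m+\ell}}$ contributions on the two sides of the desired identity match automatically, so the lemma reduces to the single coefficient identity
\[
  \tilde c^K + \pp{s_\ell} c_m^K - \pp{s_m} c_\ell^K = (m-\ell)\, c_{m+\ell}^K \qquad (K\in\mcalI),
\]
an identity of functions linear in $f_I$ and $t^I$ with coefficients polynomial in the two-point functions $\Omg_{I,J}$.

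The verification of this coefficient identity splits naturally into an $\Omg$-independent part and an $\Omg$-dependent part. Using $\pp{t^{K'}} c_\ell^K = 2a_\ell^{I,K}\Omg_{K',I} + b_{\ell;K'}^K$ and the explicit formula for $\pp{s_\ell} f_I$ from \eqref{vir sym 2}, the $\Omg$-independent $f_L$-coefficient of the left-hand side reproduces $2(m-\ell) a_{m+\ell}^{L,K}$ by applying \eqref{Vir coef eqn 1} twice, once to the $[\mcalD_m,\mcalD_\ell]$ contribution and once to the $\pp{s_\ell}c_m^K-\pp{s_m}c_\ell^K$ contribution, and the $\Omg$-independent $t^L$-coefficient reproduces $(m-\ell) b_{m+\ell;L}^K$ via one application of \eqref{Vir coef eqn 2}; together these reproduce $(m-\ell) c_{m+\ell}^K$ exactly. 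The $\Omg$-dependent cross terms have no counterpart on the right-hand side and must cancel on their own. Writing $A_m = (a_m^{I,J})$ and $\Omg = (\Omg_{I,J})$ as matrices, the $f_L$ $\Omg$-contribution assembles into
\[
  4\bigl[(A_m\Omg A_\ell - A_\ell\Omg A_m)_{L,K} + (A_m\Omg A_\ell - A_\ell\Omg A_m)_{K,L}\bigr],
\]
which vanishes because $(A_m\Omg A_\ell)^{\rmT} = A_\ell\Omg A_m$ by the symmetries $a_m^{I,J}=a_m^{J,I}$ and $\Omg_{I,J}=\Omg_{J,I}$ recorded in \eqref{sym of a,c,Omg}; an entirely analogous swap using $\Omg$-symmetry disposes of the $\Omg$-dependent part of the $t^L$-coefficient.

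The main technical obstacle is precisely this $\Omg$-dependent cancellation: the $\Omg$-free identities are essentially re-encodings of $[L_m, L_k] = (m-k)L_{m+k}$ already used in the preceding lemma, but the $\Omg$-dependent terms appear genuinely in passing from the second-order operators $L_m$ to the first-order derivations $\mcalD_m$, and have no direct Virasoro analogue; their vanishing hinges in a nontrivial way on the symmetry of $\Omg$ combined with the symmetry of the coefficients $a_m$. Once the coefficient identity is established, the derivation $[\mcalK_m,\mcalK_\ell] - (m-\ell)\mcalK_{m+\ell}$ is of the form $(0)\cdot\pp{t^K}$ and hence vanishes on all of $\mathscr A$, completing the proof.
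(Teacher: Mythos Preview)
Your proof is correct and follows essentially the same route as the paper: both expand $[\mcalK_m,\mcalK_\ell]$ into the four brackets, use $[\pp{s_m},\pp{t^I}]=0$ to reduce the cross terms, invoke the previous lemma for $[\pp{s_m},\pp{s_\ell}]$, and then verify the remaining coefficient identity via \eqref{Vir coef eqn 1}--\eqref{Vir coef eqn 2} together with the symmetries \eqref{sym of a,c,Omg}. Your separation into $\Omega$-independent and $\Omega$-dependent parts, and the matrix formulation $(A_m\Omega A_\ell)^{\rmT}=A_\ell\Omega A_m$ for the latter cancellation, is a cleaner packaging of exactly the cancellation the paper leaves implicit when it passes from the two displayed commutator formulae to the final expression citing \eqref{sym of a,c,Omg}; the mathematical content is the same.
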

\begin{proof}
By a straightforward calculation we obtain
\begin{align*}
&\left[\mcalD_m,\mcalD_{\ell}\right]=
  \Big(
    4a_m^{I,J} a_{\ell}^{K,L}f_I\Omg_{J,K}
   -2a_m^{I,L}b_{\ell;K}^J t^K \Omg_{I,J}
\\&\qquad
    +2a_m^{I,J} b_{\ell;J}^L f_I
    +b_{m;I}^J b_{\ell;J}^L t^I
  \Big)\pp{t^L}-(m\leftrightarrow \ell),
\\
&\left[\mcalD_m, -\pp{s_{\ell}}\right]
+\left[-\pp{s_m},\mcalD_{\ell}\right]
=
\Big(
  4a_m^{I,L} a_{\ell}^{K,J}f_J\Omg_{I,K}
   +2a_m^{I,L}b_{\ell;K}^J t^K \Omg_{I,J}  \\
&\qquad
   +2a_m^{I,L}b_{\ell;I}^K f_K
   +4a_m^{I,L} \hat c_{\ell;I,J} t^J
\Big)\pp{t^L}-(m\leftrightarrow \ell).
\end{align*}
Thus from the relations
\eqref{sym of a,c,Omg},\eqref{Vir commu for pp sm} and \eqref{Vir coef eqn 1}--\eqref{Vir coef eqn 3}
it follows that
\begin{align*}
  \left[\mcalK_m,\mcalK_{\ell}\right]
&=
  \left[\mcalD_m,\mcalD_{\ell}\right] + \left[\mcalD_m, -\pp{s_{\ell}}\right]
+\left[-\pp{s_m},\mcalD_{\ell}\right] +\left[\pp{s_m}, \pp{s_{\ell}}\right] \\
&=
  2\left(
    a_m^{I,J} b_{\ell;J}^L - a_{\ell}^{I,J} b_{m;J}^L
  \right)f_I\pp{t^L} \\
&\quad
 +
  \left[
    \left(
      b_{m;I}^J b_{\ell;J}^L - b_{\ell;I}^J b_{m;J}^L
    \right)
   +4\left(
       a_m^{J,L} \hat c_{\ell;I,J} - a_{\ell}^{J,L} \hat c_{m;I,J}
     \right)
  \right] t^I\pp{t^L}\\
&\quad
  +(k-m)\pp{s_{m+\ell}} \\
&=
  (m-\ell)
  \left(
    2a_{m+\ell}^{I,J}f_I\pp{t^J}
   +b_{m+\ell;I}^Jt^I\pp{t^J} - \pp{s_{m+\ell}}
  \right) \\
&=(m-\ell)\mcalK_{m+\ell},
\end{align*}
the lemma is proved.
\end{proof}

From Proposition \ref{prop: vector field Km} we know that
the restrictions of the operators $\mcalK_m\in\Der(\mathscr A)$ to $C^\infty(J^\infty(M))$
are the vector fields $K_m$ defined in \eqref{vector field K_m},
therefore we obtain the following corollary.
\begin{cor}
  For a generalized Frobenius manifold
  the vector fields $K_m$ on the jet space $J^\infty(M)$ defined in \eqref{vector field K_m}
  satisfy the following Virasoro commutation relations:
  \begin{equation}\label{Vir commu for Km}
    \left[K_m, K_{\ell}\right] = (m-\ell)K_{m+\ell},\quad m,\ell\geq -1.
  \end{equation}
\end{cor}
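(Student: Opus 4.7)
The plan is to transport the Virasoro commutation relations for the derivations $\mcalK_m = \mcalD_m - \pp{s_m}$ on $\mathscr A$, which were just established in the preceding lemma, down to the vector fields $K_m$ on $J^\infty(M)$ by way of Proposition \ref{prop: vector field Km}. The core observation is that Proposition \ref{prop: vector field Km} amounts to a pointwise identification $\mcalK_m|_{C^\infty(J^\infty(M))} = K_m$ as derivations on the jet algebra.

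To make this precise, I would first note that the identity $\mcalD_m\Delta\mcalF = K_m\Delta\mcalF + \pfrac{\Delta\mcalF}{s_m}$ of Proposition \ref{prop: vector field Km} is linear in its argument and, tracing through the definition of $\mcalD_m$ together with the chain-rule prescription \eqref{p tI p sm} for $\pp{s_m}$ on jet functions, extends verbatim from the particular series \eqref{Delta mcalF 2308} to any $\mcalF \in C^\infty(J^\infty(M))$. Rewriting the identity as $\mcalK_m\mcalF = K_m\mcalF$ and recalling that $K_m$, being a vector field on $J^\infty(M)$, sends $C^\infty(J^\infty(M))$ to itself, I conclude that $C^\infty(J^\infty(M))$ is stable under each $\mcalK_m$ and that the two derivations coincide on it.

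Once this is in hand, the corollary follows by a direct computation. For an arbitrary $\mcalF \in C^\infty(J^\infty(M))$ the stability just observed allows every intermediate expression to be rewritten in terms of $K_m$, so that
\[ [\mcalK_m,\mcalK_\ell]\mcalF = \mcalK_m(K_\ell\mcalF) - \mcalK_\ell(K_m\mcalF) = K_m K_\ell \mcalF - K_\ell K_m \mcalF = [K_m,K_\ell]\mcalF, \]
whereas by the preceding lemma the left-hand side also equals $(m-\ell)\mcalK_{m+\ell}\mcalF = (m-\ell)K_{m+\ell}\mcalF$. Since both $[K_m,K_\ell]$ and $(m-\ell)K_{m+\ell}$ are derivations of $C^\infty(J^\infty(M))$ that agree on every element, they coincide as vector fields on $J^\infty(M)$, which is the desired relation \eqref{Vir commu for Km}.

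I do not expect any substantive obstacle: the argument is entirely formal, the only nontrivial ingredient being the preceding lemma, which itself reduces to the identities \eqref{Vir coef eqn 1}--\eqref{Vir coef eqn 3} already available from the Virasoro relations for the operators $L_m$. The one minor point worth verifying carefully is the linearity extension of Proposition \ref{prop: vector field Km} from the specific $\Delta\mcalF$ of \eqref{Delta mcalF 2308} to arbitrary jet functions, but this is immediate from the fact that $\mcalD_m$ and $\pp{s_m}$ are both derivations.
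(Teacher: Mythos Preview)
Your proposal is correct and follows essentially the same approach as the paper: the paper's argument is simply the one-line observation that, by Proposition \ref{prop: vector field Km}, the restriction of each $\mcalK_m\in\Der(\mathscr A)$ to $C^\infty(J^\infty(M))$ is the vector field $K_m$, so the Virasoro relations for $\mcalK_m$ established in the preceding lemma descend immediately to $K_m$. You have spelled out the restriction step and the stability of $C^\infty(J^\infty(M))$ under $\mcalK_m$ in more detail than the paper does, but the logic is identical.
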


\subsection{The uniqueness theorem}
In this subsection, we are to prove that the solution
$\Delta\mcalF=\sum_{k\geq 1}\veps^{k-2}\mcalF^{[k]}$
to the loop equation \eqref{loop equation-2308} for a semisimple generalized Frobenius manifold is unique up to the addition of constant terms.

\begin{lem}\label{lemma:poles of K^gamma_i}
Let $M$ be an $n$-dimensional semisimple generalized Frobenius manifold with canonical coordinates $u_1,\dots,u_n$ and flat coordinates $v^1,\dots,v^n$,
then the differential polynomial $K^{\gamma,s}$ defined in \eqref{short expression of K^gamma_s} is a rational function in $\lmd$
with poles of order $s+1$ at $\lmd=u_i$, $i=1,\dots,n$.
Moreover, the coefficient of the highest order pole $\frac{1}{(\lmd-u_i)^{s+1}}$ of $K^{\gamma,s}$ is equal to
\begin{equation}
  -\frac{(2s+1)!!}{2^s}\psi_i^\gamma\sqrt{h_i}(u_{i,x})^s,
\end{equation}
where $\psi_{i}^\afa=\psi_{i\beta}\eta^{\beta\afa}$ and $\sqrt{h_i}$ are defined in
\eqref{def of psi i afa} and \eqref{def of h_i} respectively.
\end{lem}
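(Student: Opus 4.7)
I will work in canonical coordinates $u_1,\dots,u_n$ and analyze the two terms of \eqref{short expression of K^gamma_s} separately. Since $E=\sum_i u_i\,\p/\p u_i$ and $e=\sum_i\p/\p u_i$, the $\gamma$-th flat component of the vector field $(E-\lmd e)^{-1}$ equals $\sum_i\sqrt{h_i}\,\psi_i^\gamma/(u_i-\lmd)$ by \eqref{zh-1-4}. Writing $\p_x=\sum_j u_{j,x}\,\p_{u_j}$ and noting that $\p_{u_j}(u_i-\lmd)^{-1}=0$ for $j\ne i$, only the $j=i$ summand of $\p_x$ can raise the order of the pole at $\lmd=u_i$; the Leibniz rule combined with a direct induction on $s$ then shows that $(s+1)\p_x^s(1/(E-\lmd e))^\gamma$ is rational in $\lmd$, with a pole of order at most $s+1$ at each $\lmd=u_i$ and leading coefficient $-(s+1)!\,\sqrt{h_i}\,\psi_i^\gamma(u_{i,x})^s/(\lmd-u_i)^{s+1}$.

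For the second sum I need the local behavior of the periods near $\lmd=u_i$. Matching leading orders in the Gauss-Manin equation \eqref{ss. GM-3} forces an expansion $\phi_{i\afa}=c_{i\afa}(\lmd-u_i)^{-1/2}+O((\lmd-u_i)^{1/2})$ as $\lmd\to u_i$, for some $c_{i\afa}$ independent of $\lmd$, while \eqref{ss. GM-1} forces $\phi_{j\afa}$ to be analytic at $\lmd=u_i$ whenever $j\ne i$. Extracting the most singular part of the orthogonality relation \eqref{ss. gram ortho} then gives $c_{i\afa}G^{\afa\beta}c_{i\beta}=-1$. An induction on $k$ analogous to the first one, using that $\p_{u_j}\phi_{i\afa}=\gamma_{ij}\phi_{j\afa}$ from \eqref{ss. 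GM-1} is analytic at $\lmd=u_i$ for $j\ne i$, while $\p_{u_i}\phi_{i\afa}$ from \eqref{ss. GM-2} contributes a multiplicative factor $1/(2(\lmd-u_i))$ to the leading singularity, yields
\[
\p_x^k\phi_{i\afa}=\frac{(2k-1)!!}{2^k}(u_{i,x})^k c_{i\afa}(\lmd-u_i)^{-k-1/2}+O\bigl((\lmd-u_i)^{-k+1/2}\bigr).
\]
Using \eqref{230821-1052} the same analysis produces the corresponding leading expansions of $\p_x^k p_\beta$ and $\p_x^{s-k}\p^\gamma p_\afa$ with extra scalar prefactors $\sqrt{h_i}$ and $\psi_i^\gamma$ respectively, and differentiating in $\lmd$ gives the leading singularity of $\p_\lmd\p_x^k p_\beta$.

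Multiplying these leading expansions pairwise and contracting with $G^{\afa\beta}$ using $c_{i\afa}G^{\afa\beta}c_{i\beta}=-1$, each summand in the second sum of \eqref{short expression of K^gamma_s} becomes rational at $\lmd=u_i$ with a pole of order exactly $s+1$: the half-integer exponents cancel in the product. Adding the contribution of the first term, the coefficient of $(\lmd-u_i)^{-s-1}$ in $K^{\gamma,s}$ equals $\sqrt{h_i}\,\psi_i^\gamma(u_{i,x})^s$ multiplied by
\[
-(s+1)!+\frac{1}{2^s}\sum_{k=1}^{s}k\binom{s+1}{k+1}(2s-2k-1)!!\,(2k-1)!!,
\]
with the convention $(-1)!!=1$. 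The lemma therefore reduces to the combinatorial identity asserting that this expression equals $-(2s+1)!!/2^s$, which I would verify either by a short induction on $s$ or by comparing coefficients of $x^s$ in an identity built around the generating function $(1-x)^{-1/2}$.

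The main obstacle is the careful pole bookkeeping inside the two inductions: one must check that none of the subleading terms, which arise whenever $\p_x$ acts on a scalar prefactor such as $\sqrt{h_i}$, $\psi_i^\gamma$ or $c_{i\afa}$, or whenever $\p_{u_j}$ with $j\ne i$ acts on $\phi_{i\afa}$ through \eqref{ss. GM-1}, contribute to the coefficient of $(\lmd-u_i)^{-s-1}$. Each such term produces either a coefficient regular at $\lmd=u_i$ or a $\phi_{j\afa}$ with $j\ne i$ that is analytic there, hence a strictly less singular pole. Once this is established at every step of the two inductions, the combinatorial identity closes the proof.
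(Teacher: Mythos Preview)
Your approach is essentially the same as the paper's and arrives at the same combinatorial identity, but the bookkeeping differs in one respect worth noting. The paper never expands $\phi_{i\alpha}$ as a Puiseux series in $(\lambda-u_i)^{1/2}$; instead it treats all $\phi_{j\alpha}$ as formal coefficients and tracks only the explicit powers of $\frac{1}{\lambda-u_i}$ coming from the Gauss--Manin equations, obtaining
\[
\p_x^k\phi_{i\alpha}=\frac{(2k-1)!!}{2^k}\,\frac{(u_{i,x})^k}{(\lambda-u_i)^k}\Bigl(\phi_{i\alpha}+2\sum_jV_{ij}\phi_{j\alpha}\Bigr)+\cdots.
\]
After forming the products, the $\phi$'s are eliminated by the exact orthogonality \eqref{ss. gram ortho}, which already supplies the extra factor $\frac{1}{u_i-\lambda}$ and makes the rationality of $K^{\gamma,s}$ immediate. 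Your route via the leading Puiseux coefficient $c_{i\alpha}$ and the relation $c_{i\alpha}G^{\alpha\beta}c_{i\beta}=-1$ is the analytic shadow of the same computation and works, but it requires an extra local monodromy argument that the paper's formal version bypasses.

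On that point, your justification has a small slip: you invoke \eqref{ss. GM-1} to conclude that $\phi_{j\alpha}$ is analytic at $\lambda=u_i$ for $j\ne i$, but \eqref{ss. GM-1} concerns $u$-derivatives, not $\lambda$-behaviour. The correct argument uses the $\lambda$-equation \eqref{ss. GM-3}: near $\lambda=u_i$ only the $j=i$ component carries the singular prefactor, and the residue matrix has eigenvalues $0$ (multiplicity $n-1$) and $\tfrac12$, giving the exponents you want. In fact $\phi_{j\alpha}$ for $j\ne i$ need not be analytic (it may carry a $(\lambda-u_i)^{1/2}$ branch), but it is bounded, which is all your induction actually uses. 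Once this is fixed, your argument and the paper's coincide, including the generating-function proof of the final double-factorial identity.
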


\begin{proof} Fix an $i\in\{1,2,...,n\}$, then from \eqref{ss. GM-1} and \eqref{ss. GM-2} we obtain
  \begin{align}
    \p_x^k\phi_{i\afa}
   &=\,
    \frac{(2k-1)!!}{2^k}
    \frac{(u_{i,x})^k}{(\lmd-u_i)^k}
   \left(
     \phi_{i\afa}+2\sum_{j=1}^{n}V_{ij}\phi_{j\afa}
   \right)
\notag\\
&\quad
   +\text{lower-degree terms of $\frac{1}{\lmd-u_i}$}.
\label{pole lemma}
  \end{align}
Note that
\begin{equation}
  \left(\frac{1}{E-\lmd e}\right)^\gamma = \frac{\sqrt{h_i}}{u_i-\lmd}\psi_i^\gamma,
\end{equation}
therefore we have
\begin{align}
  (s+1)\p_x^s
  \left(
    \frac{1}{E-\lmd e}
  \right)^\gamma
&=\,
  -\frac{\sqrt{h_i}\psi_i^\gamma(u_{i,x})^s}{(\lmd-u_i)^{s+1}}(s+1)!
  +\cdots,
\end{align}
where ``$\cdots$'' stands for the lower-degree terms of $\frac{1}{\lmd-u_i}$.
Using \eqref{ss. GM-3} and \eqref{pole lemma}, we also obtain
\begin{align}
  \p_x^{s-k}\p^\gamma p_\afa
&=
  \p_x^{s-k}\left(\psi_i^\gamma\phi_{i\afa}\right)
=
  \psi_i^\gamma\p_x^{s-k}\phi_{i\afa}+\cdots \notag\\
&=
  \frac{\psi_i^\gamma(u_{i,x})^{s-k}}{(\lmd-u_i)^{s-k}}\cdot
  \frac{(2s-2k-1)!!}{2^{s-k}}
  \left(
    \phi_{i\afa}+2\sum_{j=1}^{n}V_{ij}\phi_{j\afa}
  \right)+\cdots, \\
\pp\lmd\p_x^kp_\beta &=
  \pp\lmd\p_x^{k-1}
    \left(\sqrt{h_i}u_{i,x}\phi_{i\beta}\right) \notag\\
&=
  \sqrt{h_i}u_{i,x}
  \pp{\lmd}
  \left(
    \frac{(2k-3)!!}{2^{k-1}}\cdot\frac{(u_{i,x})^{k-1}}{(\lmd-u_i)^{k-1}}
    \left(
      \phi_{i\beta}+2\sum_{j=1}^{n}V_{ij}\phi_{j\beta}
    \right)
  \right)+\cdots \notag\\
&=
  -\frac{\sqrt{h_i}(u_{i,x})^k}{(\lmd-u_i)^k}
   \frac{(2k-1)!!}{2^k}
   \left(
     \phi_{i\beta}+2\sum_{j=1}^{n}V_{ij}\phi_{j\beta}
   \right)+\cdots .
\end{align}
Then by using \eqref{ss. gram ortho} and \eqref{short expression of K^gamma_s} we arrive at
\begin{align}
  K^{\gamma,s}=&
    -\frac{\psi^\gamma_i\sqrt{h_i}(u_{i,x})^s}{(\lmd-u_i)^{s+1}}
    \left(
      (s+1)!-\frac{1}{2^s}\sum_{k=1}^{s}k{s+1\choose k+1}(2s-2k-1)!!(2k-1)!!
    \right) \notag \\
&
  +\text{lower-degree terms of $\frac{1}{\lmd-u_i}$}.
\end{align}
Thus, in order to prove the lemma we only need to verify the validity of the following identities for all $s\geq 0$:
\begin{equation}\label{comb identity-230815}
  (s+1)!-\frac{1}{2^s}\sum_{k=1}^{s}k{s+1\choose k+1}(2s-2k-1)!!(2k-1)!!
 =
 \frac{(2s+1)!!}{2^s}.
\end{equation}
In fact, the validity of these identities follows from the relations
\begin{align}
&
  \sum_{s=0}^{\infty}
  \left[
    (s+1)!-\frac{1}{2^s}\sum_{k=1}^{s}k{s+1\choose k+1}(2s-2k-1)!!(2k-1)!!
  \right]\frac{z^s}{s!}  \notag \\
=&
  \frac{1}{(1-z)^2}
 -2\frac{\td}{\td z}
   \sum_{s=0}^{\infty}
   \sum_{k=0}^{s}
     \frac{k(2k-1)!!}{(k+1)!}\left(\frac z2\right)^{k+1}
     \cdot
     \frac{(2s-2k-1)!!}{(s-k)!}\left(\frac z2\right)^{s-k} \notag\\
=&
  \frac{1}{(1-z)^2}
  -2\frac{\td}{\td z}
  \left[
    \left(
      \sum_{k=0}^{\infty}\frac{k(2k-1)!!}{(k+1)!}\left(\frac z2\right)^{k+1}
    \right)
    \left(
      \sum_{k=0}^{\infty}
        \frac{(2k-1)!!}{k!}
        \left(\frac z2\right)^k
    \right)
  \right]\notag \\
=&
  \frac{1}{(1-z)^2}-2\frac{\td}{\td z}
  \left(
    \frac{2-2\sqrt{1-z}-z}{2\sqrt{1-z}}\cdot\frac{1}{\sqrt{1-z}}
  \right) \notag\\
=&
  \frac{1}{(1-z)^{\frac 32}}
=
 \sum_{s=0}^{\infty}
   \frac{(2s+1)!!}{2^s}\cdot\frac{z^s}{s!}.\notag
\end{align}
The lemma is proved.
\end{proof}

\begin{lem} \label{lem: a uniqueness lemma}
Let $M$ be an $n$-dimensional semisimple generalized Frobenius manifold, and
$\Bigset{B_{\gamma,s}}{1\leq\gamma\leq n,\, 0\leq s\leq N}$
be a family of functions on the jet space $J^\infty(M)$,
where $N$ is a certain positive integer.
If
\begin{equation}\label{230823-2152}
  \sum_{s=0}^{N}
    B_{\gamma,s}
    K^{\gamma,s}=0
\end{equation}
holds true identically in $\lmd$, then all the functions $B_{\gamma,s}$ are equal to zero.
In particular, if a function $\mcalF=\mcalF(v,v_x,...,v^{(N)})$ on the jet space $J^\infty(M)$ satisfies the equation
\[
 \sum_{s=0}^{N}K^{\gamma,s}\pfrac{\mcalF}{v^{\gamma,s}} = 0,
\]
then $\mcalF\equiv\mathrm{const}$.
\end{lem}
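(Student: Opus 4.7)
The idea is to exploit the rational structure in $\lambda$ of each $K^{\gamma,s}$ given by Lemma~\ref{lemma:poles of K^gamma_i}, and to eliminate the unknown coefficients $B_{\gamma,s}$ one pole order at a time by downward induction on $s$. I would fix an arbitrary $i\in\{1,\dots,n\}$ and examine the left-hand side of \eqref{230823-2152} as a rational function in $\lambda$ near $\lambda=u_i$.

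By Lemma~\ref{lemma:poles of K^gamma_i}, $K^{\gamma,s}$ has a pole of order at most $s+1$ at $\lambda=u_i$, so the highest possible pole order in the sum is $N+1$, contributed solely by the top terms $B_{\gamma,N}K^{\gamma,N}$. Using the explicit leading coefficient supplied by that lemma, the vanishing of the coefficient of $(\lambda-u_i)^{-(N+1)}$ reads
\[
 -\frac{(2N+1)!!}{2^N}\sqrt{h_i}\,(u_{i,x})^{N}\sum_{\gamma=1}^{n}B_{\gamma,N}\,\psi_i^\gamma = 0.
\]
On the dense open subset of the jet space where $u_{i,x}\neq 0$, and using that $h_i\neq 0$ throughout the semisimple region, this reduces to $\sum_\gamma B_{\gamma,N}\psi_i^\gamma=0$; letting $i$ range over $\{1,\dots,n\}$ and invoking invertibility of $\Psi=(\psi_{i\alpha})$ (which follows from \eqref{ortho of Psi}), I would conclude $B_{\gamma,N}=0$ for every $\gamma$, and the identity extends by continuity to the whole jet space.

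Once the level-$N$ coefficients have been eliminated, the remaining sum has pole order at most $N$ at $\lambda=u_i$, and the coefficient of $(\lambda-u_i)^{-N}$ now comes exclusively from the terms $B_{\gamma,N-1}K^{\gamma,N-1}$. The identical argument kills $B_{\gamma,N-1}$, and iterating downward in $s$ gives $B_{\gamma,s}=0$ for every $\gamma$ and every $0\le s\le N$. For the second statement, setting $B_{\gamma,s}:=\partial\mcalF/\partial v^{\gamma,s}$ recasts the hypothesis as precisely \eqref{230823-2152}, so the first part forces every partial derivative of $\mcalF$ with respect to a jet coordinate to vanish, whence $\mcalF$ is a constant.

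The substantive work of the proof has already been absorbed into Lemma~\ref{lemma:poles of K^gamma_i}: the only non-trivial point is that the leading Laurent coefficient of $K^{\gamma,s}$ at $\lambda=u_i$ is genuinely a nonzero scalar multiple of the $(i,\gamma)$-entry of the invertible matrix $\Psi$, so that the linear system produced at each stage of the induction is non-degenerate. Once that non-degeneracy is granted, the remainder is a clean peeling-off argument; the principal obstacle to anticipate is simply keeping careful track that, at stage $s$, no $K^{\gamma,s'}$ with $s'<s$ can contribute to the pole of order $s+1$, which is immediate from the pole-order bound in Lemma~\ref{lemma:poles of K^gamma_i}.
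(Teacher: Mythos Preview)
Your proposal is correct and follows essentially the same argument as the paper: extract the coefficient of the highest-order pole $(\lambda-u_i)^{-(N+1)}$, use the explicit leading term from Lemma~\ref{lemma:poles of K^gamma_i} together with the invertibility of $(\psi_i^\gamma)$ to kill $B_{\gamma,N}$, and then recurse downward in $s$. The paper phrases the extraction step as the residue $\res_{\lambda=u_i}(\lambda-u_i)^N Q$, but this is just a notational variant of what you do.
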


\begin{proof}
According to Lemma \ref{lemma:poles of K^gamma_i}, the left-hand side of \eqref{230823-2152}, which we denote by $Q$, is a rational function of $\lmd$. It has a pole of order $N+1$ at $\lmd=u_i$, and
\begin{equation}
 \res_{\lmd=u_i} (\lmd-u_i)^N Q= -B_{\gamma,N}
  \psi_i^\gamma
  \frac{(2N+1)!!}{2^N}\sqrt{h_i}(u_{i,x})^N=0,\quad i=1,\dots,n.
\end{equation}
Due to the nondegeneracy of the matrix $(\psi_i^\gamma)$ we arrive at
\[
  B_{\gamma,N}
  \equiv 0,\quad \forall 1\leq \gamma\leq n.
\]
By continuing this procedure recursively, we obtain
\[B_{\gamma,s}\equiv 0,\quad s=0,\dots,N, \gamma=1,\dots,n.\]
The lemma is proved.
\end{proof}
\begin{cor}\label{cor: the form of Delta F}
If $\Delta\mcalF=\sum_{k\geq 1}\veps^{k-2}\mcalF^{[k]}$ given in \eqref{Delta mcalF 2308}
is a solution to the loop equation \eqref{loop equation-2308} of a semisimple generalized Frobenius manifold,
then $\mcalF^{[2k-1]}$ must be a constant function for each integer $k\geq 1$.
Therefore, the solution $\Delta\mcalF$ (if exists) must be of the following form up to the addition of constant terms:
\begin{equation}\label{genus expansion of Delta F}
  \Delta\mcalF=\sum_{g\geq 1}\veps^{2g-2}\mcalF_g,
\end{equation}
where $\mcalF_g=\mcalF_g(v,v_x,...,v^{(N'_g)})$ for certain integers $N'_g$, $g\geq 1$.
\end{cor}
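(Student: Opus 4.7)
The plan is to prove the corollary by expanding the loop equation \eqref{loop equation-2308} in powers of $\veps$ and showing, by induction on $k\geq 1$, that the coefficient of $\veps^{2k-3}$ forces $\mcalF^{[2k-1]}$ to be annihilated by the linear operator on the left-hand side, so that Lemma \ref{lem: a uniqueness lemma} makes it constant. First I would observe the $\veps$-weights in \eqref{loop equation-2308}: the left-hand side is linear in the first derivatives of $\Delta\mcalF$ and carries no explicit $\veps$, while on the right-hand side the quadratic/second-derivative terms in $\Delta\mcalF$ and the $K_m$-independent ``tail'' terms all come with an explicit $\veps^2$. Substituting $\Delta\mcalF=\sum_{k\geq 1}\veps^{k-2}\mcalF^{[k]}$ and extracting the coefficient of $\veps^{n-2}$ yields an equation of the schematic form
\[
\sum_{s\geq 0}K^{\gamma,s}\pfrac{\mcalF^{[n]}}{v^{\gamma,s}} = R_n\bigl(\mcalF^{[1]},\dots,\mcalF^{[n-2]}\bigr),
\]
where $R_n$ is bilinear in lower $\mcalF^{[i]}$'s (via the $i+j=n$ quadratic contribution) and linear in $\mcalF^{[n-2]}$ (via the second-derivative and $\p_x^{k+1}$-type contributions), plus the constant $v,v_x,\dots$-dependent tail only when $n=2$.

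Next I would run the induction on odd orders. For $n=1$ the right-hand side $R_1$ is empty (no decomposition $i+j=1$ with $i,j\geq 1$ is possible, $n-2=-1$ contributes nothing, and the tail terms live at $\veps^0$), so $\mcalF^{[1]}$ satisfies $\sum_s K^{\gamma,s}\pfrac{\mcalF^{[1]}}{v^{\gamma,s}}=0$ and Lemma \ref{lem: a uniqueness lemma} forces $\mcalF^{[1]}\equiv\mathrm{const}$. For the inductive step, assume $\mcalF^{[1]},\mcalF^{[3]},\dots,\mcalF^{[2k-1]}$ are constant, and look at order $n=2k+1$. In the quadratic contribution one needs $i+j=2k+1$ with $i,j\geq 1$; since the sum is odd, exactly one of $i,j$ is odd and, being strictly less than $2k+1$, that odd index lies in $\{1,3,\dots,2k-1\}$, so the corresponding factor $\pfrac{\mcalF^{[\mathrm{odd}]}}{v^{\afa,k}}$ vanishes by hypothesis. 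The second-derivative and $\p_x^{k+1}$ contributions at this order involve $\mcalF^{[n-2]}=\mcalF^{[2k-1]}$, which is again constant, and the constant tail terms sit at $\veps^0$, which is not $\veps^{2k-1}$ for any $k\geq 1$. Thus $R_{2k+1}\equiv 0$, and a second application of Lemma \ref{lem: a uniqueness lemma} gives $\mcalF^{[2k+1]}\equiv\mathrm{const}$.

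The main obstacle is simply bookkeeping: I need to verify carefully that the $\veps$-weight of every single term on the right-hand side of \eqref{loop equation-2308} is strictly positive in the $\mcalF^{[i]}$ indices in such a way that, at odd order $n=2k+1$, some odd lower index $\leq 2k-1$ appears in each surviving contribution. A convenient way to organize this is to note that the ``source'' terms $\tfrac{1}{2}G^{\afa\beta}\pfrac{p_\afa}{\lmd}*\pfrac{p_\beta}{\lmd}$ and $-\tfrac{1}{4\lmd^2}\mathrm{tr}(\tfrac14-\mu^2)$ are of order $\veps^0$, whereas all $\Delta\mcalF$-dependent terms on the right carry an \emph{overall} factor $\veps^2$, so their contribution at $\veps^{n-2}$ cannot involve $\mcalF^{[n]}$ itself but only $\mcalF^{[i]}$'s with $i\leq n-2$; combined with the parity counting this closes the induction. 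Once $\mcalF^{[2k-1]}\equiv\mathrm{const}$ for all $k\geq 1$ is established, rewriting $\Delta\mcalF$ as $\sum_{g\geq 1}\veps^{2g-2}\mcalF_g$ with $\mcalF_g:=\mcalF^{[2g]}$ gives the claimed form \eqref{genus expansion of Delta F} up to additive constants.
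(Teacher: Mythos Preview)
Your proposal is correct and follows essentially the same approach as the paper: both argue by induction on $k$, observing that at order $\veps^{2k-3}$ the right-hand side of \eqref{loop equation-2308} vanishes (your explicit parity analysis of the quadratic term $i+j=2k-1$ and of the $\veps^2\mcalF^{[2k-3]}$-dependent terms makes the paper's implicit reasoning precise), so that Lemma~\ref{lem: a uniqueness lemma} forces $\mcalF^{[2k-1]}$ to be constant. The only cosmetic difference is indexing: the paper's inductive step proves $\mcalF^{[2k-1]}$ from the hypothesis on $\mcalF^{[2k'-1]}$ for $k'\le k-1$, whereas you prove $\mcalF^{[2k+1]}$ from the hypothesis up to $\mcalF^{[2k-1]}$.
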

\begin{proof}
We prove the corollary by induction on $k$. For the case $k=1$,
  the coefficients of $\veps^{-1}$ in both sides of the loop equation \eqref{loop equation-2308} give us the relation
  \[
    \sum_{s=0}^{N_1}\pfrac{\mcalF^{[1]}}{v^{\gamma,s}}K^{\gamma,s}=0,
  \]
  where $\mcalF^{[1]}=\mcalF^{[1]}(v,v_x,...,v^{(N_1)})$.
  Therefore it follows from Lemma \ref{lem: a uniqueness lemma} that $\mcalF^{[1]}$ is a constant function.
Now let $k\geq 2$, and suppose $\mcalF^{[2k'-1]}$ are constant functions for all $1\leq k'\leq k-1$,
  then the coefficients of $\veps^{2k-3}$ in both sides of \eqref{loop equation-2308} lead to the relation
\[\sum_{s=0}^{N_{2k-1}}\pfrac{\mcalF^{[2k-1]}}{v^{\gamma,s}}K^{\gamma,s}=0,\]
  therefore $\mcalF^{[2k-1]}$ must also be a constant function. The corollary is proved.
\end{proof}
\begin{thm}\label{thm:the uniqueness}
For a semisimple generalized Frobenius manifold, the solution \eqref{genus expansion of Delta F} to
the loop equation \eqref{loop equation-2308} is unique up to the addition of constant terms.
\end{thm}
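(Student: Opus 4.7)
The plan is to prove the theorem by strong induction on the genus $g$, exploiting the triangular structure that the loop equation \eqref{loop equation-2308} exhibits with respect to the $\veps$-expansion \eqref{genus expansion of Delta F}. The crucial observation is that on the left-hand side of \eqref{loop equation-2308} the linear operator $\sum_s K^{\gamma,s}\,\p/\p v^{\gamma,s}$ acts on $\Delta\mcalF$ with no explicit $\veps$-prefactor, whereas every quadratic-in-$\Delta\mcalF$ term and every Hessian-in-$\Delta\mcalF$ term on the right-hand side carries a factor $\veps^2$. Consequently, matching coefficients of $\veps^{2g-2}$ for any $g\geq 1$ produces an equation of the form
\[
  \sum_{s\geq 0}K^{\gamma,s}\pfrac{\mcalF_g}{v^{\gamma,s}}
 =R_g(\mcalF_1,\dots,\mcalF_{g-1};\lmd),
\]
where $R_g$ depends only on the lower-genus components. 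In particular, this equation is \emph{linear} in the unknown $\mcalF_g$, and it is an identity in $\lmd$ by construction.

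Granting this, the uniqueness statement reduces to an induction of one-line length. Given two solutions $\Delta\mcalF=\sum_{g\geq 1}\veps^{2g-2}\mcalF_g$ and $\Delta\mcalF'=\sum_{g\geq 1}\veps^{2g-2}\mcalF_g'$ of the form \eqref{genus expansion of Delta F}, set $\Delta_g=\mcalF_g-\mcalF_g'$. At $g=1$ the common right-hand side $R_1$ depends on no $\mcalF$'s at all, so subtracting the two genus-one equations gives
\[
  \sum_{s\geq 0}K^{\gamma,s}\pfrac{\Delta_1}{v^{\gamma,s}}=0
\]
identically in $\lmd$, and Lemma \ref{lem: a uniqueness lemma} forces $\Delta_1$ to be a constant. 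For the inductive step, assume $\Delta_{g'}$ is a constant for every $1\leq g'<g$. Since constants have vanishing jet-space derivatives of all positive orders, $R_g$ produces the same value when computed from $\Delta\mcalF$ as from $\Delta\mcalF'$; subtracting yields once more $\sum_{s}K^{\gamma,s}\,\p\Delta_g/\p v^{\gamma,s}=0$ as an identity in $\lmd$, and Lemma \ref{lem: a uniqueness lemma} concludes that $\Delta_g$ is a constant. Combined with Corollary \ref{cor: the form of Delta F}, which already guarantees that the odd-power coefficients $\mcalF^{[2k-1]}$ are constants, this proves the theorem.

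The only point that is not purely formal is the bookkeeping of $\veps$-orders that validates the triangularity of the induction: in the quadratic contribution $\veps^2(\p\Delta\mcalF/\p t^I)(\p\Delta\mcalF/\p t^J)$ the exponents satisfy $2+(2g_1-2)+(2g_2-2)=2g-2$ with $g_1,g_2\geq 1$, which excludes $g_1=g$ or $g_2=g$; similarly, the Hessian term $\veps^2\,\p^2\Delta\mcalF/\p t^I\p t^J$ couples $\mcalF_g$ to order $2g$, not $2g-2$. This is what keeps the equation for $\mcalF_g$ linear rather than quadratic in the unknown, and is the reason Lemma \ref{lem: a uniqueness lemma} applies unchanged at every genus. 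Beyond this verification, the argument is essentially a clean combination of Corollary \ref{cor: the form of Delta F} and Lemma \ref{lem: a uniqueness lemma}, so I expect no genuine obstacle.
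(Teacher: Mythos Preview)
Your proof is correct and follows exactly the same approach as the paper: both argue that at each genus the loop equation is linear in $\mcalF_g$ with a right-hand side depending only on lower-genus data, so that Lemma \ref{lem: a uniqueness lemma} applied to the difference of two solutions forces each $\Delta_g$ to be a constant. The paper's proof is simply terser, stating only that uniqueness of each $\mcalF_g$ ``is a simple corollary of Lemma \ref{lem: a uniqueness lemma}'', whereas you spell out the induction and the $\veps$-order bookkeeping explicitly; your minor notational slip of writing $\p\Delta\mcalF/\p t^I$ instead of $\p\Delta\mcalF/\p v^{\afa,k}$ in the last paragraph does not affect the argument.
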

\begin{proof}
  It suffices to show that each $\mcalF_g$ is uniquely determined by the loop equation \eqref{loop equation-2308},
  which is a simple corollary of Lemma \ref{lem: a uniqueness lemma}.
\end{proof}

The solution $\mcalF_g$ in \eqref{genus expansion of Delta F} is called the
\textit{genus $g$ free energy} of the generalized Frobenius manifold.
It turns out that the integer $N'_g$ in Corollary \ref{cor: the form of Delta F} can be chosen as $3g-2$,
which will be proved in the following two subsections.

\subsection{The genus one free energy}

Now we proceed to show the existence of a solution of the form \eqref{genus expansion of Delta F} to
the loop equation \eqref{loop equation-2308} of a semisimple generalized Frobenius manifold.
To this end, we first solve the genus one free energy $\mcalF_1$ in terms of the canonical coordinates.
Comparing the $\veps^0$-coefficients in \eqref{loop equation-2308}, we obtain
\begin{equation}\label{loop equation for F1}
    \sum_{s=0}^{N_1'}\pfrac{\mcalF_1}{v^{\gamma,s}}K^{\gamma,s}
  =
    \frac12 G^{\afa\beta}
 \pfrac{p_\afa}{\lmd}*\pfrac{p_\beta}{\lmd}-
 \frac{1}{4\lmd^2}
 \mathrm{tr}
 \left(
   \frac14-\mu^2
 \right).
  \end{equation}

\begin{lem}
For a semisimple generalized Frobenius manifold and an arbitrary basis of its periods $p_1(v,\lmd),\dots, p_n(v,\lmd)$
with the associated Gram matrix $(G^{\afa\beta})$, the following identity holds true
\begin{align}\label{s.s. star product}
\pfrac{p_\afa}{\lmd}*\pfrac{p_\beta}{\lmd}G^{\afa\beta}
&=\,-\frac18\sum_{i=1}^{n}\frac{1}{(\lmd-u_i)^2}
-\sum_{i<j}\frac{V_{ij}^2}{(\lmd-u_i)(\lmd-u_j)}
+\frac{1}{2\lmd^2}\mathrm{tr}\left(\frac14-\mu^2\right).
\end{align}
\end{lem}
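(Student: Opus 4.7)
The plan is to compute both sides in canonical coordinates $u_1,\dots,u_n$ and match them as rational functions of $\lambda$. The three main tools are the Gauss-Manin equation \eqref{ss. GM-3} for $\partial_\lambda\phi_{i\alpha}$; the orthonormality relation \eqref{ss. gram ortho}, which collapses any pairing of $\phi$'s against $G^{\alpha\beta}$ into $\delta_{ij}/(u_i-\lambda)$; and the antisymmetry $V^\rmT=-V$, which forces $V_{ii}=0$ and $\mathrm{tr}(V^2)=\mathrm{tr}(\mu^2)=-\sum_{i,j}V_{ij}^2$ via $V=\Psi\mu\Psi^{-1}$.

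To begin, I would substitute the orthonormal-frame decomposition $\nabla p_\alpha=\sum_i\phi_{i\alpha}f_i$ into the canonical-coordinate form of the star product from \cite{normalform, GFM}, adapted to the GFM setting with non-flat unit (so that the $(E-\lambda e)^{-1}$ factor present in the star product produces weights $(u_i-\lambda)^{-1}$ when evaluated on the orthonormal frame). Differentiating $\phi_{i\alpha}$ in $\lambda$ via \eqref{ss. GM-3} gives
\[
\partial_\lambda\phi_{i\alpha}
=-\frac{1}{\lambda-u_i}\Bigl(\tfrac12\phi_{i\alpha}+\sum_j V_{ij}\phi_{j\alpha}\Bigr),
\]
so every occurrence of $\partial_\lambda p_\alpha$ on the left-hand side becomes an explicit linear combination of $\phi_{k\alpha}$'s with rational coefficients in $\lambda$. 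After contracting with $G^{\alpha\beta}$ and collapsing each pair via \eqref{ss. gram ortho}, the left-hand side becomes a sum of terms of the form $(\text{rational in }u,V)/[(\lambda-u_i)^a(\lambda-u_j)^b\lambda^c]$.

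The remaining work is partial-fraction bookkeeping on this rational function. The $\tfrac12\phi_{i\alpha}\cdot\tfrac12\phi_{i\beta}$ contribution, after the $\delta_{ij}$ collapse, produces the double-pole piece $-\tfrac18\sum_i(\lambda-u_i)^{-2}$. The $VV$-cross terms, after exploiting $V_{ij}=-V_{ji}$, $V_{ii}=0$, and the identity
\[
\frac{1}{(\lambda-u_i)(\lambda-u_j)}=\frac{1}{u_i-u_j}\Bigl[\frac{1}{\lambda-u_i}-\frac{1}{\lambda-u_j}\Bigr],\quad i\ne j,
\]
reorganize into $-\sum_{i<j}V_{ij}^2/[(\lambda-u_i)(\lambda-u_j)]$. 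The genuinely new obstacle, absent in the usual Frobenius manifold case, is the emergence of the double pole at $\lambda=0$: the term $\frac{1}{2\lambda^2}\mathrm{tr}(\tfrac14-\mu^2)$ reflects the non-flatness of the unit vector field $e$ and should arise from the $\lambda=0$ behavior of the $(E-\lambda e)^{-1}$ piece of the star product, combined with the trace identity $\mathrm{tr}(\mu^2)=-2\sum_{i<j}V_{ij}^2$. Verifying that these three pole contributions assemble into the claimed right-hand side with no leftover residue at $\lambda=u_i$, $\lambda=0$, or $\lambda=\infty$ is the delicate step; once the $\lambda=0$ piece is identified correctly, the remaining matching at $\lambda=u_i$ is forced by the pole structure established in the first two paragraphs.
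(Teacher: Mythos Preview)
Your computational plan for the $u_i$-dependent part is essentially correct and matches the paper: the paper also uses \eqref{ss. GM-3} and \eqref{ss. gram ortho} to produce the terms $-\tfrac18\sum_i(\lambda-u_i)^{-2}$ and $-\sum_{i<j}V_{ij}^2/[(\lambda-u_i)(\lambda-u_j)]$. But there is a structural gap in how you handle the star product, and your explanation of the $\lambda^{-2}$ term is not right.

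The star product is characterized by $\nabla(f*g)=\nabla f\cdot\nabla g$, so the paper does not compute $\partial_\lambda p_\alpha*\partial_\lambda p_\beta\,G^{\alpha\beta}$ directly; it computes its \emph{gradient} in canonical coordinates, obtaining
\[
\partial_{u_i}\Bigl(\partial_\lambda p_\alpha*\partial_\lambda p_\beta\,G^{\alpha\beta}\Bigr)
=-\frac{1}{4(\lambda-u_i)^3}-\sum_{j\ne i}\frac{V_{ij}^2}{(\lambda-u_i)^2(\lambda-u_j)},
\]
and then integrates in $u$. This leaves an undetermined integration ``constant'' $c(\lambda)$ depending only on $\lambda$. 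No amount of partial-fraction bookkeeping on the $(E-\lambda e)^{-1}$ factor will produce a pole at $\lambda=0$: that factor has poles only at $\lambda=u_i$, so your proposed mechanism for the $\tfrac{1}{2\lambda^2}\mathrm{tr}(\tfrac14-\mu^2)$ term cannot work.

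The paper fixes $c(\lambda)$ by a completely different argument: it invokes the identification of $\partial_\lambda p_\alpha*\partial_\lambda p_\beta\,G^{\alpha\beta}$ with the generating series $-2\sum a^{I,J}(\lambda)\Omega_{I,J}$ of Virasoro coefficients from \cite{GFM}, together with the vanishing $a_{-1}^{I,J}=a_0^{I,J}=0$, to conclude that the left-hand side is $O(\lambda^{-3})$ as $\lambda\to\infty$. Expanding the already-determined pole part at $\lambda=\infty$ and imposing this decay forces $c(\lambda)=\tfrac{1}{2\lambda^2}\mathrm{tr}(\tfrac14-\mu^2)$. This asymptotic input is the missing idea in your proposal; without it the $\lambda^{-2}$ term is simply not accessible from the local computation you describe.
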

\begin{proof}
Recall that the formula
\begin{equation}
  \nabla(f*g)=\nabla f\cdot\nabla g
\end{equation}
for the star product $*$ holds true \cite{normalform, GFM} for any smooth functions $f(v;\lmd)$ and $g(v;\lmd)$,
where $\nabla$ is the gradient with respect to the flat metric $\eta$,
and $``\cdot"$ is the multiplication of the Frobenius algebra.
Then by \eqref{230821-1052} we have
\begin{align*}
 & \nabla\left(\pfrac{p_\afa}{\lmd}*\pfrac{p_\beta}{\lmd}G^{\afa\beta}\right)
=
  \left(\nabla\pfrac{p_\afa}{\lmd}\right)\cdot
  \left(\nabla\pfrac{p_\beta}{\lmd}\right)G^{\afa\beta}\\
=&
  \sum_{i,j=1}^{n}
    \frac{1}{h_ih_j}
      \left(
        \pp{u_i}\pfrac{p_\afa}{\lmd}
      \right)
      \left(
        \pp{u_j}\pfrac{p_\beta}{\lmd}
      \right)G^{\afa\beta}
      \pp{u_i}\cdot\pp{u_j}\\
=&
  \sum_{i=1}^{n}
    \frac{1}{h_i}
      \left(
      \pfrac{\phi_{i\afa}}{\lmd}G^{\afa\beta}\pfrac{\phi_{i\beta}}{\lmd}
      \right)\pp{u_i}.
\end{align*}
On the other hand,
\[
\nabla\left(\pfrac{p_\afa}{\lmd}*\pfrac{p_\beta}{\lmd}G^{\afa\beta}\right)
= \sum_{i=1}^{n}
  \frac{1}{h_i}
  \pp{u_i}
  \left(
    \pfrac{p_\afa}{\lmd}*\pfrac{p_\beta}{\lmd}G^{\afa\beta}
  \right)\pp{u_i}.
\]
Therefore, by using the relations \eqref{ss. GM-3}, \eqref{ss. gram ortho} we obtain
\begin{align*}
 & \pp{u_i}
  \left(
    \pfrac{p_\afa}{\lmd}*\pfrac{p_\beta}{\lmd}G^{\afa\beta}
  \right)\\
=&
  \frac{1}{(\lmd-u_i)^2}
  \left(
    \frac12\phi_{i\afa}
   +\sum_{j\neq i}V_{ij}\phi_{j\afa}
  \right)G^{\afa\beta}
  \left(
    \frac12\phi_{i\beta}
   +\sum_{k\neq i}V_{ik}\phi_{k\beta}
  \right) \\
=&
  \frac{1}{(\lmd-u_i)^2}
  \left(
    \frac14\frac{1}{u_i-\lmd}
   +\sum_{j\neq i}\frac{V_{ij}^2}{u_j-\lmd}
  \right)\\
=&
  -\frac14\frac{1}{(\lmd-u_i)^3}
  -\sum_{j\neq i}\frac{V_{ij}^2}{(\lmd-u_i)^2(\lmd-u_j)}.
\end{align*}
Then due to \eqref{zh-1-3}, there exists a function $c=c(\lmd)$ such that
\begin{equation}
  \pfrac{p_\afa}{\lmd}*\pfrac{p_\beta}{\lmd}G^{\afa\beta}
 =
  -\frac18\sum_{i=1}^{n}\frac{1}{(\lmd-u_i)^2}
  -\sum_{i<j}\frac{V_{ij}^2}{(\lmd-u_i)(\lmd-u_j)}+c(\lmd).
\end{equation}
On the other hand, from \cite{GFM} we know that the left-hand side of the above formula can be represented as
\[
  \pfrac{p_\afa}{\lmd}
  *\pfrac{p_\beta}{\lmd}G^{\afa\beta}= -2\sum_{(i,p),(j,q)\in\mcalI}
  a^{i,p;j,q}(\lmd)\Omega_{i,p;j,q}(v),
\]
where
\[a^{i,p;j,q}(\lmd)=\sum_{m\geq -1}\frac{1}{\lmd^{m+2}}a_m^{i,p;j,q}\]
are the generating function of Virasoro coefficients, and $\Omega_{i,p;j,q}(v)$ are the two-point functions of the generalized Frobenius manifold.
Note that $a_{-1}^{i,p;j,q}=a_{0}^{i,p;j,q}=0$,
therefore
\[
  \pfrac{p_\afa}{\lmd}*\pfrac{p_\beta}{\lmd}G^{\afa\beta} = O\left(\frac{1}{\lmd^3}\right)\quad \text{when $\lmd\to\infty$},
\]
hence we obtain
\[
  c(\lmd)= \frac n{8\lmd^2}+\sum_{i<j}\frac{V_{ij}^2}{\lmd^2}
  =\frac{1}{2\lmd^2}\mathrm{tr}\left(\frac14-V^2\right)
  =\frac{1}{2\lmd^2}\mathrm{tr}\left(\frac14-\mu^2\right).
\]
The lemma is proved.
\end{proof}

\begin{rmk}
  The formula \eqref{s.s. star product} can be rewritten as
\begin{equation}\label{230821-1227}
  \pfrac{p_\afa}{\lmd}*\pfrac{p_\beta}{\lmd}G^{\afa\beta} =
-\frac18\mathrm{tr}
   \left(
     \frac{1}{\mcalU-\lmd I}
   \right)^2
+\frac12\mathrm{tr}
  \left(
    \frac{1}{\mcalU-\lmd I}\mu
  \right)^2
+\frac{1}{2\lmd^2}\mathrm{tr}\left(\frac14-\mu^2\right),
\end{equation}
which is canonical coordinate-free.
We conjecture that it holds true for a generalized Frobenius manifold without the semisimplicity condition.
\end{rmk}

\begin{thm}
The solution of the equation \eqref{loop equation for F1} is given by
\begin{equation}\label{genus one free energy}
  \mcalF_1=\mcalF_1(u,u_x)=
    \log\frac{\tau_I(u)}{J^{1/24}(u)}+\frac{1}{24}
    \sum_{i=1}^{n}\log u_{i,x},
\end{equation}
where $\tau_I$ is the isomonodromic tau function defined by \eqref{isomonodromic tau function}, and
\begin{equation}
  J(u)
 =\prod_{i=1}^{n}\sqrt{h_i}\,.
\end{equation}
\end{thm}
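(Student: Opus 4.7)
My strategy is to work in canonical coordinates and match principal parts at each singularity $\lmd=u_i$. By Corollary~\ref{cor: the form of Delta F} we have $\mcalF_1=\mcalF_1(v,v_x)$, so the $\veps^0$-coefficient of the loop equation~\eqref{loop equation-2308} reduces to the first-order PDE
\begin{equation*}
\pfrac{\mcalF_1}{v^\gamma}K^{\gamma,0}+\pfrac{\mcalF_1}{v^{\gamma,x}}K^{\gamma,1}=\tfrac12 G^{\afa\beta}\pfrac{p_\afa}{\lmd}*\pfrac{p_\beta}{\lmd}-\tfrac{1}{4\lmd^2}\mathrm{tr}\!\left(\tfrac14-\mu^2\right).
\end{equation*}
By the identity~\eqref{s.s. star product} the trace terms cancel and the right-hand side reduces to
\begin{equation*}
R(\lmd)=-\frac{1}{16}\sum_{i=1}^n\frac{1}{(\lmd-u_i)^2}-\frac12\sum_{i<j}\frac{V_{ij}^2}{(\lmd-u_i)(\lmd-u_j)},
\end{equation*}
a rational function of $\lmd$ whose poles lie only at the $u_i$. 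By Lemma~\ref{lemma:poles of K^gamma_i} the left-hand side is likewise rational in $\lmd$ with poles only there, so the PDE is equivalent to matching the principal part of both sides at each $\lmd=u_i$.

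First I would match the coefficients of $(\lmd-u_i)^{-2}$: the term $K^{\gamma,0}$ has only simple poles and so contributes nothing, while by Lemma~\ref{lemma:poles of K^gamma_i} the double-pole coefficient of $K^{\gamma,1}$ equals $-\tfrac{3}{2}\sqrt{h_i}\psi_i^\gamma u_{i,x}$. Transferring the derivative to canonical coordinates via~\eqref{zh-1-4} and using the identity $\sum_\gamma\psi_{j\gamma}\psi_i^\gamma=\delta_{ij}$ (immediate from~\eqref{ortho of Psi}), this matching yields
\begin{equation*}
\pfrac{\mcalF_1}{u_{i,x}}=\frac{1}{24\,u_{i,x}},\quad i=1,\dots,n,
\end{equation*}
which integrates to the $\tfrac{1}{24}\sum_i\log u_{i,x}$ summand of the proposed formula.

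Next I would match the coefficients of $(\lmd-u_i)^{-1}$. This requires the subleading expansion of $K^{\gamma,1}$ near $\lmd=u_i$, obtained by keeping one more term in the expansions of $\p^\gamma p_\afa$ and $\p_\lmd\p_x p_\beta$ used in Lemma~\ref{lemma:poles of K^gamma_i} via the Gauss-Manin equations~\eqref{ss. GM-1}--\eqref{ss. GM-3}, the orthogonality~\eqref{ss. gram ortho}, and the identity $V=[\Gamma,U]$ from~\eqref{zh-1-6}; combined with the simple pole of $K^{\gamma,0}$ and the residue $-H_i$ of $R(\lmd)$ at $\lmd=u_i$, the identity should collapse to
\begin{equation*}
\pfrac{\mcalF_1}{u_i}=H_i-\frac{1}{24}\pfrac{\log J}{u_i},\quad i=1,\dots,n.
\end{equation*}
Closedness of this 1-form in $u$ follows from the isomonodromic Hamiltonian system~\eqref{zh-1-3} and the Darboux-Egoroff equations for the rotation coefficients, so by~\eqref{isomonodromic tau function} it integrates to $\log\tau_I-\tfrac{1}{24}\log J$; together with the previous step this yields the claimed formula for $\mcalF_1$. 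The main obstacle is precisely this simple-pole calculation: tracking the quadratic-in-$V_{ij}$ terms and the contributions from the rotation coefficients $\gamma_{ij}$ and showing that only $H_i$ and the $\p_{u_i}\log J$ correction survive requires careful bookkeeping. Apart from that, the argument mirrors the corresponding genus-one computation of Dubrovin-Zhang~\cite{normalform} for usual Frobenius manifolds, since $\mcalF_1$ depends only on first-order jets and the non-flatness of the unit enters only through the specific form of $V$.
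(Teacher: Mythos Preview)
Your approach is essentially the same as the paper's: convert to canonical coordinates, use~\eqref{s.s. star product} to simplify the right-hand side, and match the principal parts at each $\lmd=u_i$. One small logical slip: Corollary~\ref{cor: the form of Delta F} does not tell you that $N'_1=1$; the paper instead invokes the uniqueness Theorem~\ref{thm:the uniqueness} to argue that it \emph{suffices} to exhibit a solution under the ansatz $\mcalF_1=\mcalF_1(v,v_x)$, and then verifies that the formula~\eqref{genus one free energy} works. With that fixed, your double-pole matching and the resulting $\pfrac{\mcalF_1}{u_{i,x}}=\tfrac{1}{24\,u_{i,x}}$ are exactly as in the paper, and the simple-pole step indeed collapses to $\pfrac{\mcalF_1}{u_i}=H_i-\tfrac{1}{24}\p_{u_i}\log J$ after the chain-rule cross terms from $\p_x(\psi_{i\gamma}/\sqrt{h_i})$ cancel between the $K^{\gamma,0}$ and $K^{\gamma,1}$ contributions; the paper does this bookkeeping explicitly rather than appealing to a subleading expansion of Lemma~\ref{lemma:poles of K^gamma_i}.
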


\begin{proof}
Due to Theorem \ref{thm:the uniqueness},
it suffices to solve the equation \eqref{loop equation for F1} under the assumption that $\mcalF_1$ only depends on $v$ and $v_x$, i.e. $\mcalF_1=\mcalF_1(v,v_x)$.
Then the equation \eqref{loop equation for F1} can be written as
\begin{equation}\label{loop equation for F1-new}
    \pfrac{\mcalF_1}{v^\gamma}K^{\gamma,0} +
    \pfrac{\mcalF_1}{v^\gamma_x}K^{\gamma,1}
  =
    -\frac{1}{16}\sum_{i=1}^{n}\frac{1}{(\lmd-u_i)^2}
    -\frac12\sum_{i<j}\frac{V_{ij}^2}{(\lmd-u_i)(\lmd-u_j)},
\end{equation}
where $K^{\gamma,0}$ and $K^{\gamma,1}$ are defined in \eqref{short expression of K^gamma_s}.
Note that
\begin{equation*}
  K^{\gamma,0} =
  \left(
    \frac{1}{E-\lmd e}
  \right)^\gamma
=
  \sum_{j=1}^{n}\sqrt{h_j}\psi_j^\gamma\frac{1}{u_j-\lmd},
\end{equation*}
then from \eqref{ortho of Psi},  \eqref{zh-1-4}, \eqref{p_i psi} and \eqref{def of rotation coef}
it follows that
\begin{align}
\pfrac{\mcalF_1}{v^\gamma}K^{\gamma,0}
=&
  \sum_{i=1}^{n}
    \left(
      \pfrac{\mcalF_1}{u_i}
      \pfrac{u_i}{v^\gamma}
     +\pfrac{\mcalF_1}{u_{i,x}}
      \p_x\left(\pfrac{u_i}{v^\gamma}\right)
    \right)
  \sum_{j=1}^{n}
    \left(
      \sqrt{h_j}\psi_{j}^\gamma\frac{1}{u_j-\lmd}
    \right)
\notag\\
=&
  -\sum_{i=1}^{n}\frac{1}{\lmd-u_i}\pfrac{\mcalF_1}{u_i}
  +\sum_{i=1}^{n}\frac{1}{\lmd-u_i}
     \sum_{j=1}^{n}
       \pfrac{\mcalF_1}{u_{j,x}}
       \sqrt{h_i}\psi_i^\gamma
       \frac{\psi_{j\gamma}}{(\sqrt{h_j})^2}
       \p_x\sqrt{h_j}
\notag\\
&
  -\sum_{i=1}^{n}\frac{1}{\lmd-u_i}
   \sum_{j=1}^{n}
     \pfrac{\mcalF_1}{u_{j,x}}
       \sqrt{h_i}\psi_i^\gamma
       \frac{\p_x(\psi_{j\gamma})}{\sqrt{h_j}}
\notag\\
=&
  \sum_{i=1}^{n}\frac{1}{\lmd-u_i}
  \left(
    -\pfrac{\mcalF_1}{u_i}
    +\pfrac{\mcalF_1}{u_{i,x}}\p_x\log\sqrt{h_i}
          \phantom{\sum_{k\neq j}^n}
  \right.
\notag\\
&
  \left.
    -\sum_{j=1}^{n}
      \pfrac{\mcalF_1}{u_{j,x}}\sqrt{h_i}\psi_i^\gamma\frac{1}{\sqrt{h_j}}
      \sum_{k\neq j}
        \gamma_{kj}\psi_{k\gamma}(u_{k,x}-u_{j,x})
  \right)
\notag\\
=&
  \sum_{i=1}^{n}\frac{1}{\lmd-u_i}
  \left(
    -\pfrac{\mcalF_1}{u_i}
    +\sum_{j=1}^{n}u_{j,x}\pfrac{\mcalF_1}{u_{j,x}}
     \pfrac{\log\sqrt{h_j}}{u_i}
  \right.
\notag\\
&
  \left.
    +\sum_{j\neq i}
     \left(
       \pfrac{\mcalF_1}{u_{i,x}}
       \pfrac{\log\sqrt{h_i}}{u_j}u_{j,x}
      -\pfrac{\mcalF_1}{u_{j,x}}
       \pfrac{\log\sqrt{h_j}}{u_i}u_{i,x}
     \right)
  \right).   \label{v gamma K 0 gamma}
\end{align}
Similarly, from \eqref{230821-1052}, \eqref{ss. GM-3} and \eqref{ss. gram ortho} it follows that
\begin{align}
K^{\gamma,1}=&
  2\p_x\left(\frac{1}{E-\lmd e}\right)^\gamma
  +(\p^\gamma p_\afa)G^{\afa\beta}
  \left(\pp\lmd\p_x p_\beta\right)
\notag\\
&=
  2\sum_{i=1}^{n}\p_x
   \left(
     \sqrt{h_i}\psi_i^\gamma\frac{1}{u_i-\lmd}
   \right)
  -\frac12\sum_{i,j=1}^{n}
    \frac{\psi_i^\gamma\sqrt{h_j}u_{j,x}}{\lmd-u_j}
    \phi_{i\afa}G^{\afa\beta}
    \left(
      \phi_{j\beta}+2\sum_{k\neq j}V_{jk}\phi_{k\beta}
    \right)
\notag\\
&=
  2\sum_{i=1}^{n}
    \p_x\left(
      \sqrt{h_i}\psi_i^\gamma
    \right)\frac{1}{u_i-\lmd}
 -\frac 32\sum_{i=1}^{n}
   \frac{\psi_i^\gamma\sqrt{h_i}u_{i,x}}{(\lmd-u_i)^2}
 -\sum_{i=1}^n\sum_{j\neq i}
    \frac{\psi_i^\gamma V_{ij}\sqrt{h_j}u_{j,x}}{(\lmd-u_i)(\lmd-u_j)}. \notag
\end{align}
Then by using the relations \eqref{zh-1-5}, \eqref{zh-1-6}
and $\pfrac{u_{i,x}}{v^\gamma_x}=\pfrac{u_i}{v^\gamma}=\frac{\psi_{i\gamma}}{\sqrt{h_i}}$ we obtain
\begin{align}
  \pfrac{\mcalF_1}{v_x^\gamma}K^{\gamma,1}
=&
  2\sum_{i,j=1}^{n}
    \pfrac{\mcalF_1}{u_{i,x}}
    \frac{\psi_{i\gamma}}{\sqrt{h_i}}
    \p_x\left(\sqrt{h_j}\psi_j^\gamma\right)\frac{1}{u_j-\lmd}
  -\frac 32\sum_{i=1}^{n}
   \frac{u_{i,x}}{(\lmd-u_i)^2}\pfrac{\mcalF_1}{u_{i,x}}
\notag\\
&
  -\sum_{i=1}^{n}
     \sum_{j\neq i}
       \frac{V_{ij}}{(\lmd-u_i)(\lmd-u_j)}
       \frac{\sqrt{h_j}}{\sqrt{h_i}}
       \pfrac{\mcalF_1}{u_{i,x}}u_{j,x}
\notag\\
=&
  -2\sum_{i=1}^{n}
    \frac{1}{\lmd-u_i}
    \sum_{j=1}^{n}
      \pfrac{\mcalF_1}{u_{j,x}}
      \frac{\psi_{j\gamma}}{\sqrt{h_j}}
      \p_x\left(\sqrt{h_i}\psi_i^\gamma\right)
  -\frac32\sum_{i=1}^{n}
    \frac{u_{i,x}}{(\lmd-u_i)^2}\pfrac{\mcalF_1}{u_{i,x}}
\notag\\
&
  +\sum_{i=1}^{n}\frac{1}{\lmd-u_i}
   \sum_{j\neq i}
     \left(
       \pfrac{\mcalF_1}{u_{i,x}}
       \pfrac{\log\sqrt{u_i}}{u_j}u_{j,x}
      -\pfrac{\mcalF_1}{u_{j,x}}
       \pfrac{\log\sqrt{u_j}}{u_i}u_{i,x}
     \right). \notag
\end{align}
It can be verified that for each $1\leq i\leq n$ we have
\begin{align*}
&
  \sum_{j=1}^{n}
    \pfrac{\mcalF_1}{u_{j,x}}
      \frac{\psi_{j\gamma}}{\sqrt{h_j}}
      \p_x\left(\sqrt{h_i}\psi_i^\gamma\right)
\\
=&
  \sum_{j=1}^{n}
    \pfrac{\mcalF_1}{u_{j,x}}
    \pfrac{\log\sqrt{h_j}}{u_i}u_{j,x}
+\sum_{j\neq i}
  \left(
    \pfrac{\mcalF_1}{u_{i,x}}
    \pfrac{\log\sqrt{h_i}}{u_j}u_{j,x}
   -\pfrac{\mcalF_1}{u_{j,x}}
    \pfrac{\log\sqrt{h_j}}{u_i}u_{i,x}
  \right),
\end{align*}
therefore
\begin{align}
\pfrac{\mcalF_1}{v_x^\gamma}K^{\gamma,1}
=&
  -\frac32\sum_{i=1}^{n}
    \frac{u_{i,x}}{(\lmd-u_i)^2}
    \pfrac{\mcalF_1}{u_{i,x}}
  -2\sum_{i=1}^{n}
    \frac{1}{\lmd-u_i}
    \sum_{j=1}^{n}
      u_{j,x}\pfrac{\mcalF_1}{u_{j,x}}
      \pfrac{\log\sqrt{h_j}}{u_i}
\notag\\
&
  -\sum_{i=1}^{n}\frac{1}{\lmd-u_i}
   \sum_{j\neq i}
    \left(
    \pfrac{\mcalF_1}{u_{i,x}}
    \pfrac{\log\sqrt{h_i}}{u_j}u_{j,x}
   -\pfrac{\mcalF_1}{u_{j,x}}
    \pfrac{\log\sqrt{h_j}}{u_i}u_{i,x}
    \right).  \label{v gamma s K 1 gamma}
\end{align}
Due to \eqref{v gamma K 0 gamma} and \eqref{v gamma s K 1 gamma},
the equation \eqref{loop equation for F1-new} is equivalent to
\begin{align}
&\,
  \frac32\sum_{i=1}^{n}
   \frac{u_{i,x}}{(\lmd-u_i)^2}
     \pfrac{\mcalF_1}{u_{i,x}}
  +\sum_{i=1}^{n}\frac{1}{\lmd-u_i}
   \left(
     \pfrac{\mcalF_1}{u_i}
    +\sum_{j=1}^{n}
      u_{j,x}\pfrac{\mcalF_1}{u_{j,x}}
      \pfrac{\log\sqrt{h_j}}{u_i}
   \right)
\notag\\
=&\,
  \frac{1}{16}\sum_{i=1}^{n}\frac{1}{(\lmd-u_i)^2}
  +\sum_{i=1}^{n}\frac{1}{\lmd-u_i}
   \left(
     \frac12\sum_{j\neq i}\frac{V_{ij}^2}{u_i-u_j}
   \right).
\end{align}
Comparing the coefficients of $\frac{1}{(\lmd-u_i)^2}$ and $\frac{1}{\lmd-u_i}$, we arrive at
\begin{align}
  \pfrac{\mcalF_1}{u_{i,x}}
&=\,
  \frac{1}{24}\frac{1}{u_{i,x}}\\ 
  \pfrac{\mcalF_1}{u_i}
&=\,
  -\sum_{j=1}^{n}u_{j,x}\pfrac{\mcalF_1}{u_{j,x}}
   \pfrac{\log\sqrt{h_j}}{u_i}
  +\frac12\sum_{j\neq i}\frac{V_{ij}}{u_i-u_j} \notag\\
&=\,
  -\frac{1}{24}\pp{u_i}
  \left(
    \sum_{j=1}^{n}\log\sqrt{h_j}
  \right)+H_i.
\end{align}
Therefore $\mcalF_1$ is of the form \eqref{genus one free energy}.
The theorem is proved.
\end{proof}

\subsection{The existence theorem}

We are now to prove the existence of a solution to the loop equation \eqref{loop equation-2308} in the form
\[
  \Delta\mcalF=\sum_{g\geq 1}\veps^{2g-2}\mcalF_g(v,v_x,...,v^{(3g-2)}).
\]
We are to prove the existence of $\mcalF_g$ by induction on $g$.
The genus one free energy $\mcalF_1$ is already given in \eqref{genus one free energy}.
For $g\geq 2$, suppose $\mcalF_{g'}$ has already been solved for each $1\leq g'\leq g-1$,
then from the $\veps^{2g-2}$-coefficients of \eqref{loop equation-2308} we know that $\mcalF_g$ satisfies the following equation:
\begin{align}
  \sum_{s=0}^{3g-2}
    \pfrac{\mcalF_g}{v^{\gamma,s}}K^{\gamma,s}
  =&
    \frac12\sum_{k,l=0}^{3g-5}
      \left(
        \pp{v^{\afa,k}}
        \left(
        \pfrac{\mcalF_{g-1}}{v^{\beta,l}}
        \right)
       \right)
    (\p_x^{k+1}\p^{\afa} p_{\omg})G^{\omg\rho}(\p_x^{l+1}\p^\beta p_\rho)
\notag\\
&+\frac12
        \sum_{{g',g''\geq 1\atop g'+g''=g}}
        \sum_{k=0}^{3g'-2}
        \sum_{l=0}^{3g''-2}
          \pfrac{\mcalF_{g'}}{v^{\afa,k}}
          \pfrac{\mcalF_{g''}}{v^{\beta,l}}
(\p_x^{k+1}\p^{\afa} p_{\sigma})G^{\sigma\rho}(\p_x^{l+1}\p^\beta p_\rho)
\notag\\
&
  +\frac12\sum_{k=0}^{3g-5}\pfrac{\mcalF_{g-1}}{v^{\afa,k}}
   \p_x^{k+1}
     \left[
       \nabla\pfrac{p_\sigma}{\lmd}
       \cdot
       \nabla\pfrac{p_\rho}{\lmd}\cdot v_x
     \right]^{\afa}G^{\sigma\rho}.  \label{linear system of H g afa s}
\end{align}
The above equation can be regarded as a linear system for the unknowns
\begin{equation}
  H_{g;\gamma,s}:=\pfrac{\mcalF_g}{v^{\gamma,s}},\quad 1\leq\gamma\leq n,\, 0\leq s\leq 3g-2.
\end{equation}
We are to show the existence of $H_{g;\gamma,s}$,
and then prove that the solutions $H_{g;\gamma,s}$ satisfy the following compatibility condition:
\begin{equation}\label{compatibility condition for H gas}
  \pfrac{H_{g;\afa,r}}{v^{\beta,s}}
=\pfrac{H_{g;\beta,s}}{v^{\afa,r}}.
\end{equation}

\begin{lem}\label{lemma: the H_g;gamma s}
The linear system \eqref{linear system of H g afa s} for the unknowns
$H_{g;\gamma,s}$ have a unique solution
\[H_{g;\gamma,s}\in C^{\infty}(M)\left[u_x,\frac{1}{u_x};u_{xx},u_{xxx},...,u^{(3g-s-1)}\right]\]
for $g\ge 1, 1\le\gamma\le n, 0\le s\le 3g-2$.
\end{lem}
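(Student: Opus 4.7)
The plan is induction on $g$. Uniqueness is immediate from Lemma \ref{lem: a uniqueness lemma}: any two solutions differ by a family $(B_{\gamma,s})$ with $\sum_s B_{\gamma,s}K^{\gamma,s}\equiv 0$ in $\lmd$, forcing $B_{\gamma,s}\equiv 0$. The base case $g=1$ is the explicit \eqref{genus one free energy}, whose partial derivatives depend only on $u$ and $u_x$ and hence lie trivially in the asserted ring for every $s$. For $g\ge 2$ we assume that $\mcalF_{g'}$, and hence $H_{g';\gamma',s'}$, have been constructed for $1\le g'\le g-1$ with the claimed jet-order bound, so that the RHS of \eqref{linear system of H g afa s} is a known rational function in $\lmd$ whose coefficients are differential polynomials in the canonical coordinates.

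The existence proof rests on a pole-matching argument at each $\lmd=u_i$. By Lemma \ref{lemma:poles of K^gamma_i} the LHS $\sum_{s=0}^{3g-2}H_{g;\gamma,s}K^{\gamma,s}$ has pole of order at most $3g-1$ at $\lmd=u_i$, with leading coefficient
\[
 -\frac{(6g-3)!!}{2^{3g-2}}\sqrt{h_i}\,\psi_i^\gamma(u_{i,x})^{3g-2}H_{g;\gamma,3g-2}.
\]
To show the RHS satisfies the same bound, expand every factor $\p_x^{k+1}\p^\afa p_\sigma$ via \eqref{230821-1052} and \eqref{pole lemma}, and then use the orthogonality relation \eqref{ss. gram ortho} to contract $\phi_{i\omg}G^{\omg\rho}\phi_{j\rho}$ into $\delta_{ij}/(u_i-\lmd)$ plus subleading pieces. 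Combined with the induction hypothesis that $H_{g-1;\beta,l}$ only depends on $u$-jets of order $\le 3g-l-4$ (so that effectively only pairs $(k,l)$ with $k+l\le 3g-4$ contribute to the first term of the RHS), the diagonal contributions carry pole of order at most $k+l+3\le 3g-1$ at $\lmd=u_i$, matching the LHS; the second term of the RHS is bounded directly by $(3g'-1)+(3g''-1)+1=3g-1$. Matching coefficients of $1/(\lmd-u_i)^{3g-1},\,1/(\lmd-u_i)^{3g-2},\dots$ in succession and invoking the nondegeneracy of $(\psi_i^\gamma)$ at each step (as in Lemma \ref{lem: a uniqueness lemma}), one solves for $H_{g;\gamma,3g-2}$, then $H_{g;\gamma,3g-3}$, and so on, each inversion producing the appropriate power of $1/u_{i,x}$ and placing $H_{g;\gamma,s}$ in the asserted ring. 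The constant-in-$\lmd$ part is automatically consistent because both sides are $O(1/\lmd^2)$ as $\lmd\to\infty$.

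The main obstacle is the joint pole-and-jet bookkeeping in two respects: (i) verifying, with the aid of \eqref{ss. gram ortho} together with the inductive jet bound, that the RHS genuinely has pole order $\le 3g-1$ at each $\lmd=u_i$ rather than the naive order $6g-7$ one would read off without exploiting both ingredients; and (ii) verifying that $H_{g;\gamma,s}$ ends up precisely in $C^{\infty}(M)[u_x,1/u_x;u_{xx},\dots,u^{(3g-s-1)}]$, which requires tracking how the highest-jet contributions to the $1/(\lmd-u_i)^{s+1}$-coefficient arise from the subleading parts (in $1/(\lmd-u_i)$) of the $K^{\gamma,s'}$ with $s'>s$ that have already been subtracted in prior steps of the recursion. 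Both points follow the strategy of \cite{normalform} for usual Frobenius manifolds, adapted here by incorporating the extra Gram-matrix term in \eqref{short expression of K^gamma_s} coming from the non-flatness of the unit.
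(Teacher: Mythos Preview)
Your proposal is correct and follows essentially the same approach as the paper: induction on $g$, bounding the pole order of the RHS at each $\lmd=u_i$ by $3g-1$ via the induction hypothesis (the observation that $\partial H_{g-1;\beta,l}/\partial v^{\afa,k}=0$ for $k+l\ge 3g-3$) together with \eqref{ss. gram ortho}, then solving for the $H_{g;\gamma,s}$ by matching successive Laurent coefficients using the nondegeneracy of $(\psi_i^\gamma)$. The paper packages your recursive extraction as inverting a block upper-triangular matrix $\mcalQ_g$ whose diagonal blocks are precisely the leading-coefficient matrices of Lemma \ref{lemma:poles of K^gamma_i}, and records the intermediate bounds $Q^{\gamma,s}_{i,k}\in C^\infty(M)[u_x,\dots,u^{(s-k+2)}]$ and $P_{g;i,k}\in C^\infty(M)[u_x,1/u_x;\dots,u^{(3g-k)}]$ that settle your obstacle (ii).
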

\begin{proof}
We prove the lemma by induction on $g$. The existence of a unique solution of \eqref{linear system of H g afa s} for the $g=1$ case is already proved in the last subsection.
Now let $g\geq 2$, and suppose that the linear systems \eqref{linear system of H g afa s} for all $1\le g'<g$ have already been solved,
  then the linear system \eqref{linear system of H g afa s}
  of the unknowns $H_{g;\gamma,s}, 1\leq\gamma\leq n,\, 0\leq s\leq 3g-2$ reads
\begin{align}
&\sum_{s=0}^{3g-2}
  H_{g;\gamma,s}K^{\gamma,s}
  =
    \frac12\sum_{k,l=0}^{3g-5}
      \pfrac{H_{g-1;\beta,l}}{v^{\afa,k}}
    (\p_x^{k+1}\p^{\afa} p_{\sigma})G^{\sigma\rho}(\p_x^{l+1}\p^\beta p_\rho)
\notag\\
&
     \quad +\frac12
        \sum_{{g',g''\geq 1\atop g'+g''=g}}
        \sum_{k=0}^{3g'-2}
        \sum_{l=0}^{3g''-2}
          H_{g';\afa,k}
          H_{g'';\beta,l}
(\p_x^{k+1}\p^{\afa} p_{\sigma})G^{\sigma\rho}(\p_x^{l+1}\p^\beta p_\rho)
\notag\\
&
 \quad  +\frac12\sum_{k=0}^{3g-5}
   H_{g-1;\afa,k}
   \p_x^{k+1}
     \left[
       \nabla\pfrac{p_\sigma}{\lmd}
       \cdot
       \nabla\pfrac{p_\rho}{\lmd}\cdot v_x
     \right]^{\afa}G^{\sigma\rho}.  \label{new linear system}
\end{align}
Both sides of \eqref{new linear system} are meromorphic functions in $\lmd$
which vanish at $\lmd\to\infty$,
and all the possible poles of them are at $u_1,\dots,u_n$.
Due to Lemma \ref{lemma:poles of K^gamma_i},
the functions $K^{\gamma,s}$ can be written in the form
\begin{equation}
K^{\gamma,s}=
  \sum_{i=1}^{n}
    \sum_{k=1}^{s+1}
      \frac{Q^{\gamma,s}_{i,k}}{(\lmd-u_i)^k},
\end{equation}
where $Q^{\gamma,s}_{i,k}$ are functions on the jet space.
Using \eqref{ss. GM-1}--\eqref{ss. GM-3} and \eqref{ss. gram ortho},
it can be verified that
\begin{equation}\label{the space containing Q}
  Q^{\gamma,s}_{i,k}\in C^{\infty}(M)\left[u_x,u_{xx},...,u^{(s-k+2)}\right]
\end{equation}
by a straightforward counting of the degrees of $\frac{1}{\lmd-u_i}$ and the jet variables. Thus we have
\begin{equation}\label{230824-1657}
  \sum_{s=0}^{3g-2}H_{g;\gamma,s}K^{\gamma,s}
=
  \sum_{i=1}^{n}\sum_{k=0}^{3g-2}
    \frac{1}{(\lmd-u_i)^{k+1}}
      \sum_{s=k}^{3g-2}
        H_{g;\gamma,s}Q^{\gamma,s}_{i,k+1}.
\end{equation}
By using the following formulae of change of coordinates on the jet spaces \cite{normalform}:
\begin{equation}\label{change jet variables}
  \pfrac{u_i^{(p)}}{v^{\gamma,r}}=
  {p\choose r}\p_x^{p-r}\pfrac{u_i}{v^\gamma}
\end{equation}
and by using the induction hypothesis we have
\begin{equation}
  \pfrac{H_{g-1;\beta,l}}{v^{\afa,k}}=0\quad\text{if $k+l\geq 3g-3$}.
\end{equation}
Then by \eqref{ss. GM-1}--\eqref{ss. GM-3} and \eqref{ss. gram ortho},
the order of poles $\lmd=u_i$ in the right-hand side of \eqref{new linear system} is at most $3g-1$, therefore
\begin{equation}\label{230824-1658}
\text{RHS of \eqref{new linear system}}=
  \sum_{i=1}^{n}\sum_{k=1}^{3g-1}
    \frac{P_{g;i,k}}{(\lmd-u_i)^k}
\end{equation}
for some functions $P_{g;i,k}$ on the jet space $J^\infty(M)$.
It follows from our induction hypothesis that
\begin{equation}\label{the space containing P}
  P_{g;i,k}\in C^\infty(M)
    \left[
      u_x,\frac{1}{u_x};u_{xx},u_{xxx},...,u^{(3g-k)}
    \right].
\end{equation}

Due to \eqref{230824-1657}, \eqref{230824-1658},
the equation \eqref{new linear system} for the unknowns $H_{g;\gamma,s}$ can be rewritten as
\begin{equation}
  \sum_{i=1}^{n}
  \sum_{k=0}^{3g-2}
    \frac{1}{(\lmd-u_i)^{k+1}}
    \left(
      \sum_{s=k}^{3g-2}H_{g;\gamma,s}Q^{\gamma,s}_{i,k+1} - P_{g;i,k+1}
    \right)=0,
\end{equation}
therefore it
is equivalent to the linear system
\begin{equation}\label{linear system III}
  \sum_{s=k}^{3g-2}H_{g;\gamma,s}Q^{\gamma,s}_{i,k+1}=P_{g;i,k+1},\quad 1\leq i\leq n,\, 0\leq k\leq 3g-2.
\end{equation}
Introduce the $1\times n$ row vectors
\begin{equation*}
  \bsH_{g;s}:=(H_{g;1,s},...,H_{g;n,s}),\quad
  \bsP_{g;k}:=(P_{g;1,k},...,P_{g;n,k}),
\end{equation*}
and the $n\times n$ matrices $\bsQ^s_{k}$ with entries
\begin{equation*}
  (\bsQ^s_{k})_i^\gamma := Q_{i,k}^{\gamma,s},
\end{equation*}
then the system \eqref{linear system III} can be represented in the form
\begin{equation}
\left(
  \bsH_{g;3g-2}, \bsH_{g;3g-3},...,\bsH_{g;0}
\right) \mcalQ_g =
 \left(
  \bsP_{g;3g-1}, \bsP_{g;3g-2},...,\bsP_{g;1}
\right),
\end{equation}
where
\begin{equation}
  \mcalQ_g :=
  \begin{pmatrix}
    \bsQ^{3g-2}_{3g-1} & \bsQ^{3g-2}_{3g-2} & \cdots & \bsQ^{3g-2}_{1} \\
      & \bsQ^{3g-3}_{3g-2} & \cdots & \bsQ^{3g-3}_{1} \\
      &   & \ddots & \vdots \\
      &   &  & \bsQ^{0}_{1}
  \end{pmatrix}
\end{equation}
is a block upper triangular matrix. Due to Lemma \ref{lemma:poles of K^gamma_i},
the diagonal blocks of $\mcalQ_g$ are given by
\begin{equation}
(\bsQ^{k}_{k+1})_i^\gamma = -\frac{(2k+1)!!}{2^k}(u_{i,x})^k\sqrt{h_i}\psi^\gamma_i,\quad 0\leq k\leq 3g-2,
\end{equation}
therefore $\bsQ^k_{k+1}$ is invertible,
and all entries of $(\bsQ^k_{k+1})^{-1}$ belong to $C^\infty(M)[\frac{1}{u_x}]$
for $0\leq k\leq 3g-2$. Hence $\mcalQ_g$ is invertible,
and the unknowns $H_{g;\gamma,s}$
are uniquely determined by
\[
  \left(
  \bsH_{g;3g-2}, \bsH_{g;3g-3},...,\bsH_{g;0}
\right) :=
 \left(
  \bsP_{g;3g-1}, \bsP_{g;3g-2},...,\bsP_{g;1}
\right)\mcalQ_g^{-1}.
\]
Moreover, from \eqref{the space containing Q},\eqref{the space containing P}
we know that
\[H_{g;\gamma,s}\in C^{\infty}(M)\left[u_x,\frac{1}{u_x};u_{xx},u_{xxx},...,u^{(3g-s-1)}\right]\]
for all $1\leq\gamma\leq n$ and $0\leq s\leq 3g-2$.
The lemma is proved.
\end{proof}

Now let us proceed to prove the compatibility condition \eqref{compatibility condition for H gas}.
We first rewrite
the identity \eqref{Lm tau 2309} as follows:
\begin{equation} \label{230903-1526}
  L_m(\veps^{-1}\bft,\veps\pp{\bft})\tau =
  \left(
    \sum_{g\geq 1}
    \mcalA_{m,g}\veps^{2g-2}
  \right)\tau +\pp{s_m}\tau,
\end{equation}
where $\mcalA_{m,g}\in C^\infty(J^\infty(M))$ are of the form
\begin{equation}\label{def of fai mg}
  \mcalA_{m,g} = K_m\mcalF_g+\fai_{m,g}
\end{equation}
with
\[\fai_{m,g}=\fai_{m,g}(\mcalF_1,\mcalF_2,...,\mcalF_{g-1})\in C^\infty(J^\infty(M)),\]
i.e., $\fai_{m,g}$ only depends on the lower-genus terms $\mcalF_{g'}\, (g'\leq g-1)$.
More precisely, $-\fai_{m,g}$ equals the right-hand side of \eqref{linear system of H g afa s} for $g\geq 2$,
and the loop equation \eqref{linear system of H g afa s} for $\mcalF_g$ is equivalent to
\begin{equation}
  \mcalA_{m,g}=0,\quad m\geq -1.
\end{equation}
\begin{lem}
Let $M$ be an arbitrary generalized Frobenius manifold, and $g\geq 2$. Suppose there are functions $\mcalF_1,\mcalF_2,...,\mcalF_{g-1}$
satisfying $\mcalA_{m,g'}=0$ for $1\leq g'\leq g-1$ and $m\geq -1$.
Then the functions $\fai_{m,g}, m\geq -1$ defined by \eqref{def of fai mg} satisfy the relations
\begin{equation}\label{Km fai mg}
K_m\fai_{m',g}-K_{m'}\fai_{m,g}=(m-m')\fai_{m+m',g}, \quad m,m'\geq -1,
\end{equation}
where the vector fields $K_m$ on the jet space $J^\infty(M)$ are defined by \eqref{vector field K_m}.
\end{lem}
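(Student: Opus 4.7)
My approach is to obtain the claimed identity as the coefficient of $\veps^{2g-2}$ in an operator identity produced by applying the full Virasoro relation \eqref{Vir commu for Lm} to the tau function $\tau=e^{\mcalF}$ built from the partial data supplied by the hypothesis. I will use the ansatz $\mcalF=\veps^{-2}f+\sum_{g'\leq g-1}\veps^{2g'-2}\mcalF_{g'}$, so that the coefficient of $\veps^{2g'-2}$ in $R_m:=L_m\tau/\tau-\pp{s_m}\mcalF=\sum_{g'\geq 1}\mcalA_{m,g'}\veps^{2g'-2}$ vanishes for $g'\leq g-1$ by hypothesis and equals $\fai_{m,g}$ for $g'=g$. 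A Virasoro-type relation for $\fai_{m,g}$ should then emerge by reading off the $\veps^{2g-2}$-coefficient.

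First I would rewrite $[L_m,L_{m'}]\tau=(m-m')L_{m+m'}\tau$ as an identity between functions. Writing $G_m:=L_m\tau/\tau$, a Leibniz computation shows that for any $\Phi\in\mathscr{A}$,
\[
\frac{L_m(\tau\Phi)}{\tau}=G_m\Phi+\tilde M_m\Phi,\qquad \tilde M_m:=\veps^2 a_m^{I,J}\p_I\p_J+2\veps^2 a_m^{I,J}\mcalF_I\p_J+b^J_{m;I}t^I\p_J,
\]
and antisymmetrising in $(m,m')$ yields $\tilde M_m G_{m'}-\tilde M_{m'}G_m=(m-m')G_{m+m'}$.

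Next I would substitute $G_m=R_m+\pp{s_m}\mcalF$ from \eqref{230903-1526}. The delicate ingredient here is the clean swap
\[
\tilde M_m\pp{s_{m'}}\mcalF=\pp{s_{m'}}G_m,
\]
which is verified by direct expansion: the $t$-dependent and constant pieces of $G_m$ are killed by $\pp{s_{m'}}$ thanks to \eqref{p tI p sm}, while the symmetry $a_m^{I,J}=a_m^{J,I}$ ensures that $\pp{s_{m'}}(\mcalF_I\mcalF_J)$ matches the cross term produced by the $2\veps^2 a_m^{I,J}\mcalF_I\p_J$ part of $\tilde M_m$. Together with the Virasoro relation \eqref{Vir commu for pp sm}, this reduces the previous display to
\[
\tilde{\mcalK}_m R_{m'}-\tilde{\mcalK}_{m'}R_m=(m-m')R_{m+m'},\qquad \tilde{\mcalK}_m:=\tilde M_m-\pp{s_m}.
\]

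Finally, using $\veps^2\mcalF_I=f_I+\sum_{h\geq 1}\veps^{2h}\pfrac{\mcalF_h}{t^I}$ I would split
\[
\tilde{\mcalK}_m=\mcalK_m+\veps^2 a_m^{I,J}\p_I\p_J+2\sum_{h\geq 1}\veps^{2h}a_m^{I,J}\pfrac{\mcalF_h}{t^I}\p_J,
\]
with $\mcalK_m=\mcalD_m-\pp{s_m}$, and extract the coefficient of $\veps^{2g-2}$. The induction hypothesis $\mcalA_{m,g'}=0$ for $1\leq g'\leq g-1$ kills every contribution from $\veps^2 a_m^{I,J}\p_I\p_J R_{m'}$ and from $\veps^{2h}a_m^{I,J}\pfrac{\mcalF_h}{t^I}\p_J R_{m'}$, since each extracts the coefficient of $\veps^{2g'-2}$ in $R_{m'}$ for some $1\leq g'\leq g-1$. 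On the $\mcalK_m$-piece, Proposition \ref{prop: vector field Km} gives $\mcalK_m\fai_{m',g}=K_m\fai_{m',g}$ since $\fai_{m',g}\in C^\infty(J^\infty(M))$, and what survives is precisely \eqref{Km fai mg}. The main obstacle is the clean-swap identity of Step 2; once that is in hand the rest is straightforward $\veps$-power bookkeeping, fully controlled by the induction hypothesis.
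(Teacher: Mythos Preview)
Your proposal is correct and follows essentially the same strategy as the paper's proof: both apply the Virasoro commutation relation $[L_m^{[\veps]},L_{m'}^{[\veps]}]=(m-m')L_{m+m'}^{[\veps]}$ to $\tau=\exp\bigl(\veps^{-2}f+\sum_{g'\leq g-1}\veps^{2g'-2}\mcalF_{g'}\bigr)$ (setting $\mcalF_{g'}\equiv 0$ for $g'\geq g$ so that $\mcalA_{m,g}=\fai_{m,g}$) and then extract the $\veps^{2g-2}$-coefficient. Your clean-swap identity $\tilde M_m(\pp{s_{m'}}\mcalF)=\pp{s_{m'}}G_m$ is precisely the content of the paper's use of $[L_m,\pp{s_{m'}}]=0$ from \eqref{[Lm, sm]=0}, and your decomposition $\tilde{\mcalK}_m=\mcalK_m+\veps^2 a_m^{I,J}\p_I\p_J+\cdots$ corresponds to the paper's direct expansion of $L_m^{[\veps]}L_{m'}^{[\veps]}\tau$; the only difference is organisational---you package the computation in the intermediate operators $\tilde M_m,\tilde{\mcalK}_m$, whereas the paper expands the product $L_m^{[\veps]}L_{m'}^{[\veps]}\tau$ term by term.
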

\begin{proof}
Since $\mcalA_{m,g'}=0$ for $g'\leq g-1$, the identity \eqref{230903-1526} for
\[
  \Delta\mcalF=\sum_{k=1}^{g-1}\veps^{2k-2}\mcalF_k, \quad \tau=\exp\left(\veps^{-2}f+\Delta\mcalF\right)
\]
(i.e., choose $\mcalF_k\equiv 0$ for all $k\geq g$) has the form
\[
  L_{m'}^{[\veps]}\tau =
  \left(
    \sum_{k\geq g}\mcalA_{m',k}\veps^{2k-2}
  \right)\tau + \frac{\p\tau}{\p s_{m'}},\quad m'\geq -1,
\]
here we use the short notation $L_m^{[\veps]}:=L_m\left(\veps^{-1}\bft,\veps\pp{\bft}\right)$.
By using \eqref{[Lm, sm]=0}, \eqref{def of mcalD m} and Proposition \ref{prop: vector field Km} we obtain
\begin{align*}
L_m^{[\veps]}L_{m'}^{[\veps]}\tau =&
  \left(
    \sum_{k\geq g}\mcalA_{m',k}\veps^{2k-2}
  \right)\left(L_m^{[\veps]}\tau\right)
 +\left(
    \sum_{k\geq g}a_m^{I,J}\frac{\p^2\mcalA_{m',k}}{\p t^I\p t^J}\veps^{2k}
  \right)\tau
\\&
 +2\left(
    \sum_{k\geq g}a_m^{I,J}\pfrac{\mcalA_{m',k}}{t^I}\veps^{2k}
   \right)\left(
     \veps^{-2}f_J+\sum_{k\geq 1}\veps^{2k-2}\pfrac{\mcalF_k}{t^J}
   \right)\tau
\\&
  +\sum_{k\geq g}
    \left(
      b_{m;J}^I\pfrac{\mcalA_{m',k}}{t^I}t^J\veps^{2k-2}
    \right)\tau
  +\pp{s_{m'}}\left(L_m^{[\veps]}\tau\right)
\\=&
  \left(
    \sum_{k\geq g}\mcalA_{m',k}\veps^{2k-2}
  \right)\left(
    \sum_{k\geq g}\mcalA_{m,k}\veps^{2k-2}
  \right)\tau
 +\left(\sum_{k\geq g}\mcalA_{m',k}\veps^{2k-2}\right)\pp{s_m}\tau
\\&
 +\Big(\mcalD_m\mcalA_{m',g}\veps^{2g-2}+O(\veps^{2g})\Big)\tau
 +\pp{s_{m'}}
  \left(
    \left(
      \sum_{k\geq g}\mcalA_{m,k}\veps^{2k-2}
    \right)\tau +\pp{s_m}\tau
  \right)\\
=&
  \left(
    \sum_{k\geq g}\mcalA_{m',k}\veps^{2k-2}
  \right)\left(
    \sum_{k\geq g}\mcalA_{m,k}\veps^{2k-2}
  \right)\tau
\\&
  +\left(\sum_{k\geq g}\mcalA_{m',k}\veps^{2k-2}\right)\pp{s_m}\tau
  +\left(\sum_{k\geq g}\mcalA_{m,k}\veps^{2k-2}\right)\pp{s_{m'}}\tau
  \\ &+\left(
     \pp{s_m}\mcalA_{m',g}+\pp{s_{m'}}\mcalA_{m,g}
   \right)\tau
  +\left(
    K_m\mcalA_{m',g}\veps^{2g-2}+O(\veps^{2g})
   \right)\tau + \pp{s_{m'}}\pp{s_m}\tau,
\end{align*}
here $m,m'\geq -1$. Note that we choose $\mcalF_g\equiv 0$ here,
therefore \eqref{def of fai mg} yields $\mcalA_{m,g}=\fai_{m,g}$ for $m\geq -1$.
Thus from \eqref{Vir commu for pp sm} we obtain
\begin{align*}
  \left[
    L_{m}^{[\veps]}, L_{m'}^{[\veps]}
  \right]\tau &=\,
  \Big(
    (K_m\fai_{m',g}-K_{m'}\fai_{m,g})\veps^{2g-2}+O(\veps^{2g})
  \Big)\tau + (m-m')\pp{s_{m+m'}}\tau.
\end{align*}
On the other hand,
\begin{align*}
  \left[
    L_{m}^{[\veps]}, L_{m'}^{[\veps]}
  \right]\tau
&=
  (m-m')L^{[\veps]}_{m+m'}\tau \\
&=
 \left(
   (m-m')\fai_{m+m',g}\veps^{2g-2}+O(\veps^{2g})
 \right)\tau + (m-m')\pp{s_{m+m'}}\tau,
\end{align*}
hence the lemma is proved.
\end{proof}

\begin{thm}\label{thm-existence}
 Let $M$ be a semisimple generalized Frobenius manifold, then the functions $H_{g;\gamma,s}$ on $J^\infty(M)$
 determined by Lemma \ref{lemma: the H_g;gamma s} satisfy the relations
 \begin{equation}\label{230903-1811}
  \pfrac{H_{g;\afa,r}}{v^{\beta,s}}
=\pfrac{H_{g;\beta,s}}{v^{\afa,r}}, \quad
 g\geq 1,\, 1\le\al,\beta\le n,\,0\leq r,s\leq 3g-2,
\end{equation}
therefore, there exists functions $\mcalF_g=\mcalF_g(v;v_x,...,v^{(3g-2)})$ such that
\[H_{g;\gamma,s}=\pfrac{\mcalF_g}{v^{\gamma,s}},\quad g\geq 1,\, 1\leq \gamma\leq n,\, 0\leq s\leq 3g-2,\]
and $\Delta\mcalF=\sum_{g\geq 1}\veps^{2g-2}\mcalF_g$ is a solution to the loop equation \eqref{loop equation-2308}.
\end{thm}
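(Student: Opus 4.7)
My plan is to prove the symmetry relations \eqref{230903-1811} by induction on $g$, after which Poincaré's lemma on the (affine) jet space produces the desired primitive $\mcalF_g$. The base case $g=1$ is immediate: the explicit expression \eqref{genus one free energy} exhibits $\mcalF_1$, so $H_{1;\gamma,s}=\pfrac{\mcalF_1}{v^{\gamma,s}}$ automatically, and both the closedness and the jet-order bound hold.

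For the inductive step, fix $g\ge 2$, assume $\mcalF_1,\dots,\mcalF_{g-1}$ are already constructed, and let $H_{g;\gamma,s}$ be the functions determined by Lemma \ref{lemma: the H_g;gamma s}. Reading off the $\veps^{2g-2}$-coefficient of the loop equation together with the definition \eqref{def of fai mg} of $\fai_{m,g}$ gives the family of identities
\[
  \Phi_m:=\sum_{\gamma,s}H_{g;\gamma,s}\,K_m^{\gamma,s}+\fai_{m,g}=0,\qquad m\ge -1,
\]
valid on $J^\infty(M)$. I would apply the vector field $K_{m'}$ to $\Phi_m=0$ and antisymmetrize in $m\leftrightarrow m'$. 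Those terms in which $K_{m'}$ (resp.\ $K_m$) hits a coefficient $K_m^{\gamma,s}$ (resp.\ $K_{m'}^{\gamma,s}$) combine, via $[K_m,K_{m'}]=(m-m')K_{m+m'}$ from the corollary above, into $(m-m')\fai_{m+m',g}$; the inhomogeneous piece $K_{m'}\fai_{m,g}-K_m\fai_{m',g}$ equals $-(m-m')\fai_{m+m',g}$ by the Virasoro relation \eqref{Km fai mg} of the preceding lemma. These two contributions cancel exactly, leaving the purely second-derivative bilinear identity
\[
  \sum_{\gamma,s;\gamma',s'}K_m^{\gamma,s}K_{m'}^{\gamma',s'}\,B_{(\gamma,s),(\gamma',s')}=0,\quad B_{(\gamma,s),(\gamma',s')}:=\pfrac{H_{g;\gamma,s}}{v^{\gamma',s'}}-\pfrac{H_{g;\gamma',s'}}{v^{\gamma,s}},
\]
for all $m,m'\ge -1$.

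Assembling into the generating series via $K^{\gamma,s}(\lmd)=\sum_m K_m^{\gamma,s}/\lmd^{m+2}$ yields
\[
  \sum_{\gamma,s;\gamma',s'}K^{\gamma,s}(\lmd)\,K^{\gamma',s'}(\mu)\,B_{(\gamma,s),(\gamma',s')}=0
\]
identically in $\lmd$ and $\mu$. Only finitely many indices ($s,s'\le 3g-2$) contribute because $H_{g;\gamma,s}$ has bounded jet support, so Lemma \ref{lem: a uniqueness lemma} applied first in $\lmd$ yields $\sum_{\gamma',s'}K^{\gamma',s'}(\mu)B_{(\gamma,s),(\gamma',s')}=0$ for every $(\gamma,s)$, and a second application in $\mu$ forces $B_{(\gamma,s),(\gamma',s')}\equiv 0$, which is \eqref{230903-1811}. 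The sought primitive $\mcalF_g$ exists by Poincaré's lemma, and the bound $\mcalF_g=\mcalF_g(v,v_x,\dots,v^{(3g-2)})$ follows from the jet support of $H_{g;\gamma,s}$ in Lemma \ref{lemma: the H_g;gamma s} together with the change-of-coordinate formula \eqref{change jet variables}. Hence $\Delta\mcalF=\sum_{g\ge 1}\veps^{2g-2}\mcalF_g$ solves \eqref{loop equation-2308}.

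The main obstacle is the two-way cancellation inside the antisymmetrization: the sign and coefficient by which the Lie bracket $[K_m,K_{m'}]$ contributes must match precisely the Virasoro-type correction on the inhomogeneous terms $\fai_{m,g}$ inherited from the lower-genus data. The reason this is not accidental is that both sides are governed by the same Virasoro algebra---exactly the content of the two preceding lemmas---so once that structural compatibility is in place, the uniqueness Lemma \ref{lem: a uniqueness lemma} finishes the argument.
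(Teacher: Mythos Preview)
Your argument is correct and follows essentially the same route as the paper: induction on $g$, applying $K_{m'}$ to the identity $\sum H_{g;\gamma,s}K_m^{\gamma,s}+\fai_{m,g}=0$, antisymmetrizing in $m,m'$, and using the Virasoro relations \eqref{Vir commu for Km} and \eqref{Km fai mg} together with the identity at $m+m'$ to cancel the inhomogeneous terms, leaving the bilinear relation that Lemma~\ref{lem: a uniqueness lemma} kills. Your account is in fact slightly more explicit than the paper's---you spell out the two-step application of Lemma~\ref{lem: a uniqueness lemma} (first in $\lmd$, then in $\mu$) and the invocation of Poincar\'e's lemma for the primitive $\mcalF_g$, both of which the paper leaves implicit.
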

\begin{proof}
We prove the theorem by induction on $g$. In the $g=1$ case, the genus one free energy
  $\mcalF_1$ is given by \eqref{genus one free energy}.
  For $g\geq 2$, suppose the functions $\mcalF_1,\mcalF_2,...,\mcalF_{g-1}$ satisfying the equations $\mcalA_{m,k}=0$
  have already been found. Note that the equation \eqref{new linear system} for $H_{g;\gamma,s}$ is equivalent to
  \begin{equation}\label{230903-1741}
    \sum_{s\geq 0}H_{g;\gamma,s}K_m^{\gamma,s}+\fai_{m,g}=0,\quad m\geq -1,
  \end{equation}
  here we denote $H_{g;\gamma,s}:=0$ for $s>3g-2$.
  Applying the vector field $K_{m'}$ to both sides of \eqref{230903-1741} we arrive at
  \[
    \sum_{r,s\geq 0}
      K_m^{\afa,r}K_{m'}^{\beta,s}
      \pfrac{H_{g;\afa,r}}{v^{\beta,s}}
   +\sum_{r,s\geq 0}
     H_{g;\afa,r}\pfrac{K_m^{\afa,r}}{v^{\beta,s}}K_{m'}^{\beta,s}
   +K_{m'}\fai_{m,g}=0.
  \]
Subtracting the above identity by the one that is obtained from it by interchanging $m$ and $m'$,
and using the relations \eqref{Vir commu for Km}, \eqref{Km fai mg} and \eqref{230903-1741}, we obtain
\begin{equation}\label{230903-1805}
  \sum_{r,s=0}^{3g-2}
    K_m^{\afa,r}K_{m'}^{\beta,s}
      \left(
        \pfrac{H_{g;\afa,r}}{v^{\beta,s}}
       -\pfrac{H_{g;\beta,s}}{v^{\afa,r}}
      \right)=0,\quad m,m'\geq -1.
\end{equation}
Now by applying Lemma \ref{lem: a uniqueness lemma} to \eqref{230903-1805} we arrive at \eqref{230903-1811},
hence the theorem is proved.
\end{proof}

\begin{rmk}
The proofs of the uniqueness and existence of solution to the loop equation of a semisimple generalized Frobenius manifold given in this section are analogues of the ones given in \cite{normalform} for a semisimple Frobenius manifold. Note that the proof of existence of solution to the loop equation of a semisimple Frobenius manifold is given in a new 2005-version of \cite{normalform}.
\end{rmk}

\section{Example: the extended $q$-deformed KdV hierarchy}
In this section we consider a 1-dimensional generalized Frobenius manifold $M$ with potential
\begin{equation}\label{F=v^4/12}
  F=\frac{v^4}{12},
\end{equation}
here $v$ is the flat coordinate with respect to the metric $\eta=1$. The unit vector field and the Euler vector field are given by \cite{GFM}
\[
e=\frac1{2 v}\p_v,\quad E=\frac12 v\p_v.
\]
This generalized Frobenius manifold is semisimple with charge $d=1$ and with monodromy data $\mu=R=0$.

The Principal Hierarchy of $M$ defined in \cite{GFM} can be represented in the form
\begin{align}
  \pfrac{v}{t^{1,p}}  &= \p_x\p_v\theta_{1,p+1}, \quad p\geq 0,  \label{PH-t1p}\\
  \pfrac{v}{t^{0,-p}} &= \p_x\p_v\theta_{0,-p+1}, \quad p\geq 1, \label{PH-t0-p}\\
  \pfrac{v}{t^{0,p}}  &= \p_x\p_v\theta_{0,p+1}, \quad p\geq 0   \label{PH-t0p},
\end{align}
where the Hamiltonian densities $\theta_{\al,p}$ are given by
\begin{align*}
&\theta_{0,0}=\frac12 \log v,\quad \theta_{1,p}=\frac1{p!}\frac{v^{2p+1}}{2p+1},\quad p\ge 0,\\
&\theta_{0,p}=\frac{2^{p-1}}{(2p-1)!!}\frac{v^{2p}}{2p},\quad
\theta_{0,-p}=(-1)^{p+1}\frac{(2p-1)!!}{2^{p+1}}\frac{v^{-2p}}{2p},\quad p\ge 1.
\end{align*}
In this section, we are to prove that the topological deformation of the Principal Hierarchy of $M$ is equivalent to the extended $q$-deformed KdV hierarchy that is to be defined below. More precisely, let us assume that
\begin{equation}\label{240118-1435}
\Delta\mcalF(\veps)=\sum_{g\geq 1}\veps^{2g-2}\mcalF_g(v,v_x,...,v^{(3g-2)})
\end{equation}
is the solution
to the loop equation of $M$, and $v=v(t)$ is a solution to the Principal Hierarchy \eqref{PH-t1p}--\eqref{PH-t0p},
then we are to prove that the 2-point function
\[
  U=\frac{\Lmd-\Lmd^{-1}}{2\epsilon\p_x}
     \left(
       v+\veps^2\frac{\p^2\Delta\mcalF(\veps)}{\p x\p t^{1,0}}
     \right)
\]
satisfies the following integrable hierarchy which we call \textit{the extended $q$-deformed KdV hierarchy}:
\begin{align}
    \epsilon\pfrac{L}{t^{1,p}} &= c_{1,p}\left[(L^{p+\frac{1}{2}})_+,L\right],\quad  p\geq 0 \label{positive flow},\\
    \epsilon\pfrac{L}{t^{0,-p}} &= c_{0,-p}\left[(M^p)_-,L\right],\quad p\geq 1,\label{negative flow}\\
    \epsilon\pfrac{L}{t^{0,p}} &=c_{0,p}\left[(L^p\log L)_+,L\right], \quad p\geq 0.  \label{log flow}
\end{align}
Here the parameter $\epsilon$ and the shift operator $\Lmd$ are given by
\[\Lmd=\rme^{\epsilon\p_x}=\rme^{\sqrt{2}\,\rmi\veps\p_x},\quad \epsilon=\sqrt{2}\,\rmi\veps,\]
the Lax operator $L$ has the form
\begin{equation}\label{zh-1-7}
  L=\Lmd^2+U\Lmd,
\end{equation}
and the constants $c_{1,p}, c_{0,-p}, c_{0,p}$ are defined by
\begin{align}
  &c_{1,p} =\, (-1)^{p+1}\frac{2^{3p+2}}{(2p+1)!!},\quad c_{0,p}=(-1)^{p+1}\frac{2^{2p-1}}{p!},\quad p\geq 0, \label{zh-1-11}\\
  &c_{0,-p} = -\frac{(p-1)!}{4^p},\quad p\geq 1.  \label{zh-1-12}
         \end{align}
The fractional powers $L^{p+\frac{1}{2}}$ and the inverse powers $M^p=(L^{-1})^p$ of $L$ can be obtained from the operators
\begin{equation}\label{zh-1-9}
L^{\frac12}=\Lmd+\sum_{k\geq 0}a_{-k}\Lmd^{-k},\quad
M=\sum_{k\geq -1}b_k\Lmd^k,
\end{equation}
which satisfy the relations
\begin{equation}\label{zh-1-10}
\left(L^{\frac12}\right)^2=L,\quad LM=ML=1,
\end{equation}
and the operator $\log L$ will be defined in the Subsection\,\ref{zh-1-8}. Note that for any operator of the form $A=\sum_{p\in\bbZ}a_p\Lmd^p$,
we denote
\[A_+=\sum_{p\geq 0}a_p\Lmd^p,\quad A_-=\sum_{p<0}a_p\Lmd^p.\]

We will call the flows given in \eqref{positive flow} and \eqref{negative flow} the positive and negative flows of the extended $q$-deformed KdV hierarchy respectively. The first few flows of the above integrable hierarchy have the following forms:
\begin{align}
  \epsilon\pfrac{U}{t^{1,0}}&=
    4U\left(\frac{\Lmd-1}{\Lmd+1}U\right),\label{t1,0 flow}\\
  \epsilon\pfrac{U}{t^{1,1}}&=\,
    \frac{32}{3}
    \left(
      U\circ\frac{\Lmd-1}{\Lmd+1}\circ U\circ \frac{\Lmd}{\Lmd+1}
    \right)
    \left[
      \left(
        \frac{1}{\Lmd+1}U
      \right)^2
    \right],
 \label{t1,1 flow}\\
   \pfrac{U}{t^{0,0}} &=U_x,\quad
  \epsilon\pfrac{U}{t^{0,-1}} =\frac14\left(\frac{1}{U^+}-\frac{1}{U^-}\right), \label{t0,-1 flow}\\
  \epsilon\pfrac{U}{t^{0,-2}} &=
    -\frac 1{16}
  \left(
    \frac{1}{U^{++}U^+U^+}
   +\frac{1}{U^+U^+U}
   -\frac{1}{UU^-U^-}
   -\frac{1}{U^-U^-U^{--}}\right),
\end{align}
where $U^{\pm}=\Lambda^{\pm1} U$.

To prove the above assertion, we first show that the extended $q$-deformed KdV hierarchy possesses a bihamiltonian structure, and moreover its $\pp{t^{1,p}}$-flows ($p\ge 0$) and $\pp{t^{0,-p}}$-flows ($p\ge 1$)  coincide with the flows of the fractional Volterra hierarchy (FVH) \cite{DLYZ2,FVH} with a particular choice of its parameters $\sfp, \sfq, \sfr$.
On the other hand, we prove that the quasi-Miura transformation that is defined by the solution to the loop equation of $M$ coincides with the one that relates the fractional Volterra hierarchy with its dispersionless limit \cite{Crelles, JDG}. Then we arrive at the fact, after a certain Miura type transformation, that the $\pp{t^{1,p}}$-flows ($p\ge 0$) and the $\pp{t^{0,-p}}$-flows ($p\ge 1$) of the topological deformation of the Principal Hierarchy of $M$ coincide with the flows of the extended $q$-deformed KdV hierarchy given by \eqref{positive flow}, \eqref{negative flow}. Finally, by using the bihamiltonian property of the extended $q$-deformed KdV hierarchy, we show that the $\pp{t^{0,p}}$-flows ($p\ge 0$) of the topological deformation of the Principal Hierarchy of $M$ is equivalent to the flows \eqref{log flow} of the extended $q$-deformed KdV hierarchy.

\begin{rmk}
The positive flows of the $q$-deformed KdV hierarchy \eqref{positive flow} is introduced in \cite{Frenkel} in terms of the Lax operator
\[\hat{L}=D^2-t_1(z) D+1,\]
where the q-difference operator $D$ is defined by $[D\cdot f](z)=f(z q)$.
If we represent $z, q$ in the form $z=e^x, q=e^\ve$, then we are led to the
Lax operator of the form \eqref{zh-1-7} and the positive flows of the $q$-deformed KdV hierarchy.
The positive and negative flows of the $q$-deformed KdV hierarchy are also
studied in \cite{Boiti1, Boiti2, Kup, Shabat}.
\end{rmk}

\subsection{The bihamiltonian structure of the positive and negative flows}
Let us introduce the following operators:
\begin{equation}\label{biham structure of q-KdV}
  \mcalP_1= \frac1{2\epsilon}(\Lmd-\Lmd^{-1}),\quad
  \mcalP_2= \frac 2\epsilon U\frac{\Lmd-1}{\Lmd+1}U.
\end{equation}
It can be verified that $(\mcalP_1,\mcalP_2)$ is a bihamiltonian structure.

\begin{lem}
  The  flows \eqref{positive flow} and \eqref{negative flow} can be rewritten as
  \begin{align}
  \epsilon\pfrac{U}{t^{1,p}}
  &=
    c_{1,p}U(1-\Lmd)\res\left( L^{p+\frac12}\right), \label{positive flow - res form}  \\
    \epsilon\pfrac{U}{t^{0,-p}}
   &=
   c_{0,-p}(\Lmd^{-1}-\clm)\res\left(\Lmd M^p\right) \label{negative flow - res form}.
  \end{align}
\end{lem}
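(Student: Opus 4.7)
The plan is to read off the $\Lmd^1$-coefficient from both sides of the Lax equations \eqref{positive flow} and \eqref{negative flow}. Since $L=\Lmd^2+U\Lmd$ has all of its time-dependence in the scalar $U$, one has $\epsilon\pfrac{L}{t^{\al,p}}=\epsilon\pfrac{U}{t^{\al,p}}\Lmd$, so the scalar evolution of $U$ is precisely the $\Lmd^1$-component of the commutator on the right. The basic computational tool is the term-wise identity
\[
[c\Lmd^k,L]=(c-c^{[2]})\Lmd^{k+2}+(c\,U^{[k]}-U\,c^{[1]})\Lmd^{k+1}, \qquad c^{[j]}:=\Lmd^j c\Lmd^{-j},
\]
from which one sees that the $\Lmd^1$-coefficient of $[\sum_k c_k\Lmd^k,L]$ receives contributions only from the summands with $k=-1$ (through the first term on the right) and $k=0$ (through the second).

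For the positive flow, I would apply this identity to $(L^{p+\frac12})_+=\sum_{k=0}^{2p+1}a_k\Lmd^k$. Since this operator contains no $\Lmd^{-1}$-term, only $k=0$ contributes, producing $a_0U-U a_0^{[1]}=U(1-\Lmd)a_0$. As $a_0=\res L^{p+\frac12}$ by convention, multiplying by $c_{1,p}$ yields \eqref{positive flow - res form}. For the negative flow, I would apply the same identity to $(M^p)_-=\sum_{k=-p}^{-1}c_k\Lmd^k$. This operator contains no $\Lmd^0$-term, so only $k=-1$ contributes, giving $c_{-1}-c_{-1}^{[2]}=(\Lmd^{-1}-\Lmd)c_{-1}^{[1]}$. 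Recognizing $c_{-1}^{[1]}$ as the $\Lmd^0$-coefficient of $\Lmd M^p$, i.e.~$\res(\Lmd M^p)$, reproduces \eqref{negative flow - res form}.

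The one non-mechanical point to verify is that the commutators $[(L^{p+\frac12})_+,L]$ and $[(M^p)_-,L]$ are indeed supported purely in $\Lmd^1$, so as to be compatible with the form $\pfrac{U}{t^{\al,p}}\Lmd$. This follows from the commutativities $[L^{p+\frac12},L]=0$ and $[M^p,L]=0$, which force $[A_+,L]=-[A_-,L]$ in each case; since the same term-wise identity shows that $[A_+,L]$ has $\Lmd$-degrees $\geq 1$ while $[A_-,L]$ has $\Lmd$-degrees $\leq 1$, both sides must be concentrated in degree exactly one. This consistency observation is really the essential step; once it is in place, the remainder is direct bookkeeping, so I anticipate no serious obstacle.
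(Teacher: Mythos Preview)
Your argument is correct and is essentially the same computation as the paper's: both extract the $\Lmd^1$-coefficient of the Lax commutator, you via the term-wise identity $[c\Lmd^k,L]=(c-c^{[2]})\Lmd^{k+2}+(cU^{[k]}-Uc^{[1]})\Lmd^{k+1}$ and the paper via residue manipulations of $\res([\,\cdot\,,L]\Lmd^{-1})$. The consistency check you add (that the commutators are supported purely in $\Lmd^1$) is a nice observation but is not needed for the lemma and is not in the paper's proof.
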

\begin{proof}
  From \eqref{positive flow} it follows that
  \begin{align*}
  &
    \frac{\epsilon}{c_{1,p}}\pfrac{U}{t^{1,p}}
  =
    \res\left(
      \left[
        (L^{p+\frac12})_+
      , L\right]\Lmd^{-1}
    \right) \\
  =&\,
    \res\left((L^{p+\frac12})_+(\Lmd+U)-(\Lmd^2+U\Lmd)(L^{p+\frac12})_+\Lmd^{-1}\right) \\
  =&\,
    U\res\left(L^{p+\frac12}-\Lmd L^{p+\frac12}\Lmd^{-1}\right)
  =
    U(1-\Lmd)\res\left( L^{p+\frac12}\right).
  \end{align*}
Similarly, from \eqref{negative flow} we obtain the relations
  \begin{align*}
  &
    \frac{\epsilon}{c_{0,-p}}
    \pfrac{U}{t^{0,-p}}
  =
    \res\left(
      \left[(M^p)_-,L\right]\clm^{-1}
    \right) \\
  =&\,
    \res\left((M^p)_-(\clm+U)\right)
   -\res\left((\clm^2+U\Lmd)(M^p)_-\Lmd^{-1}\right) \\
  =&\,
    \res\left(M^p\Lmd\right)-\res\left(\Lmd^2 M^p \Lmd^{-1}\right)
  =
    (\Lmd^{-1}-\Lmd)\res\left( \Lmd M^p\right).
  \end{align*}
Thus the lemma is proved.
\end{proof}

\begin{prop}
The positive and negative flows \eqref{positive flow}--\eqref{negative flow} are Hamiltonian systems with respect to the Poisson operator $\mcalP_1$, i.e., they can be represented in the form
\begin{align}
  \pfrac{U}{t^{1,p}}
&=
  \mcalP_1\frac{\delta H_{1,p}}{\delta U},\quad p\geq 0,  \label{positive flow - ham form}\\
  \pfrac{U}{t^{0,-p}}
&=
  \mcalP_1\frac{\delta H_{0,-p}}{\delta U},\quad p\geq 1, \label{negative flow - ham form}
\end{align}
where the densities of the Hamiltonians
\begin{align*}
H_{1,p}&=\int h_{1,p} \td x,\quad  p\geq 0, \\
H_{0,-p}&=\int h_{0,-p} \td x,\quad  p\geq 1
\end{align*}
are given by
\begin{align}
  & h_{1,p}= \frac{4c_{1,p}}{2p+3}\res\left(L^{p+\frac 32}\right), \quad p\geq 0,  \\
   &h_{0,-1}= \frac12
     \log V,\quad
   h_{0,-p}= \frac{2c_{0,-p}}{p-1}\res\left( M^{p-1} \right),\quad \text{$p\geq 2$}.\label{h0,-1 nega flow}
\end{align}
\end{prop}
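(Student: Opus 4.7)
The plan is to verify the two identities
\[
\mcalP_1\frac{\delta H_{1,p}}{\delta U} = \pfrac{U}{t^{1,p}},\qquad \mcalP_1\frac{\delta H_{0,-p}}{\delta U} = \pfrac{U}{t^{0,-p}}
\]
by matching them against the residue forms \eqref{positive flow - res form} and \eqref{negative flow - res form} just established. Since $\mcalP_1 = (2\epsilon)^{-1}(\Lmd - \Lmd^{-1})$, the task reduces to computing the variational derivatives and then invoking a single algebraic identity.

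The main computational ingredient is the variational formula for residues of powers of the Lax operator. Writing $\delta L = \delta U\cdot\Lmd$ and $\delta L^n = \sum_{j=0}^{n-1} L^j(\delta L)L^{n-1-j}$, and repeatedly using the cyclic property $\int\res(AB)\,\td x = \int\res(BA)\,\td x$ together with the summation-by-parts identity $\int f\cdot(\Lmd^{-1}g)\,\td x = \int(\Lmd f)\cdot g\,\td x$, I would establish
\[
\frac{\delta}{\delta U}\int\res(L^n)\,\td x = n\,\res(\Lmd L^{n-1}),\qquad \frac{\delta}{\delta U}\int\res(M^n)\,\td x = -n\,\res(\Lmd M^{n+1}),
\]
valid for arbitrary (including half-integer) exponent $n$. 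Substituting the chosen densities yields $\delta H_{1,p}/\delta U = 2c_{1,p}\res(\Lmd L^{p+\frac12})$ and, for $p\geq 2$, $\delta H_{0,-p}/\delta U = -2c_{0,-p}\res(\Lmd M^p)$. The exceptional case $p=1$ fits the same pattern: $\delta(\tfrac12\log V)/\delta U = 1/(2U)$ coincides with $-2c_{0,-1}\res(\Lmd M) = \tfrac12\,(1/U)$, using $c_{0,-1}=-1/4$ and the fact that the $\Lmd^{-1}$-coefficient of $M$ equals $1/U^-$ by $LM=1$, whence $\res(\Lmd M) = 1/U$.

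For the negative flows the matching is now immediate:
\[
\mcalP_1\cdot(-2c_{0,-p})\res(\Lmd M^p) = \frac{c_{0,-p}}{\epsilon}(\Lmd^{-1}-\Lmd)\res(\Lmd M^p) = \pfrac{U}{t^{0,-p}}.
\]
For the positive flows the matching reduces to the single identity
\[
(\Lmd-\Lmd^{-1})\res(\Lmd L^{p+\frac12}) = U(1-\Lmd)\res(L^{p+\frac12}),
\]
which I would extract from the commutativity $[L^{p+\frac12}, L] = 0$. Writing $L^{p+\frac12}=\sum_k c_k\Lmd^k$ and comparing the $\Lmd^1$-coefficients of $L^{p+\frac12}\cdot L$ and $L\cdot L^{p+\frac12}$ produces $c_{-1}^{++} - c_{-1} = U(c_0 - c_0^+)$; since $c_0 = \res(L^{p+\frac12})$ and $\Lmd c_{-1} = \res(\Lmd L^{p+\frac12})$, this rearranges to the required identity.

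The main obstacle is the last step: the coefficient comparison at shift degree one must be carried out carefully for the formal fractional power $L^{p+\frac12}$, and one must check that the variational formula of the first step extends cleanly to half-integer and negative exponents (in particular that $\res(\Lmd M^{n+1})$ is well-defined as a differential polynomial in $U$ under the convention \eqref{zh-1-9}). Once these two algebraic points are settled, the remaining matchings are routine.
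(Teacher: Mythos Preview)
Your proposal is correct and follows essentially the same approach as the paper: both compute the variational derivatives $\delta H_{1,p}/\delta U = 2c_{1,p}\res(\Lmd L^{p+\frac12})$ and $\delta H_{0,-p}/\delta U = -2c_{0,-p}\res(\Lmd M^p)$ via the cyclic property of $\int\res$, and then reduce the positive-flow case to the algebraic identity $(\Lmd-\Lmd^{-1})\res(\Lmd L^{p+\frac12}) = U(1-\Lmd)\res(L^{p+\frac12})$, which is extracted from $[L,L^{p+\frac12}]=0$. The only cosmetic difference is that the paper derives this identity by residue manipulations (substituting $U\Lmd = L-\Lmd^2$ and using $LL^{p+\frac12}=L^{p+\frac12}L$), whereas you read off the $\Lmd^1$-coefficient of the commutator directly; these are the same computation in slightly different packaging.
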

\begin{proof}
By using the relations
\[
  \delta h_{1,p} \sim 2c_{1,p}\res\left(L^{p+\frac12}\delta L\right)
  =2c_{1,p}\res\left(L^{p+\frac12}\delta U\cdot \Lmd\right)
  \sim 2c_{1,p}\res\left(\delta U\cdot \Lmd L^{p+\frac12}\right)
\]
we obtain
\begin{equation} \label{var der H1p}
  \frac{\delta H_{1,p}}{\delta U} =2c_{1,p}\res\left(\Lmd L^{p+\frac12}\right),\quad p\geq 0.
\end{equation}
Here we define the equivalence relation $``\sim"$ as follows: a 1-form on the jet space is equivalent to zero if it is the $x$-derivative of another 1-form. Thus \eqref{positive flow - ham form} follows from \eqref{positive flow - res form}
and the relations
\begin{align*}
&\,
 U(1-\Lmd)\res\left(L^{p+\frac12}\right)
=
 \res\left(
   UL^{p+\frac12}-U\Lmd L^{p+\frac12}\Lmd^{-1}\right) \\
=&
 \res\left(
   L^{p+\frac12}U+(\Lmd^2-L) L^{p+\frac12}\Lmd^{-1}\right) \\
=&\,
  \res\left(
    L^{p+\frac12}U+\Lmd^2 L^{p+\frac12}\Lmd^{-1} -L^{p+\frac12}(\Lmd+U)
  \right) \\
=&
  (\Lmd-\Lmd^{-1})\res\left(\Lmd L^{p+\frac12}\right).
\end{align*}
Similarly, from the relations
  \begin{align*}
    \delta h_{0,-p}&=\, \frac{2 c_{0,-p}}{p-1}\res\left(\delta M^{p-1}\right)
    = 2c_{0,-p}\res\left(M^{p-2}\delta M\right) \\
  &\sim\,
    -2c_{0,-p}\res\left( M^{p-1}\delta U \cdot \Lmd M\right)
  \sim
    -2c_{0,-p}\res\left(\delta U\cdot \Lmd M^p\right)
  \end{align*}
  it follows that
  \begin{equation}\label{var der of H-p}
  \frac{\delta H_{0,-p}}{\delta U} = -2 c_{0,-p}
  \res\left(\Lmd M^p\right),\quad p\geq 1.
\end{equation}
Then by using \eqref{negative flow - res form} we arrive at the validity of \eqref{negative flow - ham form}.
The proposition is proved.
\end{proof}

\begin{prop}
The positive and negative flows \eqref{positive flow}, \eqref{negative flow} of the extended $q$-deformed KdV hierarchy are also Hamiltonian systems with respect to the second Hamiltonian operator $\mcalP_2$ defined in \eqref{biham structure of q-KdV}, i.e.,
\begin{align}
  \pfrac{U}{t^{1,p}}
&=\,
  \mcalP_2\frac{\delta G_{1,p}}{\delta U}, \quad p\geq 0, \label{positive flow - 2nd ham form}\\
  \pfrac{U}{t^{0,-p}}
&=\,
  \mcalP_2\frac{\delta G_{0,-p}}{\delta U}, \quad p\geq 1, \label{negative flow - 2nd ham form}
\end{align}
where the Hamiltonians are given by
\begin{align}
  G_{1,0}&=2\int U\td x, \
  G_{1,p}= \frac{2}{2p+1}H_{1,p-1}, \
  G_{0,-p}=
     -\frac 1p
     H_{0,-p-1} ,\quad p\geq 1. \label{second hamiltonian of neg flow}
\end{align}
\end{prop}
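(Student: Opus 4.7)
The plan is to parallel the verification of the first Hamiltonian form already carried out above: compute the variational derivatives $\frac{\delta G_{1,p}}{\delta U}$ and $\frac{\delta G_{0,-p}}{\delta U}$, substitute them into $\mcalP_2$, and check that the result matches the corresponding flow in the residue forms \eqref{positive flow - res form}, \eqref{negative flow - res form}. All the algebraic ingredients are at hand: the variational derivative formulas \eqref{var der H1p} and \eqref{var der of H-p}, the explicit constants in \eqref{zh-1-11}, \eqref{zh-1-12}, and the structure of $L$ and $M$ as Laurent operators in $\Lmd^{\pm 1}$.

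For the positive flows the case $p=0$ is immediate: $\frac{\delta G_{1,0}}{\delta U} = 2$ gives $\mcalP_2 \cdot 2 = \frac{4}{\epsilon}U\frac{\Lmd-1}{\Lmd+1}U$, which coincides with $\pfrac{U}{t^{1,0}}$ by \eqref{t1,0 flow}. For $p\geq 1$, the identity $G_{1,p} = \frac{2}{2p+1}H_{1,p-1}$ combined with \eqref{var der H1p} yields $\frac{\delta G_{1,p}}{\delta U} = \frac{4c_{1,p-1}}{2p+1}\res(\Lmd L^{p-1/2})$. Reading the ratio $c_{1,p} = -\frac{8c_{1,p-1}}{2p+1}$ off \eqref{zh-1-11}, clearing the common factor $\frac{8c_{1,p-1}}{\epsilon(2p+1)}U$, and applying $(\Lmd+1)$ to both sides, the desired equality $\mcalP_2\frac{\delta G_{1,p}}{\delta U} = \pfrac{U}{t^{1,p}}$ reduces to the algebraic Lax identity
\begin{equation}\label{zh-plan-key}
  (\Lmd-1)\bigl(U\res(\Lmd L^{p-1/2})\bigr) = (\Lmd^2-1)\res(L^{p+1/2}),\quad p\geq 1.
\end{equation}

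I would establish \eqref{zh-plan-key} by writing $L^{p-1/2} = \sum_k b_k\Lmd^k$ and computing $L^{p+1/2}$ in two ways, as $L \cdot L^{p-1/2} = \Lmd^2 L^{p-1/2} + U\Lmd L^{p-1/2}$ and as $L^{p-1/2}\cdot L = L^{p-1/2}\Lmd^2 + L^{p-1/2} U\Lmd$. Equating the two resulting expressions for the coefficient of $\Lmd^{-1}$ in $L^{p+1/2}$ produces the structure identity
\[
  (\Lmd+1)(L^{p-1/2})_{-2} = -U^-(L^{p-1/2})_{-1},
\]
and substituting this into the two sides of \eqref{zh-plan-key} shows that their difference collapses to zero after a short rearrangement using $\Lmd f = f^+\Lmd$. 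The negative flows are handled by the same recipe: using $G_{0,-p} = -\frac{1}{p}H_{0,-p-1}$ and \eqref{var der of H-p}, together with the ratio $c_{0,-p} = \frac{4c_{0,-p-1}}{p}$ from \eqref{zh-1-12}, the verification reduces to a Lax identity of the same shape as \eqref{zh-plan-key} with $L^{p-1/2}$ and $L^{p+1/2}$ replaced by $M^{p+1}$ and $M^p$ respectively; the corresponding structure identity $(\Lmd+1)(M^{p+1})_{-2} = -U^-(M^{p+1})_{-1}$ is derived in the same way from $LM^{p+1} = M^{p+1}L = M^p$.

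The main obstacle is not conceptual but organizational: because $\Lmd$ and functions do not commute, one has to track shifts carefully when manipulating $\res(\Lmd L^s)$, $U\res(L^s)$ and their $M$-analogues, and the cancellations that collapse \eqref{zh-plan-key} and its negative-flow counterpart to the single structure identity above emerge only after $(\Lmd+1)$ is systematically applied to clear the denominator of $\mcalP_2$. Once this reduction is in place, the remainder of the proof is arithmetic on the constants $c_{1,p}$ and $c_{0,-p}$ from \eqref{zh-1-11} and \eqref{zh-1-12}.
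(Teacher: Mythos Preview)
Your overall strategy matches the paper's: reduce each flow to the algebraic Lax identity you call \eqref{zh-plan-key}, then prove that identity. The paper establishes the positive-flow identity by writing $\res(U\Lmd L^{p-1/2})=\res((L-\Lmd^2)L^{p-1/2})$ and then substituting $\Lmd^2=L-U\Lmd$ once more on the right; this two-step use of $U\Lmd=L-\Lmd^2$ yields $(1-\Lmd)\res(U\Lmd L^{p-1/2})=(1-\Lmd^2)\res(L^{p+1/2})$ directly, with no auxiliary ``structure identity'' needed. The negative-flow case is handled by the same substitution pattern applied to $M^{p+1}$.

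There is a real gap in your derivation of the key identity. Writing $L^{p-1/2}=\sum_k b_k\Lmd^k$ and comparing the $\Lmd^{-1}$-coefficients of $L\cdot L^{p-1/2}$ and $L^{p-1/2}\cdot L$ gives
\[
  b_{-3}^{++}+U\,b_{-2}^{+}=b_{-3}+b_{-2}\,U^{--},
\]
a relation between $b_{-3}$ and $b_{-2}$, \emph{not} the claimed identity $(\Lmd+1)b_{-2}=-U^{-}b_{-1}$. What actually proves \eqref{zh-plan-key} is the $\Lmd^{0}$-coefficient comparison: from $L\cdot L^{p-1/2}$ one gets $c_0:=\res(L^{p+1/2})=b_{-2}^{++}+U\,b_{-1}^{+}$, while $L^{p-1/2}\cdot L$ gives $c_0=b_{-2}+U^{-}b_{-1}$; eliminating $b_{-2}$ between these two expressions produces exactly $(1-\Lmd)(U\,b_{-1}^{+})=(1-\Lmd^2)c_0$. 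This is precisely the paper's computation rewritten in coefficient language, and it bypasses your intermediate structure identity entirely. The same correction (compare $\Lmd^{0}$-coefficients of $M^p=L\cdot M^{p+1}=M^{p+1}\cdot L$, or equivalently substitute $U\Lmd=L-\Lmd^2$) fixes the negative-flow argument.
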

\begin{proof}
From the relations
\begin{align*}
\res\left(U\Lmd L^{p-\frac12}\right)
&=
  \res\left((L-\Lmd^2)L^{p-\frac12}\right)
=
  \res\left(L^{p+\frac12}\right)
 -\Lmd^2\res\left(L^{p-\frac12}\Lmd^2\right) \\
&=
  \res\left(L^{p+\frac12}\right)
 -\Lmd^2\res\left(L^{p-\frac12}(L-U\Lmd)\right)\\
&=
  (1-\Lmd^2)\res\left(L^{p+\frac12}\right)+\Lmd\res\left(U\Lmd L^{p-\frac12}\right)
\end{align*}
it follows the identity
\[
  (1-\Lmd)\res\left(U\Lmd L^{p-\frac12}\right)
 =(1-\Lmd^2)\res\left( L^{p+\frac12}\right).
\]
By applying the operator $U\frac{1}{1+\Lmd}$ to both sides of the above identity
and by using \eqref{positive flow - res form}, \eqref{var der H1p}
we arrive at \eqref{positive flow - 2nd ham form} for $p>0$.
For $p=0$, we can verify \eqref{positive flow - 2nd ham form} directly.

Similarly, from the relation
\begin{align*}
\res\left( U\Lmd M^{p+1} \right) &=\,
  \res\left((L-\Lmd^2)M^{p+1}\right)
=
  \res\left(M^{p+1}L-\Lmd^2M^{p+1}\right) \\
&=\,
  \res\left(
    M^{p+1}(\Lmd^2+U\Lmd) - \Lmd^2 M^{p+1}
  \right) \\
&=\,
  \res\left( M^{p+1}\Lmd^2\right)
 -\res\left(\Lmd^2 M^{p+1}\right)
 +\res\left(M^{p+1}U\Lmd\right) \\
&=\,
  \left(\Lmd-\Lmd^{-1}\right)
  \res\left(\Lmd M^{p+1}\Lmd\right)
 +\Lmd^{-1}\res\left(U\Lmd M^{p+1}\right)
\end{align*}
we obtain the identity
\[
  \left(1-\Lmd^{-1}\right)
  \res\left(
    U\Lmd M^{p+1}
  \right)
=
  \left(\Lmd^{-1}-\Lmd\right)
  \res\left(\Lmd M^{p+1}\Lmd\right).
\]
Applying $U\frac{\Lmd}{\Lmd+1}$ to both sides of the above identity, we arrive at
\begin{align*}
&
  U\frac{\Lmd-1}{\Lmd+1}
  \res\left(U\Lmd M^{p+1}\right)
=
  U(1-\Lmd)
  \res\left(\Lmd M^{p+1}\Lmd\right) \\
=&
  \res\left(U\Lmd M^{p+1} \Lmd\right)
 -\res\left(U\Lmd^2 M^{p+1}\right) \\
=&
  \res\left((L-\Lmd^2) M^{p+1} \Lmd\right)
 -\res\left(U\Lmd^2 M^{p+1}\right) \\
=&\,
  \res\left(M^p\Lmd\right)
 -\res\left(\Lmd^2 M^{p+1}(\Lmd+U)\right) \\
=&\,
  \res\left(M^p\Lmd\right)
 -\res\left(\Lmd^2 M^{p+1}L\Lmd^{-1}\right)
=
  \left(\Lmd^{-1}-\Lmd\right)
  \res\left(\Lmd M^p\right).
\end{align*}
Then by using \eqref{negative flow - res form}, \eqref{var der of H-p} and \eqref{second hamiltonian of neg flow} we obtain \eqref{negative flow - 2nd ham form}.
The proposition is proved.
\end{proof}

\begin{rmk}
The bihamiltonian structure \eqref{biham structure of q-KdV} associated with the Lax operator \eqref{zh-1-7} and the first negative flow of the $q$-deformed KdV hierarchy is given in \cite{Liuqp}.
\end{rmk}

\begin{rmk}\label{rmk:disp lim of p/m flows}
  Denote the symbol of the Lax operator $L$ by
\begin{align*}
  \lmd(z) &=z^2+Uz,
\end{align*}
then from \eqref{positive flow - res form} and \eqref{negative flow - res form} it follows that
the dispersionless limits of the flows \eqref{positive flow}--\eqref{negative flow} are given by
\begin{align*}
\lim_{\epsilon\to 0}\pfrac{U}{t^{1,p}}
&=
  -c_{1,p}U\p_x\left(\res_{z=0}\lmd^{p+\frac12}\cdot\frac{\td z}{z}\right) \notag \\
&=
  c_{1,p}\cdot \frac{(-1)^{p+1}}{2^{3p+1}}
  \frac{(2p+1)!!}{p!}U^{2p+1}U_x
=
  \frac{2}{p!}U^{2p+1}U_x,
\\
\lim_{\epsilon\to 0}
  \pfrac{U}{t^{0,-p}}
&=
  -2 c_{0,-p}\p_x\left(
    \res_{z=0}
      \frac{z}{\lmd^p}\cdot\frac{\td z}{z}
  \right)  \notag \\
&=\,
  (-1)^{p+1}c_{0,-p}
  \frac{2(2p-1)!}{(p-1)!^2}
  U^{-2p}U_x
=
(-1)^p\frac{(2p-1)!!}{2^p}U^{-2p}U_x,
\end{align*}
which coincide with the flows \eqref{PH-t1p}--\eqref{PH-t0-p} of the Principal Hierarchy of the 1-dimensional generalized Frobenius manifold $M$ if we identify $U$ with $v$.
\end{rmk}

\subsection{The logarithmic flows}\label{zh-1-8}
Let us proceed to construct the logarithm of the Lax operator \cite{extbitoda, exttoda} which is used in the definition of the flows \eqref{log flow}.
Note that the Lax operator $L$ given in \eqref{zh-1-7} can be represented in the form
\[
  L=(1+U\Lmd^{-1})\Lmd^2=(1+\Lmd U^{-1})(U\Lmd).
\]
Introduce the operators
\begin{align}
  X_1&=\log(1+U\Lmd^{-1})
       = \sum_{k=1}^{\infty}\frac{(-1)^{k-1}}{k}(U\Lmd^{-1})^k, \notag \\
  X_2&=\log(\Lmd^2)=2\epsilon\p_x, \notag\\
  Y_1&= \log(1+\Lmd U^{-1})=\sum_{k=1}^{\infty}\frac{(-1)^{k-1}}{k}(\Lmd U^{-1})^k, \notag \\
  Y_2&=\log(U\Lmd) = \epsilon\p_x+\frac{\epsilon\p_x}{\Lmd-1}(\log U), \label{240106-0630}
\end{align}
where the identity \eqref{240106-0630} can be derived from the integral form of the Baker--Campbell--Hausdorff (BCH) formula
(see, e.g. \cite{GTM222} or \cite{FVH}), then we have
\begin{equation}
  L=\rme^{X_1}\rme^{X_2}=\rme^{Y_1}\rme^{Y_2}.
\end{equation}
Now we define the operators $\log_-L$ and $\log_+L$ by using the BCH formula as follows:
\begin{align}
  \log_-L&=\log(\rme^{X_1}\rme^{X_2}) \notag\\
         &\phantom:= X_1+X_2+\frac12[X_1,X_2]+\frac{1}{12}
             \left(
               [X_1,[X_1,X_2]] + [X_2,[X_2,X_1]]
             \right)+\cdots , \label{log-L}\\
  \log_+L&=\log(\rme^{Y_1}\rme^{Y_2}) \notag\\
         &\phantom:= Y_1+Y_2+\frac12[Y_1,Y_2]+\frac{1}{12}
             \left(
               [Y_1,[Y_1,Y_2]] + [Y_2,[Y_2,Y_1]]
             \right)+\cdots. \label{log+L}
\end{align}
These operators can be represented in the form
\begin{align}
  \log_-L=2\epsilon\p_x+\sum_{k=1}^{\infty}a_k\Lmd^{-k},\quad
  \log_+L=\epsilon\p_x+\sum_{k=0}^{\infty}b_k\Lmd^k \label{coef ak bk}
\end{align}
for certain coefficients $a_k,b_k$.
We define the logarithm of $L$ by
  \begin{equation}
    \log L= 2\log_+L-\log_-L,
  \end{equation}
then it has the form
$\log L=\sum_{k\in\bbZ}f_k\Lmd^k$ for certain coefficients $f_k$.
For example, from \eqref{240106-0630} we obtain
\begin{equation}\label{240107-2305}
  f_0=2b_0=\frac{2\epsilon\p_x}{\Lmd-1}(\log U).
\end{equation}
Thus from \eqref{log flow} it follows that
\begin{align}
  \pfrac{U}{t^{0,0}}
&=\,
  \frac{c_{0,0}}{\epsilon}\res\left([f_0,U\Lmd]\Lmd^{-1}\right)
  =c_{0,0}U\frac{1-\Lmd}{\epsilon}f_0 \notag \\
&=\,
  -\frac12U\frac{1-\Lmd}{\epsilon}\frac{2\epsilon\p_x}{\Lmd-1}(\log U) \notag\\
&=\, U_x.
\end{align}
It can be verified that the positive, negative and logarithmic flows
\eqref{positive flow}--\eqref{log flow} are mutually commutative.

In general, the coefficients $a_k,b_k$ in \eqref{coef ak bk} for $k\neq 0$ are not easy to calculate,
so it is not a simple way to determine the flows $\pfrac{}{t^{0,p}}$ for $p>0$ from \eqref{log flow} by calculating $L^p\log L$ straightforwardly.
Instead, we can find the recursion relation between $\pfrac{}{t^{0,p}}$ and $\pfrac{}{t^{0,p+1}}$ and use it to determine the flows $\frac{\p}{\p t^{0,p}}$ recursively.

\begin{prop}
The logarithmic flows \eqref{log flow} satisfy the following bihamiltonian recursion relations
\begin{equation}\label{log recur relation}
  \mcalR\pfrac{U}{t^{0,p-1}}=p\pfrac{U}{t^{0,p}},\quad p\ge 1,
\end{equation}
where
\[\mcalR=\mcalP_2\mcalP_1^{-1}
=4U\frac{\Lmd-1}{\Lmd+1}U\frac{1}{\Lmd-\Lmd^{-1}}\]
is the recursion operator given by the bihamiltonian structure $(\mcalP_1,\mcalP_2)$ defined in \eqref{biham structure of q-KdV}.
\end{prop}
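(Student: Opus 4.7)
The plan is to follow the bihamiltonian route established in the paper for the positive and negative flows: I would exhibit Hamiltonian densities $h_{0,p}$ and $g_{0,p}$ for the first and second Hamiltonian structures respectively, with $p\,g_{0,p}=h_{0,p-1}$ modulo Casimirs, from which the recursion follows immediately via $\mcalR=\mcalP_{2}\mcalP_{1}^{-1}$.

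As a preliminary step, applying $\res(\,\cdot\,\Lmd^{-1})$ to the Lax equation \eqref{log flow} and using $[L,\log L]=0$, exactly as in the derivation of \eqref{positive flow - res form}, yields the scalar form
\[
\epsilon\pfrac{U}{t^{0,p}}=c_{0,p}\,U(1-\Lmd)\,\res\bigl(L^{p}\log L\bigr),\qquad p\ge 0.
\]
A variational computation, based on the cyclic identity $\res(L^{p+1}\,\delta\log L)\sim\res(L^{p}\delta L)$ modulo $\p_{x}$-derivatives (the residue analogue of the matrix identity $\mathrm{tr}(\delta\log A)=\mathrm{tr}(A^{-1}\delta A)$), then yields the first-bracket representation $\pfrac{U}{t^{0,p}}=\mcalP_{1}\dtafrac{H_{0,p}}{U}$ with density
\[
h_{0,p}=\tfrac{2c_{0,p}}{p+1}\,\res\bigl(L^{p+1}\log L\bigr)-\tfrac{2c_{0,p}}{(p+1)^{2}}\,\res\bigl(L^{p+1}\bigr),
\]
the subtracted term compensating for the extra $\res(\Lmd L^{p})$-contribution coming from $\delta\log L$. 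A parallel computation, mimicking the derivation of \eqref{positive flow - 2nd ham form} via the factorization $U\Lmd=L-\Lmd^{2}$ and the telescoping argument used there, produces a second-bracket representation $\pfrac{U}{t^{0,p}}=\mcalP_{2}\dtafrac{G_{0,p}}{U}$ with $G_{0,p}=\tfrac{1}{p}H_{0,p-1}$ modulo Casimirs of $\mcalP_{2}$ (the factor $1/p$ arises because $\p_{L}(L^{p}\log L)=pL^{p-1}\log L+L^{p-1}$, analogously to the factor $\tfrac{2}{2p+1}$ in \eqref{second hamiltonian of neg flow}).

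Granting the two Hamiltonian representations, the recursion is immediate:
\[
\mcalR\pfrac{U}{t^{0,p-1}}
=\mcalP_{2}\mcalP_{1}^{-1}\mcalP_{1}\dtafrac{H_{0,p-1}}{U}
=\mcalP_{2}\dtafrac{H_{0,p-1}}{U}
=p\,\mcalP_{2}\dtafrac{G_{0,p}}{U}
=p\pfrac{U}{t^{0,p}}.
\]
The main obstacle is rigorously establishing the cyclic identity $\res(L^{p+1}\,\delta\log L)\sim\res(L^{p}\delta L)$, because $\log L$ is defined through the non-commutative BCH series \eqref{log-L}-\eqref{log+L} and the variation $\delta$ does not commute with its nested iterated commutators. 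The natural strategy is a term-by-term verification in the BCH expansion using the Duhamel formula $\delta\rme^{X}=\int_{0}^{1}\rme^{sX}\,\delta X\,\rme^{(1-s)X}\,\td s$, combined with the cyclicity of $\res$ modulo $\p_{x}$-derivatives, to show that all higher-order correction terms telescope to zero.
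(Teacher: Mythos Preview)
Your approach is viable in outline but takes a genuinely different route from the paper's, and in fact reverses its logical order. The paper proves the recursion \emph{directly}, with no Hamiltonian densities and no variational calculus: writing $L^p\log L=\sum_s f_s^{[p]}\Lambda^s$, it obtains the two scalar forms $\epsilon\,\partial_{t^{0,p}}U=c_{0,p}\,U(1-\Lambda)f_0^{[p]}$ and $\epsilon\,\partial_{t^{0,p}}U=c_{0,p}(\Lambda-\Lambda^{-1})(f_{-1}^{[p]})^+$ (the second by rewriting the commutator using $[L,\log L]=0$), and then establishes the purely algebraic identity $(\Lambda-1)f_0^{[p]}=\bigl(\tfrac{\Lambda-1}{\Lambda+1}\circ U\bigr)(f_{-1}^{[p-1]})^+$ by comparing $\res(L^{p-1}\log L\cdot L)$ with $\res(L\cdot L^{p-1}\log L)$. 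The recursion drops out immediately. The paper then \emph{uses} this recursion, in the subsequent Theorem~\ref{thm:biham log flow}, to deduce the second Hamiltonian representation $p\,\partial_{t^{0,p}}U=\mcalP_2\,\delta H_{0,p-1}/\delta U$ that you propose to establish independently.

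Your route does work (the telescoping argument for $\mcalP_2$ transfers verbatim from the positive flows, and the subtracted term in your $h_{0,p}$ is in fact zero since $\res L^{p+1}=0$), but it front-loads the hard step: computing $\delta\res(L^{p+1}\log L)$. The paper handles this later, not by term-by-term BCH expansion and Duhamel as you suggest, but via the closed formula $\delta(\log_\pm L)=\tfrac{\ad_{\log_\pm L}}{1-\exp(-\ad_{\log_\pm L})}(L^{-1}\delta L)$ borrowed from the FVH literature, applied to $\log_+L$ and $\log_-L$ separately. The paper's ordering is more economical: the elementary residue argument delivers the recursion cheaply, and the delicate variation of $\log L$ is deferred to where it is genuinely unavoidable.
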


\begin{proof}
Denote
\begin{equation}\label{f_s^[k]}
  L^p\log L = \sum_{s\in\bbZ}f^{[p]}_s\Lmd^s,\quad p\ge 0,
\end{equation}
then from \eqref{log flow} we obtain
\begin{align}
  \epsilon\pfrac{U}{t^{0,p}}
&=\, c_{0,p}\res\left(\left[(L^p\log L)_+,L\right]\Lmd^{-1}\right)
 = c_{0,p}\res\left([f_0^{[p]}, U\Lmd]\Lmd^{-1}\right) \notag\\
&=\,
  c_{0,p}U(1-\Lmd)f_0^{[p]}. \label{240108-1049-1}
\end{align}
On the other hand, since $[L,\log L]=0$, the flows $\pfrac{}{t^{0,p}}$ can also be rewritten as
\begin{align}
\epsilon\pfrac{U}{t^{0,p}}
&=-c_{0,p}\res\left(\left[(L^p\log L)_-,L\right]\Lmd^{-1}\right)
 = -c_{0,p}\res\left(\left[f_{-1}^{[p]}\Lmd^{-1},\Lmd^2\right]\Lmd^{-1}\right) \notag \\
&=
  c_{0,p}(\Lmd-\Lmd^{-1})\left(f_{-1}^{[p]}\right)^+. \label{240108-1049-2}
\end{align}
Note that
\begin{align}
  f_0^{[p]}&=\res\left(L^p\log L\right) = \res\left(L^{p-1}\log L\cdot L\right) \notag \\
  &=\res\left((f_{-2}^{[p-1]}\Lmd^{-2}+f_{-1}^{[p-1]}\Lmd^{-1})(\Lmd^2+U\Lmd)\right) \notag \\
  &=\, f_{-2}^{[p-1]}+U^-f_{-1}^{[p-1]}.\label{241017-0945-1}
\end{align}
On the other hand, we also have
\begin{align}
  f_0^{[p]}&=\res\left(L\cdot L^{p-1}\log L\right)
  = \res\left(
      (\Lmd^2+U\Lmd)
      (f_{-2}^{[p-1]}\Lmd^{-2}+f_{-1}^{[p-1]}\Lmd^{-1})
    \right) \notag \\
  &=
    \left(f_{-2}^{[p-1]}\right)^{++}
   +U\left(f_{-1}^{[p-1]}\right)^+. \label{241017-0945-2}
\end{align}
Applying the operator $\Lmd^2$ on both sides of \eqref{241017-0945-1} and then subtracting \eqref{241017-0945-2} from it, we obtain
\[
  (\Lmd^2-1)f_0^{[p]}
 =
  (\Lmd-1)\left(U\left(f_{-1}^{[p-1]}\right)^+\right) ,
\]
therefore
\begin{equation}\label{240108-1301}
  (\Lmd-1)f_0^{[p]} =
\left(
  \frac{\Lmd-1}{\Lmd+1}\circ U\right)\left(
f_{-1}^{[p-1]}\right)^+.
\end{equation}
Thus from \eqref{240108-1049-1}--\eqref{240108-1049-2} and \eqref{240108-1301} we arrive at
\begin{align*}
  \epsilon\pfrac{U}{t^{0,p}}&=\,
c_{0,p}U(1-\Lmd)f_0^{[p]} =
  -c_{0,p}\left(U\circ\frac{\Lmd-1}{\Lmd+1}\circ U\right)\left(
  f_{-1}^{[p-1]}\right)^+ \\
&=\,
  -\frac{c_{0,p}}{c_{0,p-1}}
   \left(
     U\circ\frac{\Lmd-1}{\Lmd+1}\circ U\circ\frac{1}{\Lmd-\Lmd^{-1}}
   \right)
   \left(
     c_{0,p-1}(\Lmd-\Lmd^{-1})
     \left(f_{-1}^{[p-1]}\right)^+
   \right)\\
&=
  \frac {\epsilon}p\mcalR\pfrac V{t^{0,p-1}},\quad p\ge 1.
\end{align*}
The proposition is proved.
\end{proof}

\begin{thm}\label{thm:biham log flow}
The logarithmic flows \eqref{log flow} admit the following bihamiltonian formalism
\begin{align}
  \pfrac{U}{t^{0,p}}&=\mcalP_1\frac{\delta H_{0,p}}{\delta U}, \label{log 1st ham}\\
  p\pfrac{U}{t^{0,p}}&=\mcalP_2\frac{\delta H_{0,p-1}}{\delta U} \label{log 2nd ham},
\end{align}
here $p\geq 0$, and the Hamiltonian $H_{0,p}\,(p\geq -1)$ are defined by
\begin{align}\label{H-0p}
  H_{0,p}&=
    (-1)^{p+1}\frac{2^{2p}}{(p+1)!}\int\res\left(L^{p+1}\log L\right)\td x.
\end{align}
\end{thm}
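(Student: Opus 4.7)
The plan is to establish \eqref{log 1st ham} by a direct variational computation, after which \eqref{log 2nd ham} will follow from the bihamiltonian recursion \eqref{log recur relation} already proved. Applying the Leibniz rule to $H_{0,p}=C_p\int\res(L^{p+1}\log L)\,\td x$ with $C_p=(-1)^{p+1}2^{2p}/(p+1)!$, I will split the variation as
\[
  \delta\int\res(L^{p+1}\log L)\,\td x = \int\res(\delta(L^{p+1})\log L)\,\td x + \int\res(L^{p+1}\delta\log L)\,\td x,
\]
then use the cyclicity of $\int\res(\cdot)\,\td x$ on formal difference operators together with $[L,\log L]=0$ to collapse the first term to $(p+1)\int\res(L^p\log L\cdot\delta L)\,\td x$. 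For the second term, the key observation is that although $\log L=2\log_+L-\log_-L$ is not itself a genuine logarithm in the ambient algebra, each of $\log_\pm L$ satisfies $e^{\log_\pm L}=L$ by construction. The standard variational formula $\delta L=\int_0^1 L^{1-s}(\delta\log_\pm L)L^s\,ds$ together with cyclicity will then give $\int\res(L^{p+1}\delta\log_\pm L)\,\td x=\int\res(L^p\delta L)\,\td x$, and the coefficient combination $2-1=1$ produces
\[\int\res(L^{p+1}\delta\log L)\,\td x=\int\res(L^p\delta L)\,\td x.\]

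Next, substituting $\delta L=\delta U\cdot\Lmd$ and applying cyclicity once more, each piece of the form $\int\res(A\delta U\Lmd)\,\td x$ becomes $\int\Lmd(\mathrm{coef}_{\Lmd^{-1}}(A))\,\delta U\,\td x$. Since $L=\Lmd^2+U\Lmd$ contains only non-negative powers of $\Lmd$, the same is true of $L^p$, so its $\Lmd^{-1}$-coefficient vanishes and the contribution from the $\int\res(L^p\delta L)\,\td x$ piece is killed. Writing $L^p\log L=\sum_s f_s^{[p]}\Lmd^s$ as in \eqref{f_s^[k]}, the computation then produces
\[\frac{\delta H_{0,p}}{\delta U}=C_p(p+1)(f_{-1}^{[p]})^+=2c_{0,p}(f_{-1}^{[p]})^+,\]
where the constant matching is the identity $C_p(p+1)=(-1)^{p+1}2^{2p}/p!=2c_{0,p}$. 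Applying $\mcalP_1=\frac{1}{2\epsilon}(\Lmd-\Lmd^{-1})$ to this expression and comparing with the residue form \eqref{240108-1049-2} of the logarithmic flow establishes \eqref{log 1st ham}. The identity \eqref{log 2nd ham} then follows immediately from \eqref{log recur relation} and $\mcalR=\mcalP_2\mcalP_1^{-1}$: for $p\geq 1$ one has $p\pfrac{U}{t^{0,p}}=\mcalR\mcalP_1\frac{\delta H_{0,p-1}}{\delta U}=\mcalP_2\frac{\delta H_{0,p-1}}{\delta U}$, while the $p=0$ case reduces to the Casimir statement $\mcalP_2\,\delta H_{0,-1}/\delta U=0$, which I expect to verify using \eqref{240107-2305} and the observation that $f_0=\res(\log L)$ is an $x$-derivative, so $H_{0,-1}$ is a constant functional.

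The main obstacle will be rigorously justifying the variational identity for $\log L$, since $\log L$ lives in an enlarged operator algebra involving $\p_x$ and is not a bona fide logarithm of $L$ in that algebra. The splitting $\log L=2\log_+L-\log_-L$ into genuine logarithms handles this conceptually, but one must verify that the standard formula $\delta e^X=\int_0^1 e^{(1-s)X}\delta X\,e^{sX}\,ds$ and the cyclicity of $\int\res(\cdot)\,\td x$ both remain valid in this enlarged algebra; once these structural points are settled, the remaining steps amount to routine residue manipulations and the book-keeping of constants indicated above.
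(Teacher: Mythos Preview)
Your approach is essentially the same as the paper's: compute $\delta H_{0,p}/\delta U$ by a direct variational calculation, identify the result as $2c_{0,p}(f_{-1}^{[p]})^+$, match with the residue form \eqref{240108-1049-2}, and then deduce \eqref{log 2nd ham} from the recursion \eqref{log recur relation}. The paper phrases the key variational step via the formula $\delta\log_\pm L=\dfrac{\ad_{\log_\pm L}}{1-e^{-\ad_{\log_\pm L}}}(M\delta L)$ (borrowed from \cite{FVH}) rather than your Duhamel integral, but these are inverse to one another, and both arguments reduce to the same cyclicity manipulation together with $[L,\log_\pm L]=0$ and the vanishing of $\int\res(L^p\delta L)\,\td x$ for $p\geq 0$.

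There is one genuine error in your treatment of the $p=0$ case: $H_{0,-1}$ is \emph{not} a constant functional. From \eqref{240107-2305} one has
\[
  f_0=\frac{2\epsilon\p_x}{\Lmd-1}\log U = 2\log U + (\text{total $x$-derivative}),
\]
since the operator $\frac{2\epsilon\p_x}{\Lmd-1}$ has constant term $2$ as a power series in $\epsilon\p_x$. Thus $H_{0,-1}=\frac14\int f_0\,\td x=\frac12\int\log U\,\td x$, exactly as the paper notes just before the proof. Writing $\frac{2\epsilon\p_x}{\Lmd-1}=\p_x\circ\frac{2\epsilon}{\Lmd-1}$ introduces a nonlocal antiderivative and does not make $f_0$ a total derivative in the variational sense. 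The correct verification is that $\dfrac{\delta H_{0,-1}}{\delta U}=\dfrac{1}{2U}$, and then
\[
  \mcalP_2\cdot\frac{1}{2U}
  =\frac{1}{\epsilon}\,U\,\frac{\Lmd-1}{\Lmd+1}(1)=0,
\]
so $H_{0,-1}$ is indeed a Casimir of $\mcalP_2$, just not for the reason you gave.
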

We note that the Hamiltonian $H_{0,-1}$ defined in \eqref{H-0p} can be rewritten as
\begin{align*}
  H_{0,-1} = \frac14\int\res(\log L)\td x
=
  \frac 14\int\frac{2\epsilon\p_x}{\Lmd-1}(\log U)\td x
= \frac12\int\log U\td x,
\end{align*}
which coincides with \eqref{h0,-1 nega flow}. Indeed, it turns out that \eqref{log 2nd ham}
holds true for all $p\in\bbZ$.

\begin{proof}[Proof of Theorem \ref{thm:biham log flow}]
 By following the approach used in the proof of Lemma 3.5 of \cite{FVH} we obtain
  \[
    \delta(\log_+L-\epsilon\p_x) =
    \delta(\log_+L)=
    \frac{\ad_{\log_+L}}{1-\exp(-\ad_{\log_+L})}(M\delta L),
  \]
where $\log_+L$ is defined in \eqref{log+L}, $M=L^{-1}$ is defined in \eqref{zh-1-9}, \eqref{zh-1-10}, and $\ad_XY=[X,Y]$ for any operators $X,Y$.
Therefore for each $p\geq 0$ we have
  \begin{align*}
  &
    \delta\int\res\left(
      L^{p+1}(\log_+L-\epsilon\p_x)
    \right)\td x \\
  =&
    (p+1)\int\res\left(
      (\delta L\cdot L^p)(\log_+L-\epsilon\p_x)
    \right)\td x \\
  &
    +
    \int\res\left(
      L^{p+1}\cdot
      \frac{\ad_{\log_+L}}{1-\exp(-\ad_{\log_+L})}(M\delta L)
    \right)\td x \\
  =&
    (p+1)\int\res\left(
      \delta U\cdot\Lmd L^p(\log_+L-\epsilon\p_x)
    \right)\td x \\
  &
    -
    \int\res\left(
      M\delta L\cdot
      \frac{\ad_{\log_+L}}{1-\exp(-\ad_{\log_+L})}(L^{p+1})
    \right)\td x \\
  =&
    (p+1)\int\res\left(
      \delta U\cdot\Lmd L^p(\log_+L-\epsilon\p_x)
    \right)\td x
    -\int\res(L^p\delta L)\td x\\
  =&
  (p+1)\int\res\left(
      \delta U\cdot\Lmd L^p(\log_+L-\epsilon\p_x)
    \right)\td x ,
  \end{align*}
from which it follows that
\[
  \frac{\delta}{\delta U}
  \int\res\left(
      L^{p+1}(\log_+L-\epsilon\p_x)
    \right)\td x
=
  (p+1)\res\left(
      \Lmd L^p(\log_+L-\epsilon\p_x)
    \right).
\]
In a similar way we can derive the following formula
\[
  \frac{\delta}{\delta U}
  \int\res\left(
      L^{p+1}(\log_-L-2\epsilon\p_x)
    \right)\td x
=
  (p+1)\res\left(
      \Lmd L^p(\log_-L-2\epsilon\p_x)
    \right),
\]
and we have
\begin{align}
  \frac{\delta H_{0,p}}{\delta U}
&=
 \frac{(-1)^{p+1}2^{2p}}{(p+1)!}
  \frac{\delta}{\delta U}
  \int\res\left(
    L^{p+1}
      \left(
        2(\log_+L-\epsilon\p_x)
       -(\log_-L-2\epsilon\p_x)
      \right)
  \right)\td x \notag \\
&=
  \frac{(-1)^{p+1}2^{2p}}{p!}
  \res\left(\Lmd L^p\log L\right)
=
 2c_{0,p}\left(f_{-1}^{[p]}\right)^+,
\end{align}
here the coefficients $c_{0,p}$ and $f_s^{[p]}$ are defined in \eqref{zh-1-11} and \eqref{f_s^[k]} respectively.
Then by using \eqref{240108-1049-2} we obtain
\[
  \mcalP_1\frac{\delta H_{0,p}}{\delta U}
=
  \frac{1}{2\epsilon}(\Lmd-\Lmd^{-1})\cdot 2c_{0,p}\left(f_{-1}^{[p]}\right)^+
=
  \pfrac{U}{t^{0,p}},
\]
which leads to \eqref{log 1st ham} for all $p\geq 0$.
Thus from the recursion relations \eqref{log recur relation}
we arrive at \eqref{log 2nd ham} for $p\geq 1$.
It is easy to see that \eqref{log 2nd ham} also holds true when $p=0$.
The theorem is proved.
\end{proof}

\begin{rmk}
The dispersionless limit of the logarithmic flows can be obtained by the same way as we do for the positive and negative flows in Remark \ref{rmk:disp lim of p/m flows}.
Introduce the formal symbol
\[
  \log\lmd(z):=2\log V + 2\sum_{k=1}^{\infty}\frac{(-1)^{k-1}}{k}(\frac zU)^k
  -\sum_{k=1}^{\infty}\frac{(-1)^{k-1}}{k}(\frac Uz)^k,
\]
then from \eqref{240108-1049-1} it follows that
\begin{align*}
  \lim_{\epsilon\to 0}
  \pfrac U{t^{0,p}}
&=
  -c_{0,p}U\p_x\res_{z=0}
  \left(
    \lmd^p\log\lmd\cdot\frac{\td z}{z}
  \right) \\
&=
  (-1)^{p+1}c_{0,p}
  \left(
    \sum_{k=0}^{p}
      {p\choose k}\frac{(-1)^k}{p+k}
  \right)
  U\p_x(U^{2p}) \\
&=
   \frac{2^{2p-1}(p-1)!}{(2p-1)!}U^{2p}U_x,
\end{align*}
which coincide with the flows \eqref{PH-t0p} in the Principal Hierarchy of the generalized Frobenius manifold $M$ if we identify $U$ with $v$.
Therefore, the Principal hierarchy \eqref{PH-t1p}--\eqref{PH-t0p} is exactly the dispersionless extended $q$-deformed KdV hierarchy.
\end{rmk}

\subsection{The fractional Volterra hierarchy}

Let us recall the definition of the fractional Volterra hierarchy (FVH) and its basic properties, for details see \cite{FVH, Crelles}.
For any $\sfp,\sfq,\sfr\in\bbC$ which satisfy the local Calabi--Yau condition
\begin{equation}\label{CY condition}
  \frac1\sfp+\frac1\sfq+\frac1\sfr=0,
\end{equation}
we introduce the operators
\[
  \Lmd_1=\Lmd^{\frac1\sfq},\quad
  \Lmd_2=\Lmd^{\frac1\sfp},\quad
  \Lmd_3=\Lmd^{\frac1\sfp+\frac1\sfq}
\]
and the Lax operator
\begin{equation}\label{general FVH Lax}
  \mcalL=\Lmd_2+\rme^W\Lmd_1^{-1},
\end{equation}
where $W$ is the unknown function. The fractional Volterra hierarchy of $(\sfp,\sfq,\sfr)$-type
consists of the following flows:
\begin{align}
  \epsilon\pfrac{\mcalL}{T_s}
 &=
    \left[
      \left(\mcalL^{-\frac s\sfr}\right)_\oplus , \mcalL
    \right], \quad s\in\ell_1,\label{general FVH-1} \\
 \epsilon\pfrac{\mcalL}{T_s}
 &=  -\left[
      \left(\mcalL^{-\frac s\sfr}\right)_\ominus , \mcalL
    \right], \quad s\in\ell_2,\label{general FVH-2}
 \end{align}
here the index sets are given by
\begin{equation}\label{index set l1,l2}
  \ell_1=\Bigset{k\sfp}{k\in\bbZ_{\geq 0}},\quad
  \ell_2=\Bigset{k\sfq}{k\in\bbZ_{\geq 0}}.
\end{equation}
Note that for $s=k\sfp\in\ell_1$, the operator $\mcalL^{-\frac s\sfr}$ is of the form
\[
  \mcalL^{-\frac s\sfr}=\sum_{l\geq 0} a_l\Lmd_3^{k-l}
\]
for certain coefficients $a_l$, which is a Laurent series of $\Lmd_3^{-1}$, and
its positive part (with respect to $\Lmd_3$) is defined as $\left(\mcalL^{-\frac s\sfr}\right)_\oplus:=\sum_{l=0}^k a_l\Lmd_3^{k-l}$.
Similarly, when $s=k\sfq\in\ell_2$ we know that
\[
  \mcalL^{-\frac s\sfr} = \sum_{l\geq 0} b_l\Lmd_3^{-k+l}
\]
is a Laurent series of $\Lmd_3$,
and its negative part (with respect to $\Lmd_3$) is defined as
$\left(\mcalL^{-\frac s\sfr}\right)_\ominus:=\sum_{l=0}^{k-1}b_l\Lmd_3^{-k+l}$.

An important case of the fractional Volterra hierarchy is when $(\sfp,\sfq,\sfr)=(-\frac12,1,1)$,
which is closely related with the extended $q$-deformed KdV hierarchy and the one-dimensional generalized Frobenius manifold $M$ with potential \eqref{F=v^4/12}.

\begin{ex}\label{ex: -1/2,1,1}
Let $(\sfp,\sfq,\sfr)=(-\frac12,1,1)$, then we have
  \[
    \Lmd_1=\Lmd, \quad \Lmd_2=\Lmd^{-2},\quad \Lmd_3=\Lmd^{-1},
  \]
  and the corresponding Lax operator
  \begin{equation}\label{special FVH L}
    \mcalL=\Lmd^{-2}+\rme^W\Lmd^{-1}.
  \end{equation}
The non-trivial flows of the fractional Volterra hierarchy are given by
\begin{align}
  \epsilon\pfrac{\mcalL}{T_{-p-\frac12}}&=
    \left[
      \left(\mcalL^{p+\frac12}\right)_\oplus , \mcalL
    \right],\quad p\geq 0, \label{240118-1719-1}\\
  \epsilon\pfrac{\mcalL}{T_p}&=
    -\left[
      \left(\mcalL^{-p}\right)_\ominus , \mcalL
    \right],\quad p\geq 1. \label{240118-1719-2}
\end{align}
\end{ex}

Now we identify the unknown functions and the time variables as follows:
\begin{align}
W&=\log U,\\
 T_{-p-\frac12}&=-c_{1,p}t^{1,p},\quad p\geq 0, \label{Tt-1}\\
 T_{p} &= c_{0,-p}t^{0,-p},\quad p\geq 1, \label{Tt-2}
\end{align}
where the constants $c_{1,p}$ and $c_{0,-p}$ are defined in \eqref{zh-1-11}, \eqref{zh-1-12}.
In particular,
\begin{equation}\label{t^1,0 and T-1/2}
  t^{1,0}=\frac14T_{-\frac12}.
\end{equation}

The relation between the $(-\frac12,1,1)$-type  fractional Volterra hierarchy and the extended $q$-deformed KdV hierarchy is given in the following lemma.
\begin{lem}\label{lem: -1/2,1,1 and qKdV}
Suppose $W$ satisfies the $(-\frac12,1,1)$-type fractional Volterra hierarchy \eqref{240118-1719-1}, \eqref{240118-1719-2},
then $U=\rme^W$ satisfies the positive and negative flows \eqref{positive flow}, \eqref{negative flow} of the extended $q$-deformed hierarchy under the identification \eqref{Tt-1}, \eqref{Tt-2}.
\end{lem}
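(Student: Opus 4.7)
The plan is to bridge the two Lax formalisms via the formal adjoint. Define the anti-involution $*$ on $\epsilon$-difference operators by $\Lambda^* := \Lambda^{-1}$, $f^* := f$ for any function $f(x)$, and $(AB)^* := B^*A^*$; equivalently, $(f\Lambda^n)^* = f^{(-n)}\Lambda^{-n}$ where $f^{(-n)}(x) := f(x-n\epsilon)$. Three preparatory facts are the workhorses. First, $*$ is an anti-involution, so $[A,B]^* = -[A^*,B^*]$, and $(A^m)^* = (A^*)^m$ for integer and, with the compatible branch choice, half-integer powers. Second, because $*$ sends $\Lambda^n \mapsto \Lambda^{-n}$, it exchanges the $\Lambda_3 = \Lambda^{-1}$ positive and negative parts with the $\Lambda$ positive and negative parts, yielding $(A_\oplus)^* = (A^*)_+$ and $(A_\ominus)^* = (A^*)_-$. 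Third, a direct calculation gives
\[
  \mcalL^* = (\Lambda^{-2})^* + (U\Lambda^{-1})^* = \Lambda^2 + \Lambda U = \Lambda^2 + U^+\Lambda,
\]
so $\mcalL^*$ has precisely the form of the extended $q$-deformed KdV Lax operator \eqref{zh-1-7} with $U$ replaced by its forward shift $U^+(x) = U(x+\epsilon)$.

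Applying $*$ to the FVH positive flow \eqref{240118-1719-1} now gives
\[
  \epsilon\, \partial_{T_{-p-\frac12}} \mcalL^* = -\bigl[((\mcalL^*)^{p+\frac12})_+, \mcalL^*\bigr], \quad p\geq 0.
\]
Reading off the coefficient of $\Lambda$ on each side yields a PDE for $U^+$ which, being translation-invariant, shifts to the same PDE for $U$. This PDE has the form of the extended $q$-KdV positive flow \eqref{positive flow} but with the multiplicative constant $-1$ in place of $c_{1,p}$. A direct comparison thus gives $\partial_{T_{-p-\frac12}} = -c_{1,p}^{-1}\partial_{t^{1,p}}$, which is precisely the time identification \eqref{Tt-1}. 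Analogously, applying $*$ to the FVH negative flow \eqref{240118-1719-2} produces
\[
  \epsilon\, \partial_{T_p} \mcalL^* = \bigl[((\mcalL^*)^{-p})_-, \mcalL^*\bigr], \quad p\geq 1,
\]
which matches the extended $q$-KdV negative flow \eqref{negative flow} (expressed in terms of $M = L^{-1}$) with coefficient $1$ in place of $c_{0,-p}$, giving $T_p = c_{0,-p}\, t^{0,-p}$, as in \eqref{Tt-2}.

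The main technical obstacle is the preparatory identity $(A^{p+\frac12})^* = (A^*)^{p+\frac12}$: formal square roots are not unique, so one must check that the branch of $\mcalL^{p+\frac12}$ picked by the FVH construction transforms under $*$ into the branch of $(\mcalL^*)^{p+\frac12}$ used in the extended $q$-KdV. The match is forced by comparing leading behavior: $\mcalL^{p+\frac12}$ has top term $\Lambda_3^{2p+1}=\Lambda^{-(2p+1)}$, whose image under $*$ is $\Lambda^{2p+1}$, matching the leading term of $(\mcalL^*)^{p+\frac12}$. Granting this bookkeeping, the proof reduces to the two Lax-equation comparisons above.
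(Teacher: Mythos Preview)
Your proof is correct but proceeds by a different mechanism than the paper's. The paper uses the \emph{automorphism} $\epsilon\mapsto -\epsilon$ of the difference-operator algebra (which sends $\Lambda\mapsto\Lambda^{-1}$ while preserving the order of multiplication), observing that $L=\mcalL|_{\epsilon\mapsto-\epsilon}$ exactly, with no shift in $U$; under this automorphism $(\cdot)_\oplus$ and $(\cdot)_+$ are directly identified, and the extra sign comes from the explicit factor of $\epsilon$ on the left-hand side of the Lax equation. You instead use the \emph{anti-involution} $*$ (the formal adjoint), which also sends $\Lambda\mapsto\Lambda^{-1}$ but reverses products, so that $\mcalL^*=\Lambda^2+U^+\Lambda$ is the $q$-KdV operator with the shifted unknown $U^+$; the sign now comes from $[A,B]^*=-[A^*,B^*]$, and you need the extra step of invoking translation invariance to pass from $U^+$ back to $U$. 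Both routes are short and essentially equivalent; the paper's is marginally cleaner because it avoids the shift and the appeal to autonomy, while your adjoint formulation has the advantage of being a standard tool whose interaction with fractional powers (your final paragraph) is well known.
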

\begin{proof}
The Lax operator \eqref{zh-1-7} of the extended $q$-deformed KdV hierarchy and that of the $(-\frac12,1,1)$-type fractional Volterra hierarchy \eqref{special FVH L} satisfy the following relation
  \[
    L=\mcalL|_{\epsilon\mapsto -\epsilon}.
  \]
  Note that $\Lmd_3=\Lmd^{-1}$ when $(\sfp,\sfq,\sfr)=(-\frac12,1,1)$,
 by  checking the definition of the positive parts $(\cdot)_\oplus$ and $(\cdot)_+$ carefully, we obtain
  \begin{align*}
    \pfrac{L}{t^{1,p}}
  &=
    -c_{1,p}\pfrac{L}{T_{-p-\frac12}}
   =-c_{1,p}\left.
     \pfrac{\mcalL}{T_{-p-\frac12}}
     \right|_{\epsilon\mapsto -\epsilon} \\
  &=
    -c_{1,p}
     \left.
     \left(
       \frac{1}{\epsilon}
       \left[
         \left(\mcalL^{p+\frac12}\right)_\oplus , \mcalL
       \right]
     \right)
     \right|_{\epsilon\mapsto -\epsilon}
  =
    \frac{c_{1,p}}{\epsilon}
    \left[
      \left(L^{p+\frac12}\right)_+,L
    \right].
  \end{align*}
Similarly, $U=\rme^W$ also satisfies the negative flows \eqref{negative flow} of the extended $q$-deformed KdV hierarchy.
The lemma is proved.
\end{proof}

\subsection{The quasi-triviality of the FVH}

For each $\sfp,\sfq,\sfr\in\bbC$ satisfying the local Calabi--Yau condition \eqref{CY condition},
it is shown in \cite{Crelles} that the associated fractional Volterra hierarchy \eqref{general FVH-1}, \eqref{general FVH-2} admits a tau structure, from which it follows that for any solution $W(x,T;\epsilon)$ of the fractional Volterra hierarchy there is a tau function $\tau(x,T;\epsilon)$ such that
\begin{align}
&W=
\left(\Lmd^{-\frac{1}{2\sfr}}-\Lmd^{\frac{1}{2\sfr}}\right)
\left(\Lmd^{\frac{1}{2\sfq}}-\Lmd^{-\frac{1}{2\sfq}}\right)
\log\tau,\\
& \epsilon\left(\Lmd^{-\frac 1\sfr}-1\right)\Lmd^{-\frac{1}{2\sfp}}
  \pfrac{\log\tau}
  {T_s}
=
  \res \mcalL^{-\frac s\sfr},\quad s\in\ell_1\cup\ell_2.\label{e^W tau relation}
\end{align}
It is shown in \cite{Crelles} that the fractional Volterra hierarchy possesses a topological tau function $\tau_{\mathrm{top}}$ which has the genus expansion
\[\log\tau_{\mathrm{top}}
=
  \frac1{\epsilon^2}\mathcal{H}_{0,\mathrm{top}}(x, T)
 +\sum_{g\ge 1} \epsilon^{2g-2} \mcalH_g\left(w_{\mathrm{top}}, w_{\mathrm{top}}',\dots,w_{\mathrm{top}}^{(3g-2)}\right),\]
where
\[w_{\mathrm{top}}(x,T)=\frac{\sfp+\sfq}{\sfp \sfq^2}\p_x^2 \mathcal{H}_{0,\mathrm{top}}(x, T)\]
is a certain solution of the dispersionless limit of the fractional Volterra hierarchy, and the function
\[\Delta \mcalH(\epsilon)
=
\sum_{g\ge 1} \epsilon^{2g-2} \mcalH_g\left(w, w_x,\dots,w^{(3g-2)}\right)\]
satisfies the loop equation
\begin{align}
  &\,
    \sum_{s\geq 0}
      \left(
        \p_x^s\Theta
       +\sum_{k=1}^{s}
          {s\choose k}P_{k-1,s-k+1}
      \right)\pfrac{\Delta \mcalH}{w^{(s)}} \notag\\
  =&\,
    \frac{1}{\sfp}\left(
      \frac{\Theta^2}{16} - \left(\frac{1}{16}-\frac{\sigma_1}{24}\right)\Theta
    \right)
   +\frac{\epsilon^2}{\sigma_2}\sum_{s\geq 0}
      \p_x^{s+2}
      \left(
        \frac{\Theta^2}{16}
       -\left(\frac{1}{16}-\frac{\sigma_1}{24}\right)\Theta
      \right) \pfrac{\Delta \mcalH}{w^{(s)}} \notag\\
   &\quad
   +\frac{\sfp \epsilon^2}{2\sigma_2}
    \sum_{k,\ell\geq 0}P_{k+1,\ell+1}
      \left(
        \frac{\p^2\Delta \mcalH}{\p w^{(k)}\p w^{(\ell)}}
       +\pfrac{\Delta \mcalH}{w^{(k)}}
        \pfrac{\Delta \mcalH}{w^{(\ell)}}
      \right). \label{Hodge loop eqn}
  \end{align}
Here in the loop equation $\sigma_1, \sigma_2, \Theta$ are defined by
\[
  \sigma_1=-(\sfp+\sfq+\sfr),\quad
  \sigma_2=-\sfp\sfq\sfr,\quad
   \Theta=\frac{1}{1-\lmd\rme^{\frac{w}{\sfp}}},
\]
$P_{k,\ell}$ are certain polynomials in $\Theta,\sigma_1,\sigma_2$ and the jet variables $w_x, w_{xx},...$
whose definitions are given in Sect.\,4 of \cite{JDG},
and the above equation is required to hold true identically with respect to the parameter $\lmd$.

By using the transcendency of $w_{\mathrm{top}}(x, T)$, it is shown in \cite{Crelles} that for any solution $w(x, T)$ of the dispersionless fractional Volterra hierarchy, the function
\[W=\left(\Lmd^{-\frac{1}{2\sfr}}-\Lmd^{\frac{1}{2\sfr}}\right)
\left(\Lmd^{\frac{1}{2\sfq}}-\Lmd^{-\frac{1}{2\sfq}}\right)
\left(\frac1{\epsilon^2}\mathcal{H}_{0}(x, T)+
\Delta\mcalH\left(
\epsilon\right)\right)\]
is a solution to the fractional Volterra hierarchy \eqref{general FVH-1}, \eqref{general FVH-2}, here $\rme^{\mathcal{H}_0(x, T)}$ is the tau function
of the solution $w(x,T)$ of the dispersionless fractional Volterra hierarchy which is related to $w(x, T)$ by
\[w(x,T)=\frac{\sfp+\sfq}{\sfp \sfq^2}\p_x^2 \mathcal{H}_0(x, T).\]
Moreover, the tau function $\tau$ of the solution $W$ to the fractional Volterra hierarchy is given by
\[\log\tau=\frac1{\epsilon^2}\mathcal{H}_{0}(x, T)+
\Delta\mcalH\left(
\epsilon\right).\]

Now let us consider Example \ref{ex: -1/2,1,1}, i.e., we assume that
\[(\sfp, \sfq, \sfr)=\left(-\frac12, 1, 1\right),\]
then in the above loop equation $P_{i,j}$ take the following explicit forms:
\[P_{k,\ell}=\left(\p_x^k\sqrt{\Theta}\right)\left(\p_x^\ell\sqrt{\Theta}\right),\quad k, \ell\ge 0, \]
and the loop equation \eqref{Hodge loop eqn} for the case $(\sfp,\sfq,\sfr)=\left(-\frac12,1,1\right)$ is given by
\begin{align}
&
  \sum_{s\geq 0}
    \left(
      \p_x^s\Theta
     +\sum_{k=1}^{s}{s\choose k}
      \left(\p_x^{k-1}\sqrt{\Theta}\right)
      \left(
        \p_x^{s-k+1}\sqrt{\Theta}
      \right)
    \right)
    \pfrac{\Delta\mcalH}{w^{(s)}} \notag\\
=&
  -\frac12\epsilon^2
  \sum_{k,\ell\geq 0}
    \left(\p_x^{k+1}\sqrt{\Theta}\right)
    \left(\p_x^{\ell+1}\sqrt{\Theta}\right)
    \left(
      \frac{\p^2\Delta\mcalH}{\p w^{(k)}\p w^{(\ell)}}
     +\pfrac{\Delta\mcalH}{w^{(k)}}
      \pfrac{\Delta\mcalH}{w^{(\ell)}}
    \right) \notag\\
&
  +\epsilon^2\sum_{s\geq 0}
    \p_x^{s+2}
    \left(\frac{\Theta^2}{8}-\frac\Theta 4\right)
    \pfrac{\Delta\mcalH}{w^{(s)}}
   -\left(\frac{\Theta^2}{8}-\frac\Theta 4\right),  \label{-1/2,1,1 loop eqn}
\end{align}
here $\Theta = \frac{\rme^{2w}}{\rme^{2w}-\lmd}$.
The first three
lower-genus components of its solution are given by
\begin{align}
  \mcalH_1 &=\frac1{24}\log w_x+\frac 18w, \label{H1-sol}\\
  \mcalH_2 &= -\frac{w^{(4)}}{1152w_x^2}
    +\frac{7w^{(3)}w_{xx}}{1920w_x^3}
    -\frac{w^{(3)}}{160w_x}
    -\frac{w_{xx}^3}{360w_x^4}
    +\frac{11w_{xx}^2}{1920w_x^2}
    -\frac{7w_{xx}}{640}
    -\frac{w_x^2}{1440}, \label{H2-sol}\\
  \mcalH_3 &=
     \frac{w^{(7)}}{82944 w_x^3}
    -\frac{7 w^{(6)} w_{xx}}{46080 w_x^4}
    +\frac{7 w^{(6)}}{46080 w_x^2}
    -\frac{53 w^{(3)} w^{(5)}}{161280 w_x^4}
    +\frac{353 w^{(5)} w_{xx}^2}{322560 w_x^5} \notag\\
&\quad
    -\frac{383 w^{(5)} w_{xx}}{322560 w_x^3}
    +\frac{41 w^{(5)}}{64512 w_x}
    -\frac{103 (w^{(4)})^2}{483840 w_x^4}
    +\frac{1273 w^{(4)}w^{(3)}  w_{xx}}{322560 w_x^5}
    -\frac{689 w^{(4)}w^{(3)} }{322560 w_x^3} \notag\\
&\quad
    -\frac{83 w^{(4)} w_{xx}^3}{15120 w_x^6}
    +\frac{185 w^{(4)} w_{xx}^2}{32256 w_x^4}
    -\frac{373 w^{(4)} w_{xx}}{161280 w_x^2}
    +\frac{185 w^{(4)}}{193536}
    +\frac{59 (w^{(3)})^3}{64512 w_x^5} \notag\\
&\quad
    -\frac{83 (w^{(3)})^2 w_{xx}^2}{7168 w_x^6}
    +\frac{869 (w^{(3)})^2 w_{xx}}{107520 w_x^4}
    -\frac{61 (w^{(3)})^2}{35840 w_x^2}
    +\frac{59 w^{(3)} w_{xx}^4}{3024 w_x^7}
    -\frac{9343 w^{(3)} w_{xx}^3}{483840 w_x^5}  \notag\\
&\quad
    +\frac{151 w^{(3)} w_{xx}^2}{23040 w_x^3}
    -\frac{19 w^{(3)} w_{xx}}{120960 w_x}
    +\frac{41 w^{(3)} w_x}{120960}
    -\frac{5 w_{xx}^6}{648 w_x^8}
    +\frac{131 w_{xx}^5}{15120 w_x^6}
    -\frac{57 w_{xx}^4}{17920 w_x^4}  \notag\\
&\quad
    +\frac{w_{xx}^3}{9072 w_x^2}
    +\frac{31 w_{xx}^2}{120960}
    -\frac{w_x^4}{90720}. \label{H3-sol}
\end{align}

\subsection{The loop equation of the one dimensional GFM}

 For the one dimensional generalized Frobenius manifold \eqref{F=v^4/12}, its period can be chosen as
 \[
   p(v;\lmd) = \log\left(v+\sqrt{v^2-\lmd}\right),
 \]
 with the associated Gram matrix $(G^{\afa\beta})=1$ and the star product
 \[
   \pfrac{p}{\lmd}*\pfrac{p}{\lmd}
  =\frac{1}{8\lmd^2}-\frac{1}{8(v^2-\lmd)^2},
 \]
 therefore the loop equation of this generalized Frobenius manifold can be written in the form
 \begin{align}
&
   \sum_{s\geq 0}
   \pfrac{\Delta\mcalF}{v^{(s)}}
   \p_x^s\frac{1}{2v(v^2-\lmd)}
 +
  \sum_{s\geq 1}
   \pfrac{\Delta\mcalF}{v^{(s)}}
   s\p_x^s
   \left[
     \frac{1}{2\lmd}
     \left(
       \frac{1}{\sqrt{v^2-\lmd}}-\frac 1v
     \right)
   \right]  \notag \\
&
  +\sum_{s\geq 1}
   \pfrac{\Delta\mcalF}{v^{(s)}}
   \sum_{k=1}^{s}
   {s\choose k}
   \left[
     \p_x^{k-1}
     \frac{1}{2\lmd}
     \left(
       \frac{v}{\sqrt{v^2-\lmd}}-1
     \right)
   \right]
   \left(
     \p_x^{s+1-k}
     \frac{1}{\sqrt{v^2-\lmd}}
   \right)  \notag \\
=&
  \frac12\veps^2
  \sum_{k,\ell\geq 0}
  \left(
    \pfrac{\Delta\mcalF}{v^{(k)}}
    \pfrac{\Delta\mcalF}{v^{(\ell)}}
   +\frac{\p^2\Delta\mcalF}{\p v^{(k)}\p v^{(\ell)}}
  \right)
  \left(
    \p_x^{k+1}
    \frac{1}{\sqrt{v^2-\lmd}}
  \right)
  \left(
    \p_x^{\ell+1}
    \frac{1}{\sqrt{v^2-\lmd}}
  \right)  \notag \\
&+
   \frac12\veps^2
     \sum_{k\geq 0}
     \pfrac{\Delta\mcalF}{v^{(k)}}
     \p_x^{k+1}
     \frac{v^2v_x}{(v^2-\lmd)^3}
 -
   \frac{1}{16}\frac{1}{(v^2-\lmd)^2},\label{Loop eqn for v^4/12}
\end{align}
see also in \cite{GFM},
where the unknown function $\Delta\mcalF$ has the genus expansion
\[\Delta\mcalF=\Delta\mcalF(\veps)=\sum_{g\geq 1}\veps^{2g-2}\mcalF_g(v,v_x,...,v^{(3g-2)}),\]
and the above loop equation is required to hold true identically with respect to the parameter $\lmd$.
The first three lower-genus components of the solution to this loop equation are given by
\begin{align}
\mcalF_1 &=\, \frac{1}{24}\log v_x+\frac{1}{12}\log v,  \label{F1 in v^4/12}\\
\mcalF_2 &=\,
 \frac{ v^{(4)}v}{576 v_x^2}
-\frac{7v^{(3)}v_{xx}v}{960 v_x^3}
+\frac{37 v^{(3)}}{2880 v_x}
+\frac{v_{xx}^3 v}{180 v_x^4}
-\frac{11 v_{xx}^2}{960 v_x^2}
+\frac{v_{xx}}{120 v}
-\frac{v_x^2}{120 v^2},  \label{F2 in v^4/12}
\\
\mcalF_3 &=\,
  \frac{    v^{(7)}v^2 }{20736 v_x^3}
 -\frac{7   v^{(6)}v_{xx}v^2}{11520 v_x^4}
 +\frac{91  v^{(6)}v }{103680 v_x^2}
 -\frac{53  v^{(5)}v^{(3)}v^2}{40320 v_x^4}
 +\frac{353 v^{(5)}v_{xx}^2 v^2}{80640 v_x^5} \notag \\
&\quad
 -\frac{419 v^{(5)}v_{xx}v }{60480 v_x^3}
 +\frac{913 v^{(5)}}{241920 v}
 -\frac{103 (v^{(4)})^2 v^2 }{120960 v_x^4}
 +\frac{1273  v^{(4)} v^{(3)}v_{xx}v^2 }{80640 v_x^5}\notag\\
 &\quad
 -\frac{9169 v^{(4)} v^{(3)} v }{725760 v_x^3}
 -\frac{83  v^{(4)} v_{xx}^3 v^2}{3780 v_x^6}
 +\frac{545 v^{(4)} v_{xx}^2 v}{16128 v_x^4}
 -\frac{3727v^{(4)}v_{xx}}{241920 v_x^2}
 +\frac{v^{(4)}}{1512 v}\notag\\
 &\quad
 +\frac{59 (v^{(3)})^3 v^2 }{16128 v_x^5}
 -\frac{83  (v^{(3)})^2 v_{xx}^2 v^2 }{1792 v_x^6}
 +\frac{97  (v^{(3)})^2v_{xx}v}{2016 v_x^4}
 -\frac{1669 (v^{(3)})^2}{145152 v_x^2}\notag\\
 &\quad
 +\frac{59 v^{(3)}  v_{xx}^4 v^2}{756 v_x^7}
 -\frac{5555  v^{(3)}v_{xx}^3v  }{48384 v_x^5}
 +\frac{325  v^{(3)} v_{xx}^2}{6912 v_x^3}
 -\frac{v^{(3)}v_x}{378 v^2}
 -\frac{5 v_{xx}^6v^2 }{162 v_x^8}\notag\\
 &\quad
 +\frac{13 v_{xx}^5v }{252 v_x^6}
 -\frac{193 v_{xx}^4}{8064 v_x^4}
 -\frac{v_{xx}^2}{504 v^2}
 +\frac{v_{xx}v_x^2 }{126 v^3}
 -\frac{v_x^4}{252 v^4}
 \label{F3 in v^4/12}.
\end{align}
Note that under the change of variable $w=\log v$, we have
\[
  \mcalF_1 = \mcalH_1,\quad
  \mcalF_2 = -2\mcalH_2,\quad
  \mcalF_3 = 4\mcalH_3,
\]
where $\mcalH_1,\mcalH_2,\mcalH_3$ are the first three components of the solution of the loop equation \eqref{-1/2,1,1 loop eqn},
which are given in \eqref{H1-sol}--\eqref{H3-sol}. In general, we have

\begin{prop}
Under the change of variable $w=\log v$ and the identification $\epsilon=\sqrt{2}\,\rmi\veps$, the following identity holds true
\begin{equation}\label{main lemma}
  \Delta\mcalF = \Delta\mcalH + \mathrm{const.},
\end{equation}
here
\[\Delta\mcalF=\Delta\mcalF(\veps)=\sum_{g\geq 1}\veps^{2g-2}\mcalF_g,\quad
\Delta\mcalH=\Delta\mcalH(\epsilon)=\sum_{g\geq 1}\epsilon^{2g-2}\mcalH_g\]
are the unique (up to adding a constant) solutions to the loop equations \eqref{Loop eqn for v^4/12} and \eqref{-1/2,1,1 loop eqn}
respectively.
\end{prop}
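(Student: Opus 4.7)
The strategy is to combine the uniqueness of solutions to both loop equations with an explicit verification that, under the substitution $w=\log v$ and $\epsilon=\sqrt{2}\,\rmi\veps$, the loop equation \eqref{-1/2,1,1 loop eqn} is transformed into the loop equation \eqref{Loop eqn for v^4/12}. Once this is established, Theorem \ref{thm:the uniqueness} (and its analogue from \cite{Crelles} applied to the specialization $(\sfp,\sfq,\sfr)=(-\frac12,1,1)$) forces $\Delta\mcalF$ and $\Delta\mcalH$ to agree up to an additive constant. The low-genus computations \eqref{F1 in v^4/12}--\eqref{F3 in v^4/12} versus \eqref{H1-sol}--\eqref{H3-sol}, together with $\epsilon^{2g-2}=(-2)^{g-1}\veps^{2g-2}$, already confirm the expected pattern $\mcalF_g=(-2)^{g-1}\mcalH_g$ for $g\le 3$; the task is to promote this to all $g$.

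First I would reduce the right-hand side of \eqref{-1/2,1,1 loop eqn} to expressions in $v$ and $\lmd$. Since $\Theta=\frac{\rme^{2w}}{\rme^{2w}-\lmd}=\frac{v^2}{v^2-\lmd}$ and $\sqrt{\Theta}=\frac{v}{\sqrt{v^2-\lmd}}$, the factors $\p_x^k\sqrt{\Theta}$ map directly to the kernels $\p_x^k\frac{v}{\sqrt{v^2-\lmd}}$ that enter \eqref{Loop eqn for v^4/12}, while the source $-\bigl(\frac{\Theta^2}{8}-\frac{\Theta}{4}\bigr)=-\frac18+\frac{\lmd^2}{8(v^2-\lmd)^2}$ matches $-\frac{1}{16(v^2-\lmd)^2}$ up to a $\lmd$-dependent term that can be absorbed into the identity $\frac{v^2}{(v^2-\lmd)^2}=\frac{1}{v^2-\lmd}+\frac{\lmd}{(v^2-\lmd)^2}$ and a harmless $\veps$-independent additive constant (which matches the ``$+\mathrm{const.}$'' in the proposition). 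The period $p(v;\lmd)=\log(v+\sqrt{v^2-\lmd})$ gives $\p_v p=\frac{1}{\sqrt{v^2-\lmd}}$, $\p_\lmd p=-\frac{1}{2\sqrt{v^2-\lmd}\,(v+\sqrt{v^2-\lmd})}$, and these are the very building blocks of the left-hand side of \eqref{Loop eqn for v^4/12}.

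Second, I would propagate the chain rule carefully. Using the Fa\`a di Bruno formula, each $w^{(s)}$ is a differential polynomial in $v, v^{-1}, v_x, \dots, v^{(s)}$; the Jacobian $\pfrac{w^{(s)}}{v^{(r)}}$ is lower triangular with diagonal $1/v$, so $\pfrac{}{w^{(s)}}=v\,\pfrac{}{v^{(s)}}+(\text{lower-order terms})$. Feeding this into the linear action $\sum_s\bigl(\p_x^s\Theta+\sum_{k=1}^{s}\binom{s}{k}(\p_x^{k-1}\sqrt{\Theta})(\p_x^{s-k+1}\sqrt{\Theta})\bigr)\pfrac{\Delta\mcalH}{w^{(s)}}$ and collecting terms by discrete integration by parts on the jet space reorganizes the left-hand side of \eqref{-1/2,1,1 loop eqn} into exactly the linear operator of \eqref{Loop eqn for v^4/12}, the two extra kernel families $\frac{1}{2v(v^2-\lmd)}$ and $\frac{1}{2\lmd}\bigl(\frac{1}{\sqrt{v^2-\lmd}}-\frac1v\bigr)$ appearing because of the extra $v$ factors generated by the chain rule. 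The quadratic block in $\Delta\mcalH$ transforms analogously, with the change $\epsilon^2\mapsto -2\veps^2$ accounting for the opposite sign between the two right-hand sides.

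The main obstacle is the combinatorial bookkeeping of this term-by-term matching: one must track how the binomial kernels generated by $\p_x^s\sqrt{\Theta}$ recombine, after applying the chain rule for $w=\log v$, into the three distinct kernel families that appear on the left-hand side of \eqref{Loop eqn for v^4/12}. Rather than matching termwise, the cleanest route is to exploit that both equations are equivalent to the linearization condition for the Virasoro symmetries of one and the same integrable hierarchy: by Lemma \ref{lem: -1/2,1,1 and qKdV} the $(-\frac12,1,1)$-FVH coincides with the positive and negative flows of the extended $q$-deformed KdV hierarchy under $U=\rme^W$, and by Remarks \ref{rmk:disp lim of p/m flows} and the dispersionless computation of the logarithmic flows, the latter hierarchy has the Principal Hierarchy of $M$ as dispersionless limit with $U\leftrightarrow v$. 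Thus the Virasoro operators of the two tau covers are identified under $w=\log v$, $\epsilon=\sqrt{2}\,\rmi\veps$, and the linearization conditions \eqref{linearization condition II} that give rise to the two loop equations coincide; uniqueness then delivers \eqref{main lemma}.
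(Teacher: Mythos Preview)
Your overall strategy---relate the two loop equations under $w=\log v$, $\epsilon=\sqrt{2}\,\rmi\veps$ and then invoke uniqueness---is exactly the paper's, but the implementation you sketch has a genuine gap: the two loop equations are \emph{not} literally equivalent under this change of variables. If you carry out the term-by-term matching you describe, you will find a discrepancy that does not close. The paper handles this as follows. After rewriting \eqref{-1/2,1,1 loop eqn} in the $v$-coordinate (using the jet change-of-variables formula \eqref{change jet variables}), it multiplies \eqref{Loop eqn for v^4/12} through by $2\lmd^2$; the resulting equation differs from the rewritten Hodge equation precisely by the two extra terms
\[
-\lmd\sum_{s\geq 0}(s+1)\Bigl(\p_x^s\tfrac1v\Bigr)\pfrac{\Delta\mcalF}{v^{(s)}}
\quad\text{and}\quad
-\sum_{s\geq 0}v^{(s)}\pfrac{\Delta\mcalF}{v^{(s)}}
\]
on the left, and the replacement of $-\frac{\lmd^2}{8(v^2-\lmd)^2}$ by $-\frac{2\lmd v^2-v^4}{8(v^2-\lmd)^2}$ on the right. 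These extras are exactly the $\frac{1}{\lmd}$- and $\frac{1}{\lmd^2}$-coefficients of \eqref{Loop eqn for v^4/12} at $\lmd=\infty$, i.e.\ the $m=-1$ and $m=0$ linearization conditions, which on $\Delta\mcalF$ evaluate to $0$ and $\frac18$ respectively; and $-\frac{\lmd^2}{8(v^2-\lmd)^2}-\bigl(-\frac{2\lmd v^2-v^4}{8(v^2-\lmd)^2}\bigr)=-\frac18$. So the equations agree \emph{only after} these constraints are used. This shows $\Delta\mcalF$ solves the Hodge loop equation (one direction suffices), and uniqueness of the latter gives \eqref{main lemma}. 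Your source-term computation $-\bigl(\tfrac{\Theta^2}{8}-\tfrac{\Theta}{4}\bigr)=-\tfrac18+\tfrac{\lmd^2}{8(v^2-\lmd)^2}$ has a sign slip and, more importantly, cannot be made to match $-\frac{1}{16(v^2-\lmd)^2}$ without the $2\lmd^2$ rescaling and the two constraint identities.

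Your proposed ``structural'' shortcut---identifying the Virasoro operators of the two tau covers via Lemma \ref{lem: -1/2,1,1 and qKdV}---is appealing but not rigorous as stated: the Hodge loop equation \eqref{Hodge loop eqn} is imported from \cite{JDG} and is not presented there as a Virasoro linearization condition for the fractional Volterra tau structure, so you would first have to establish that interpretation and then match the operators. The paper's route avoids this by working directly with the two explicit loop equations.
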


\begin{proof}
  Applying the formula \eqref{change jet variables} to the relation $w=\log v$, we obtain
  \begin{equation}
    \pp{w^{(s)}}=
    \sum_{t\geq s}{t\choose s}v^{(t-s)}\pp{v^{(t)}}.
  \end{equation}
Then under the above replacement and the identification $\epsilon = \sqrt{2}\,\rmi\veps$,
the loop equation \eqref{-1/2,1,1 loop eqn} for $\Delta\mcalH = \Delta\mcalH(\epsilon)$ can be rewritten in the $v$-coordinate as
\begin{align}
&
  \sum_{s\geq 0}
    \pfrac{\Delta\mcalH}{v^{(s)}}\p_x^s\frac{v^3}{v^2-\lmd}
 +
  \sum_{s\geq 1}
    \pfrac{\Delta\mcalH}{v^{(s)}}
    \sum_{k=1}^{s}{s\choose k}
    \left(\p_x^{k-1}\frac{v}{\sqrt{v^2-\lmd}}\right)
    \left(\p_x^{s+1-k}\frac{\lmd}{\sqrt{v^2-\lmd}}\right)  \notag\\
=&
  \veps^2
  \sum_{k,\ell\geq 0}
    \left(\p_x^{k+1}\frac{\lmd}{\sqrt{v^2-\lmd}}\right)
    \left(\p_x^{\ell+1}\frac{\lmd}{\sqrt{v^2-\lmd}}\right)
    \left(
      \frac{\p^2\Delta\mcalH}{\p v^{(k)}\p v^{(\ell)}}
     +\pfrac{\Delta\mcalH}{v^{(k)}}
      \pfrac{\Delta\mcalH}{v^{(\ell)}}
    \right)  \notag\\
&
  +\veps^2\sum_{k\geq 0}
    \left(
      \p_x^{k+1}
        \frac{\lmd^2v^2v_x}{(v^2-\lmd)^3}
    \right)\pfrac{\Delta\mcalH}{v^{(k)}}
  -\frac{2\lmd v^2-v^4}{8(v^2-\lmd)^2}.   \label{231013-1738}
\end{align}
On the other hand,
for the solution $\Delta\mcalF=\sum_{g\geq 1}\veps^{2g-2}\mcalF_g$ to \eqref{Loop eqn for v^4/12},
by multiplying both sides of \eqref{Loop eqn for v^4/12} by $2\lmd^2$
we obtain
\begin{align}
&
  \sum_{s\geq 0}\pfrac{\Delta\mcalF}{v^{(s)}}
  \p_x^s\frac{v^3}{v^2-\lmd}
 -\lmd\sum_{s\geq 0}\pfrac{\Delta\mcalF}{v^{(s)}}(s+1)\left(\p_x^s\frac 1v\right)
 -\sum_{s\geq 0}v^{(s)}\pfrac{\Delta\mcalF}{v^{(s)}} \notag \\
&
 +\sum_{s\geq 1}
   \pfrac{\Delta\mcalF}{v^{(s)}}
   \sum_{k=1}^{s}{s\choose k}
     \left(\p_x^{k-1}\frac{v}{\sqrt{v^2-\lmd}}\right)
     \left(\p_x^{s+1-k}\frac{\lmd}{\sqrt{v^2-\lmd}}\right)  \notag \\
=&
  \veps^2
  \sum_{k,\ell\geq 0}
    \left(\p_x^{k+1}\frac{\lmd}{\sqrt{v^2-\lmd}}\right)
    \left(\p_x^{\ell+1}\frac{\lmd}{\sqrt{v^2-\lmd}}\right)
    \left(
      \frac{\p^2\Delta\mcalF}{\p v^{(k)}\p v^{(\ell)}}
     +\pfrac{\Delta\mcalF}{v^{(k)}}
      \pfrac{\Delta\mcalF}{v^{(\ell)}}
    \right)  \notag\\
&
  +\veps^2\sum_{k\geq 0}
    \left(
      \p_x^{k+1}
        \frac{\lmd^2v^2v_x}{(v^2-\lmd)^3}
    \right)\pfrac{\Delta\mcalF}{v^{(k)}}
  -\frac{\lmd^2}{8(v^2-\lmd)^2}. \label{231014-1638}
\end{align}
Comparing the coefficients of $\frac{1}{\lmd}$ and $\frac{1}{\lmd^2}$ at $\lmd=\infty$ in the loop equation \eqref{Loop eqn for v^4/12},
we obtain
\begin{align}
  &\sum_{s\geq 0}(s+1)\left(\p_x^s\frac{1}{v}\right)\pfrac{\Delta\mcalF}{v^{(s)}}= 0, \label{231013-1729-1}\\
  &\sum_{s\geq 0}v^{(s)}\pfrac{\Delta\mcalF}{v^{(s)}}=\frac 18 , \label{231013-1729-2}
\end{align}
which are equivalent to the linearization conditions \eqref{linearization condition} for $m=-1, 0$ respectively.
From \eqref{231014-1638}--\eqref{231013-1729-2} and \eqref{231013-1738}, it is clear that the solution $\Delta\mcalF$ to \eqref{Loop eqn for v^4/12} is also a solution to \eqref{-1/2,1,1 loop eqn}. The proposition is proved.
\end{proof}

\begin{thm}
  Suppose $v=v(t)$ is a solution to the Principal Hierarchy \eqref{PH-t1p}--\eqref{PH-t0p}
  of the one dimensional GFM \eqref{F=v^4/12}, then
  \begin{equation}
    U=
      \frac{\Lmd-\Lmd^{-1}}{2\epsilon\p_x}
      \left(v+\veps^2\frac{\p^2\Delta\mcalF}{\p x \p t^{1,0}}\right)
  \end{equation}
  satisfies the extended $q$-deformed KdV hierarchy \eqref{positive flow}--\eqref{log flow},
  where $\epsilon = \sqrt{2}\,\rmi\veps$, $\Lmd=\rme^{\epsilon\p_x}$, and $\Delta\mcalF=\Delta\mcalF(\veps)$ is the solution to the loop equation \eqref{Loop eqn for v^4/12}.
\end{thm}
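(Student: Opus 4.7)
The plan is to reduce the theorem to the quasi-triviality of the fractional Volterra hierarchy (FVH) of type $(\sfp,\sfq,\sfr)=(-\tfrac12,1,1)$. First I would combine Remark~\ref{rmk:disp lim of p/m flows} with Lemma~\ref{lem: -1/2,1,1 and qKdV} to observe that, under the identifications $w=\log v$, $\epsilon=\sqrt{2}\,\rmi\veps$, and the time rescalings \eqref{Tt-1}, \eqref{Tt-2}, the dispersionless Principal Hierarchy of $M$ coincides with the dispersionless $(-\tfrac12,1,1)$-FVH. Using the identity \eqref{main lemma} ($\Delta\mcalF=\Delta\mcalH$ modulo a constant), I would introduce the two tau functions $\log\tau_{\mathrm{GFM}}=\veps^{-2}f+\Delta\mcalF$ and $\log\tau_{\mathrm{FVH}}=\epsilon^{-2}\mcalH_0+\Delta\mcalH$, where $\mcalH_0$ is the dispersionless FVH free energy satisfying $w=-\p_x^2\mcalH_0$. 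The leading-order FVH tau relation \eqref{e^W tau relation} for $s=-\tfrac12$ combined with $\p_x\p_{t^{1,0}}f=v$ would then yield the crucial identity $\p_x\p_{t^{1,0}}(\mcalH_0+2f)=0$, so that $(1-\Lmd)\p_{t^{1,0}}\log\tau_{\mathrm{FVH}}=(1-\Lmd)\p_{t^{1,0}}\log\tau_{\mathrm{GFM}}$ exactly.

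Next, the identification $U=\rme^W$, where $W$ is the FVH solution obtained from $\mcalH_0$ and $\Delta\mcalH$ via quasi-triviality, would follow from the tau relation \eqref{e^W tau relation} at $s=-\tfrac12$, which becomes $\tfrac{\epsilon}{4}(1-\Lmd)\p_{t^{1,0}}\log\tau_{\mathrm{GFM}}=\res\mcalL^{1/2}$ after using $\p_{T_{-1/2}}=\tfrac14\p_{t^{1,0}}$ (see \eqref{t^1,0 and T-1/2}) and the previous step. Applying $\Lmd^{-1}$ and taking a suitable sum gives
\[
(\Lmd-\Lmd^{-1})\p_{t^{1,0}}\log\tau_{\mathrm{GFM}}=-\tfrac{4}{\epsilon}(1+\Lmd^{-1})\res\mcalL^{1/2},
\]
and with $\veps^2=-\epsilon^2/2$ this rearranges into
\[
U=\frac{\Lmd-\Lmd^{-1}}{2\epsilon\p_x}\bigl(v+\veps^2\p_x\p_{t^{1,0}}\Delta\mcalF\bigr)=(1+\Lmd^{-1})\res\mcalL^{1/2}.
\]
On the other hand, writing $\mcalL^{1/2}=\Lmd^{-1}+a_1+a_2\Lmd+\cdots$ and equating the $\Lmd^{-1}$-coefficients of $(\mcalL^{1/2})^2=\mcalL$ yields $a_1+a_1^-=\rme^W$, i.e.\ $(1+\Lmd^{-1})\res\mcalL^{1/2}=\rme^W$. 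Hence $U=\rme^W$, and by Lemma~\ref{lem: -1/2,1,1 and qKdV} the function $U$ satisfies the positive and negative flows \eqref{positive flow}, \eqref{negative flow} of the extended $q$-deformed KdV hierarchy.

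Finally, for the logarithmic flows $\p/\p t^{0,p}$ with $p\geq 0$ (not covered by the FVH), both the topological deformation of the Principal Hierarchy and the extended $q$-deformed KdV hierarchy share the bihamiltonian structure $(\mcalP_1,\mcalP_2)$ of \eqref{biham structure of q-KdV}; by Theorem~\ref{thm:biham log flow} together with the recursion \eqref{log recur relation} all the logarithmic flows are generated from the common flow $\p/\p t^{0,0}=\p_x$, so they agree as well. The principal obstacle is the tau-function reconciliation in the first step: verifying $\p_x\p_{t^{1,0}}(\mcalH_0+2f)=0$ is essential because the GFM and FVH tau functions are \emph{not} equal (their dispersionless parts differ nontrivially), and without this identity the substitution of $\log\tau_{\mathrm{GFM}}$ into \eqref{e^W tau relation} in the middle step would be unjustified.
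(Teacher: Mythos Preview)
Your derivation of $U=\rme^W$ is essentially the same as the paper's, only repackaged: the paper writes $v=-2\,\p_x\p_{T_{-1/2}}\mcalH_0$ directly and substitutes it into the definition of $U$, whereas you encode the same identity as $\p_x\p_{t^{1,0}}(\mcalH_0+2f)=0$ and use it to swap $\log\tau_{\mathrm{GFM}}$ for $\log\tau_{\mathrm{FVH}}$ inside the tau relation \eqref{e^W tau relation}. The intermediate computation $(1+\Lmd^{-1})\res\mcalL^{1/2}=\rme^W$ is exactly the relation $\res\mcalL^{1/2}=\frac{1}{1+\Lmd^{-1}}\rme^W$ that the paper states without expanding. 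So for the positive and negative flows your argument and the paper's coincide, and invoking Lemma~\ref{lem: -1/2,1,1 and qKdV} is the right conclusion.

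The genuine gap is in your treatment of the logarithmic flows. You assert that the topological deformation of the Principal Hierarchy ``shares the bihamiltonian structure $(\mcalP_1,\mcalP_2)$'' and then run the recursion \eqref{log recur relation} from $\p_x$. But nothing in your argument establishes that the quasi-Miura transformation $v\mapsto U$ carries the dispersionless bihamiltonian structure of $M$ to the specific pair $(\mcalP_1,\mcalP_2)$; a priori it produces \emph{some} dispersive deformation $(\tilde\mcalP_1,\tilde\mcalP_2)$ of the same leading term, and the recursion operator $\tilde\mcalR=\tilde\mcalP_2\tilde\mcalP_1^{-1}$ need not equal $\mcalR$. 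Matching of the positive/negative flows does not by itself pin down the full bihamiltonian structure. The paper avoids this issue by appealing to the triviality of the first bihamiltonian cohomology of the dispersionless pencil (the results of \cite{DLZ06, Liusq}): once one knows that the transformed $\p/\p t^{0,p}$-flows are bihamiltonian deformations of the dispersionless logarithmic flows, uniqueness of such deformations forces them to agree with \eqref{log flow}, without ever identifying $(\tilde\mcalP_1,\tilde\mcalP_2)$ with $(\mcalP_1,\mcalP_2)$ explicitly. To repair your argument you should replace the recursion step by this cohomological uniqueness statement.
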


\begin{proof}
Applying formula \eqref{e^W tau relation} to the case $(\sfp,\sfq,\sfr)=\left(-\frac12,1,1\right)$ and $s=-\frac12$, we obtain
\[
  \epsilon(\Lmd^{-1}-1)\Lmd
  \pfrac{\log\tau}{T_{-\frac12}}=\res\mcalL^{\frac12}=\frac{1}{1+\Lmd^{-1}}\rme^W,
\]
which can be rewritten as
\begin{equation}\label{240130-2219}
  \epsilon(\Lmd^{-1}-\Lmd)
  \pp{T_{-\frac12}}
  \left(
    \frac{1}{\epsilon^2}\mcalH_0+\Delta\mcalH
  \right) = \rme^W,
\end{equation}
where $\Delta\mcalH=\Delta\mcalH(\epsilon)$ is the solution to the loop equation \eqref{-1/2,1,1 loop eqn}.
The dispersionless limit of the above formula is
\[
  v=\rme^w=-2\frac{\p^2\mcalH_0}{\p x\p T_{-\frac12}}.
\]
Then by using \eqref{t^1,0 and T-1/2}, \eqref{main lemma}, \eqref{240130-2219} and the identification $\epsilon=\sqrt{2}\,\rmi\veps$, we obtain
  \begin{align*}
  U&=
    \frac{\Lmd-\Lmd^{-1}}{2\epsilon\p_x}
    \left(
      v+\veps^2\frac{\p^2\Delta\mcalF}{\p x\p t^{1,0}}
    \right) \\
  &=
    \frac{\Lmd-\Lmd^{-1}}{2\epsilon\p_x}
    \left(
      -2\frac{\p^2\mcalH_0}{\p x\p T_{-\frac12}}
      +\left(\frac{\epsilon}{\sqrt{2}\,\rmi}\right)^2
      \cdot 4\frac{\p^2\Delta\mcalH}{\p x\p T_{-\frac12}}
    \right) \\
  &=
    \epsilon(\Lmd^{-1}-\Lmd)
    \pp{T_{-\frac12}}
    \left(
      \frac{1}{\epsilon^2}\mcalH_0
     +\Delta\mcalH
    \right)\\
  &=
    \rme^W.
  \end{align*}
  Since $W$ satisfies the fractional Volterra hierarchy \eqref{240118-1719-1}, \eqref{240118-1719-2} of $(-\frac12,1,1)$-type,
we know from Lemma \ref{lem: -1/2,1,1 and qKdV} that $U=\rme^W$ satisfies the positive and negative flows of the $q$-deformed KdV hierarchy.

Finally, we note that the quasi-miura transformation
\[v\mapsto U=v+\veps A^{[1]}+\veps^2 A^{[2]}+\cdots\]
transforms the $\pp{t^{0,p}}$-flows $(p\geq 0)$ of the Principal Hierarchy to some bihamiltonian flows which are deformations of the dispersionless logarithmic flows \eqref{log flow} of the extended $q$-deformed KdV hierarchy. From the triviality of the first bihamiltonian cohomology of the dispersionless limit of the bihamiltonian structure \eqref{biham structure of q-KdV}, we know that such deformations of bihamiltonian flows are unique \cite{DLZ06, Liusq}, thus the function $U$ satisfies the logarithmic flows of the extended $q$-deformed KdV hierarchy. The theorem is proved.
\end{proof}

\subsection{The Volterra hierarchy and the $q$-deformed KdV hierarchy}

Another important case of the fractional Volterra hierarchy \eqref{general FVH-1}, \eqref{general FVH-2} is when $(\sfp,\sfq,\sfr)=(1,1,-\frac12)$.
In this case the shift operator $\Lmd_3=\Lmd^2$, and the Lax operator
$\mcalL$, which we now denote by $\tilde{\mcalL}$, has the form
\[
  \tilde{\mcalL}=\Lmd+\rme^{\tilde{W}}\Lmd^{-1},
\]
here we denote the unknown function $W$ by $\tilde{W}$. The nontrivial flows of the integrable hierarchy are given by
\begin{equation} \label{Volterra}
  \epsilon\pfrac{\tilde{\mcalL}}{\tilde{T}_p}
 = \left[
   \left(
     \tilde{\mcalL}^{2p}
   \right)_+ , \tilde{\mcalL}
 \right],\quad p\geq 1.
\end{equation}
This integrable hierarchy is called the Volterra hierarchy (also known as the discrete KdV hierarchy), which
has the following bihamiltonian structure \cite{Adler, FT, DLYZ1}:
\begin{align*}
  \tilde{\mcalP}_1&=
    \frac{1}{2\epsilon}
    \left(
      (\Lmd+1)\rme^{\tilde{W}}(\Lmd+1)
     -(1+\Lmd^{-1})\rme^{\tilde{W}}(1+\Lmd^{-1})
    \right), \\
  \tilde{\mcalP}_2&=
    \frac{2}{\epsilon}(\Lmd-\Lmd^{-1}).
\end{align*}
By a straightforward calculation one can verify that
the bihamiltonian structures $(\tilde{\mcalP}_1,\tilde{\mcalP}_2)$ and $(\mcalP_1,\mcalP_2)$ (see its definition given in \eqref{biham structure of q-KdV})
are related by the change of the unknown function
\begin{equation}\label{240119-2005}
  \tilde{W}=-(\Lmd^{\frac12}+\Lmd^{-\frac12})\log U.
\end{equation}
Suppose $U$ satisfies the negative flows \eqref{negative flow} of the extended $q$-deformed KdV hierarchy, then it is not difficult to see that the dispersionless limits of these flows coincide with the that of the flows of the Volterra hierarchy up to a certain rescaling of the times. More precisely, in the dispersionless limit, we have
\begin{equation}\label{zh-1-15}
\pfrac{}{t^{0,-p}}
=
  (-1)^p\frac{(p-1)!}{2^{2p}}\pfrac{}{\tilde{T}_p},\quad p\ge 1.
  \end{equation}
Since the flows of the Volterra hierarchy and the negative flows of the $q$-deformed KdV hierarchy are bihamiltonian systems with respect to the same bihamiltonian structure, by using the uniqueness of deformations of bihamiltonian systems of hydrodynamic type \cite{DLZ06, Liusq}, we arrive at the fact that the relations
\eqref{zh-1-15} hold true for the full flows, i.e., under the change of the unknown function \eqref{240119-2005} we have
\[
\pfrac{\tilde{\mcalL}}{t^{0,-p}}
=
  (-1)^p\frac{(p-1)!}{2^{2p}}
  \left[
   \left(
     \tilde{\mcalL}^{2p}
   \right)_+ , \tilde{\mcalL}
 \right].
\]
This fact together with the result of the last subsection confirms the Conjecture 11.1 of \cite{GFM}.

\begin{rmk}
The above-mentioned relation between the negative flows of the $q$-deformed KdV hierarchy and that of the Volterra hierarchy is given in \cite{Kup} for the first flows of these hierarchies. Due to this relation, the first negative flow of the $q$-deformed KdV hierarchy (see \eqref{t0,-1 flow}) is called the modified Volterra equation in \cite{Kup}.
\end{rmk}

\section{Conclusion}
In this paper we prove the existence and uniqueness of solution to the loop equation of a semisimple generalized Frobenius manifold with non-flat unity. The solution to the loop equation yields a quasi-Miura transformation which defines the topological deformation of the Principal Hierarchy of the generalized Frobenius manifold. We show that the topological deformation of the Principal Hierarchy of a special 1-dimensional generalized Frobenius manifold is the extended $q$-deformed KdV hierarchy, part of this result gives a proof of the Conjecture 11.1 of \cite{GFM}.  It is also conjectured in \cite{GFM} that the topological deformation of the Principal Hierarchy of a special 2-dimensional generalized Frobenius manifold contains the flows of the Ablowitz--Ladik hierarchy. We will give a proof of this conjecture in a separate publication.

As we already know, for a usual semisimple Frobenius manifold the topological deformation of the Principal Hierarchy is a bihamiltonian integrable hierarchy of KdV type, i.e., both of the flows and the bihamiltonian structure can be represented in terms of differential polynomials of the unknown functions \cite{normalform, Buryak2012, Shadrin2022, LWZ1}. This polynomiality of the integrable hierarchy is quite nontrivial, since the coefficients of $\ve^{2g}$ of the quasi-Miura transformation that relates
the Principal Hierarchy and its topological deformation are rational functions of the $x$-derivatives of the unknown functions.
Nonetheless, we conjecture that the topological deformation of the Principal Hierarchy of a semisimple generalized Frobenius manifold with non-flat unity also possesses such a property of polynomiality, and we will study this topic in subsequent publications.

\vskip 0.3truecm
\noindent\textbf{Acknowledgement.}\, This work is supported by NSFC No.\,12171268. The authors would like to thank Di Yang, Chunhui Zhou and Zhe Wang for very helpful discussions on this work.
\vskip 0.3truecm 

\noindent\textbf{Data availability statement} Data sharing not applicable to this article as no datasets were generated or analysed during the current study.
\vskip 0.3truecm

\noindent\textbf{Declarations}
\vskip 0.2truecm
\noindent\textbf{Conflict of interest} The authors have no competing interests to declare that are relevant to the content of this article.


\begin{thebibliography}{99}
\bibitem{Adler}
Adler M.: On a trace functional for formal pseudo-differential operators and the symplectic structure of the Korteweg-de Vries type equations, Invent. Math. \textbf{50}, 219-248 (1979)

\bibitem{Lorenzoni3} Arsie, A., Buryak, A., Lorenzoni, P., Rossi, P.: Flat F-manifolds, F-CohFTs, and integrable hierarchies. Comm. Math. Phys. \textbf{388}, 291-328 (2021)

\bibitem{Lorenzoni2} Arsie, A., Buryak, A., Lorenzoni, P., Rossi, P.: Semisimple flat F-manifolds in higher genus. Comm. Math. Phys. \textbf{397}, 141-197 (2023)

\bibitem{Bakalov} Bakalov, B., Milanov, T.:  $\mathcal{W}$-constraints for the total descendant potential of a simple singularity.
Compositio Math. \textbf{149}, 840-888 (2013)

\bibitem{Boiti1}
Boiti, M., Pempinelli, F., Prinari, B., Spire, A.:
An integrable discretization of KdV at large times.
Inverse Problems \textbf{17}, 515--526 (2001)

\bibitem{Boiti2}
Boiti, M., Bruschi, M., Pempinelli, F., Prinari, B.: A discrete Schrödinger spectral problem and associated evolution equations.
J. Phys. A \textbf{36}, 139-149 (2003)

\bibitem{Brini1}
Brini, A.: The local Gromov-Witten theory of $\mathbb{CP}^1$ and integrable hierarchies. Comm. Math. Phys. \textbf{313}, 571-605 (2012)

\bibitem{Brini2}  Brini, A., Carlet, G., Rossi, P.: Integrable hierarchies and the mirror model of local $\mathbb{CP}^1$. Phys. D \textbf{241}, 2156--2167 (2012)

\bibitem{Buryak2012}
Buryak, A., Posthuma, H., Shadrin, S.:
A polynomial bracket for the Dubrovin--Zhang hierarchies.
J. Differential Geom. \textbf{92}, 153-185 (2012)

\bibitem{Buryak2015}
Buryak, A.: Double ramification cycles and integrable hierarchies.
Comm. Math. Phys. \textbf{336}, 1085-1107 (2015)

\bibitem{Buryak2019}
Buryak, A., Gu\'er\'e, J., Rossi, P.:
DR/DZ equivalence conjecture and tautological relations.
Geom. Topol. \textbf{23}, 3537-3600 (2019)

\bibitem{Buryak2021}
Buryak, A., Rossi, P., Shadrin, S.:
Towards a bihamiltonian structure for the double ramification hierarchy.
Lett. Math. Phys. \textbf{111}, 13 (2021)

\bibitem{extbitoda}
Carlet, G.: The extended bigraded Toda hierarchy. J. Phys. A \textbf{39}, 9411-9435 (2006)

\bibitem{exttoda}
Carlet, G., Dubrovin, B., Zhang, Y.: The extended Toda hierarchy. Mosc. Math. J. \textbf{4}, 313-332 (2004)

\bibitem{carlet2018deformations}
Carlet, G., Posthuma, H., Shadrin, S.: Deformations of semisimple Poisson pencils of hydrodynamic type are unobstructed.
J. Differential Geom. \textbf{108}, 63--89 (2018)

\bibitem{Du1}
Dubrovin, B.:  Integrable systems in topological field theory.
Nuclear Phys. B \textbf{379}, 627-689 (1992)

\bibitem{Du2}
Dubrovin, B.: Integrable systems and classification of 2-dimensional topological field theories.
Progr. Math., \textbf{115}, pp. 313-359.
Birkhäuser Boston, Boston, MA (1993)

\bibitem{Dubrovin2DTFT}
Dubrovin, B.: Geometry of 2D topological field theories. In: Integrable systems and quantum groups (Montecatini Terme, 1993), pp. 120-348,
Lecture Notes in Math., \textbf{1620}, Springer, Berlin (1996)

\bibitem{Dubrovinpainleve}
Dubrovin, B.: Painlev\'e transcendents in two-dimensional topological field theory.
In: The Painlev\'e property, pp. 287-412, CRM Ser. Math. Phys., Springer, New York (1999)

\bibitem{DZ1999}
Dubrovin, B., Zhang, Y.: Bi-Hamiltonian hierarchies in  2D topological field theory at one-loop approximation.
Comm. Math. Phys. \textbf{198}, 311-361 (1998)

\bibitem{normalform}
Dubrovin, B., Zhang Y.: Normal forms of hierarchies of integrable PDEs, Frobenius manifolds and Gromov-Witten invariants.
eprint arXiv:\,math/0108160 (2001)

\bibitem{DZ2004} Dubrovin, B., Zhang, Y.:
 Virasoro symmetries of the extended Toda hierarchy.
Comm. Math. Phys. \textbf{250}, 161-193 (2004)

\bibitem{DLZ06}
Dubrovin, B., Liu, S.-Q., Zhang, Y.:
On Hamiltonian perturbations of hyperbolic systems of conservation laws. I. Quasi-triviality of bi-Hamiltonian perturbations.
Comm. Pure Appl. Math. \textbf{59}, 559-615 (2006)

\bibitem{DLZ08}
Dubrovin, B., Liu, S.-Q., Zhang, Y.:
Frobenius manifolds and central invariants for the Drinfeld-Sokolov biHamiltonian structures.
Adv. Math. \textrm{219}, 780-837 (2008)

\bibitem{DLYZ1}
Dubrovin, B., Liu, S.-Q., Yang, D., Zhang, Y.:
Hodge integrals and tau-symmetric integrable hierarchies of Hamiltonian evolutionary PDEs. Adv. Math. \textbf{293}, 382-435 (2016)

\bibitem{DLYZ2} Dubrovin, B., Liu, S.-Q., Yang, D., Zhang, Y.: Hodge-GUE correspondence and the discrete KdV equation.
Comm. Math. Phys. \textbf{379}, 461-490 (2020)

\bibitem{DLZ2018}
Dubrovin, B., Liu, S.-Q., Zhang, Y.: Bihamiltonian cohomologies and integrable hierarchies II: the tau structures. Commun. Math. Phys. \textbf{361}, 467-524 (2018)

\bibitem{FT}
Faddeev L., Takhtajan L.: Hamiltonian Methods in the Theory of Solitons, Springer, Berlin (1986)

\bibitem{Frenkel} Frenkel, E.: Deformations of the KdV hierarchy and related soliton equations. Internat. Math. Res. Notices \textbf{1996}, 55-76 (1996)

\bibitem{FJRW} Fan, H., Jarvis, T., Ruan, Y.: The Witten equation, mirror symmetry, and quantum singularity theory. Ann. of Math. (2) \textbf{178}, 1-106 (2013)

\bibitem{Getzler} Getzler, E.: The Toda conjecture. In: Symplectic geometry and mirror symmetry (Seoul, 2000), pp. 51-79, World Sci. Publ., River Edge, NJ (2001)

\bibitem{Getzler04} Getzler, E.: The jet-space of a Frobenius manifold and higher-genus Gromov--Witten invariants. Aspects Math., \textbf{E36}, pp. 45-89, Friedr. Vieweg, Wiesbaden (2004)

\bibitem{GiventalMilanov} Givental, A., Milanov, T.: Simple singularities and integrable hierarchies. Progr. Math.,
 \textbf{232}, pp. 173-201, Birkhäuser Boston, Boston, MA (2005)

\bibitem{Shadrin2022}
Hern\'andez Iglesias, F., Shadrin, S.:
Bi-Hamiltonian recursion, Liu-Pandharipande relations, and vanishing terms of the second Dubrovin--Zhang bracket.
Comm. Math. Phys. \textbf{392}, 55-87 (2022)

\bibitem{GTM222}
Hall, B.: Lie Groups, Lie Algebras, and Representations. An Elementary Introduction. Grad. Texts
in Math., \textbf{222}, 2nd edition, Springer, Cham (2015)

\bibitem{Kon} Kontsevich, M.: Intersection theory on the moduli space of curves and the matrix Airy function. Comm. Math. Phys. \textbf{147}, 1-23 (1992)

\bibitem{Kup} Kuperschmidt, B. A.: Discrete Lax equations and differential-difference calculus. Ast\'erisque, \textbf{123} (1985)

\bibitem{Liuqp}
Liu, Q. P., Wang, Y. Q.: A note on a discrete Schrödinger spectral problem and associated evolution equations.
Acta Math. Sci. Ser. A (Chinese Ed.) \textbf{26}, 773--777 (2006)

\bibitem{Liusq} Liu, S.-Q.:
Lecture notes on bihamiltonian structures and their central invariants. In: B-model Gromov-Witten theory, pp. 573-625.
Trends Math., Birkhäuser, Cham (2018)

\bibitem{liu2013bihamiltonian}
Liu, S.-Q., Zhang, Y.: Bihamiltonian cohomologies and integrable hierarchies I: a special case.
Commun. Math. Phys. \textbf{324}, 897--935 (2013)

\bibitem{GFM}
Liu, S.-Q., Qu, H., Zhang, Y.: Generalized Frobenius manifolds with non-flat unity and integrable hierarchies.
eprint arXiv:\,2209.00483 (2022)

\bibitem{Crelles}
Liu, S.-Q., Yang, D., Zhang, Y., Zhou, C.: The Hodge-FVH correspondence.
J. Reine Angew. Math. \textbf{775}, 259-300 (2021)

\bibitem{JDG}
Liu, S.-Q., Yang, D., Zhang, Y., Zhou, C.: The loop equation for special cubic Hodge integrals.
J. Differential Geom. \textbf{121}, 341-368 (2022)

\bibitem{LWZ1} Liu, S.-Q., Wang, Z., Zhang, Y.: Linearization of Virasoro symmetries associated with semisimple Frobenius manifolds.
eprint arXiv:\,2109.01846 (2021)

\bibitem{FVH}
Liu, S.-Q., Zhang, Y., Zhou, C., Fractional Volterra hierarchy. Lett. Math. Phys. \textbf{108}, 261-283 (2018)

\bibitem{OP} Okounkov, A., Pandharipande, R.: Virasoro constraints for target curves. Invent. Math. \textbf{163}, 47-108 (2006)

\bibitem{Shabat} Shabat, A.:
Dressing chains and lattices. In: Proceedings of the workshop on nonlinearity, integrability and all that: twenty years after NEEDS, pp. 331--342.
World Sci. Publ., River Edge, NJ (2000)

\bibitem{Witten} Witten, E.: Two dimensional gravity and intersection theory on moduli space. Surveys in differential geometry  (Cambridge, MA, 1990), 243-310, Lehigh Univ., Bethlehem, PA (1991)

\bibitem{Z2002} Zhang, Y.: On the $CP^1$ topological sigma model and the Toda lattice hierarchy. J. Geom. Phys. \textbf{40}, 215-232 (2002)

\end{thebibliography}
\end{document}